\newtheorem{theorem}{Theorem}[section]
\newtheorem{lemma}[theorem]{Lemma}
\newtheorem{proposition}[theorem]{Proposition}
\newtheorem{claim}[theorem]{Claim}
\newtheorem{corollary}[theorem]{Corollary}
\newtheorem{definition}[theorem]{Definition}
\newtheorem{observation}[theorem]{Observation}
\newcommand{\cA}{{\mathcal{A}}}
\newcommand{\cE}{{\mathcal{E}}}
\newcommand{\cF}{{\mathcal{F}}}
\newcommand{\cI}{{\mathcal{I}}}
\newcommand{\G}{\mathbold{G}}
\newcommand{\bJ}{{\mathbold{J}}}
\newcommand{\bM}{{\mathbold{M}}}
\newcommand{\bS}{\mathbold{S}}
\newcommand{\bT}{\mathbold{T}}
\newcommand{\J}{{\mathbold{J}}}
\newcommand{\dc}{d_{\rm cond}}
\newcommand{\bInf}{{\mathbold \Inf}}
\newcommand{\bmu}{\mathbold{\mu}}
\newcommand{\bxi}{\mathbold{\xi}}
\newcommand{\bsigma}{{\mathbold{\sigma}}}
\newcommand{\btau}{{\mathbold{\tau}}}
\newcommand{\lnorm}{\left | \left |}
\newcommand{\rnorm}{\right| \right|}
\newcommand{\Inf}{\Gamma}
\newcommand{\mdownup}{\mu_{\downarrow\uparrow}}
\newcommand{\Exp}{\mathbb{E}}
\newcommand{\mExp}[1]{\left\langle#1\right\rangle}
\newcommand{\range}{ \mathrm{im}}
\date{\today}
\begin{document}

\title{Broadcasting with   Random  Matrices}
\author{Charilaos Efthymiou$^*$ and Kostas Zampetakis$^*$}

\thanks{ 
$^*$Research supported by EPSRC New Investigator Award, grant EP/V050842/1,  and 
Centre of Discrete Mathematics and Applications (DIMAP), University of Warwick, UK
}

\address{Charilaos Efthymiou, {\tt charilaos.efthymiou@warwick.ac.uk}, The University of Warwick, Coventry, CV4 7AL, UK.}
\address{Kostas Zampetakis, {\tt kostas.zampetakis@warwick.ac.uk}, The University of Warwick, Coventry, CV4 7AL, UK.}

\maketitle

\begin{abstract}
Motivated by the theory of {\em spin-glasses} in  physics, we study the so-called  {\em reconstruction problem} on the  tree, and on the sparse random 
graph $\G(n,d/n)$. Both  cases reduce naturally to analysing  broadcasting models, where 
each edge  has its own  broadcasting matrix, and this matrix  is drawn independently from a
predefined distribution.

We establish the {\em reconstruction threshold} for the cases where the broadcasting matrices give rise to 
symmetric, 2-spin Gibbs distributions. This  threshold  seems to be a natural extension of  
the well-known {\em Kesten-Stigum bound} that manifests in  the classic version of the reconstruction problem. 
Our results determine, as a special case,  the reconstruction threshold for the prominent {\em Edwards–Anderson model}  
of spin-glasses, on the tree. 

Also,  we extend our analysis to the setting of the Galton-Watson random tree, and the (sparse) random graph $\G(n,d/n)$, where
 we establish the corresponding  thresholds. 
Interestingly, for the Edwards–Anderson model on the random graph, we show that the {\em replica symmetry 
breaking}  phase transition, established by Guerra and and Toninelli in 
\cite{guerra2004high}, coincides with the reconstruction threshold. 

Compared to classical Gibbs distributions, spin-glasses have several unique features. In that respect, 
their study calls  for new ideas, e.g. we introduce novel estimators for the reconstruction problem. 
The main technical challenge in the analysis of such systems,
is the presence of (too) many levels of randomness, which we manage to circumvent by
utilising  recently proposed  tools coming from the analysis of Markov chains.  
\end{abstract}

\section{Introduction}

\noindent
Motivated by the theory of {\em spin-glasses} in  physics, we study the so-called 
{\em reconstruction problem}  with respect to the related distributions, on the tree, and on the sparse random graph $\G(n,d/n)$.

Spin-glasses are {\em disordered} magnetic materials that are studied by physicists (not necessarily the theoretical ones).  
It has been noted that even though they are a type of magnet, actually, ``they are not very good at being magnets''. 
Metallic spin-glasses are ``unremarkable conductors", and the insulating spin-glasses are ``fairly useless as 
practical insulators \ldots ", e.g. see   \cite{SteinNewmanSpinGlassBook}. 

However, the research on spin-glasses has provided   tools to analyse some exciting, 
and extremely challenging,  problems in mathematics, physics, but also {\em real world} ones. 
Through their study, we have garnered a deep understanding of the nature of complex systems. 
 A case in point is the pioneering work of Giorgio Parisi in `70s on the so-called \emph{Sherrington-Kirkpatrick} spin-glass, which 
 introduces the formulation of  the 
renowned {\em replica symmetry breaking} \cite{RSBParisi}. Parisi's ideas  were highly influential in physics community, and
later, in mathematics, and computer science.  The theory of replica symmetry breaking was among the groundbreaking 
ideas  which  got Parisi the Nobel Prize in Physics in 2021. 

Perhaps one of the most successful, and extensively studied spin-glass models, is the famous \emph{Edwards-Anderson} model (EA-model for short), introduced back in `70s by Sam Edwards and Philip Anderson in \cite{EAIsingIntroWork}. 
Few months after the work of Edwards and Anderson,  David  Sherrington
and Scott Kirkpatrick,  in \cite{SKModel}, introduced their own model of spin-glasses,   the well-known in computer science
literature,  {\em Sherrington-Kirkpatrick} model (SK-model for short). As it turns out, the SK-model corresponds to the {\em mean field}  
version of the EA-model.

Given a fixed graph $G=(V,E)$,  the Edwards-Anderson model  with {\em inverse} temperature $\beta>0$, 
is the {\em random} Gibbs  distribution 
$\bmu$ on the configuration space $\{\pm 1\}^V$ defined as follows: let  $\{\bJ_e: {e\in E}\}$ be independent identically distributed  
(i.i.d.) {\em standard Gaussians}. Then each configuration $\sigma\in\{\pm 1\}^V$ receives probability mass $\bmu(\sigma)$, defined by 
\begin{align}\label{eq:IntroGibbs}
\bmu(\sigma) &\propto 
\exp\left( \beta \cdot\sum_{\{u,w\}\in E} {\bf 1}\{\sigma(u)=\sigma(w)\} \cdot \bJ_{\{u,w\}}   \right) \enspace ,
\end{align}
where $\propto$ stands for ``proportional to". We usually refer to   $\{\bJ_e\}_{e\in E}$ as  the {\em coupling parameters}.  Let us comment here that, alternatively, the Gibbs distribution is defined by replacing the indicator ${{\bf 1}\{\sigma(u)=\sigma(w)\}}$ in~\eqref{eq:IntroGibbs}, with the product $\sigma(u)\sigma(w)$. However, the two formulations are equivalent, as a simple transformation converts one to the other (see Appendix \ref{sec:IndProdEquiv}).
We also note that there is a simpler version of the Edwards-Anderson model, in which coupling  parameters take independently~$\pm 1$ values, uniformly at random.

Apart from its mathematical elegance, and theoretical importance, the  Edwards-Anderson model, and the  related spin-glass distributions,
arise also in  applications such as   neural networks (e.g.  the so-called Hopfield model),  protein folding, and conformational dynamics. 
We refer the interested reader to \cite{SteinNewmanSpinGlassBook}, and references therein.

In this work, we  largely study the Edwards-Anderson model on  trees, and the (locally tree-like)   {\em random graph} $\G(n,d/n)$
with constant expected degree $d$.  This is the random graph on $n$ vertices, such that each edge appears independently with probability  $d/n$.  
Since the Edwards-Anderson model on $\G(n,d/n)$ shares essential features with random  {\em Constraint  Satisfaction Problems} ($r$-CSPs for short), it is not surprising that has been studied extensively in terms of phase transitions, in physics, e.g. \cite{franz2001exact,mezard2006reconstruction}, mathematics, e.g.  \cite{guerra2004high,CoEfJKKCMI}, but also in computer science, 
 e.g. for  sampling algorithms \cite{EfthICALP22,AlaouiMS22}.

In contrast  to the  standard  Gibbs distributions on trees, e.g. the Ising model, the Hard-core model, and the Potts model, 
the Edwards-Anderson model, despite  being the most  basic distribution for spin-glasses, has not been sufficiently studied.  
As a result, several fundamental  questions about it still remain open.
Here, we consider the tree {\em reconstruction problem} for the Edwards-Anderson model (and some natural extensions).

The {reconstruction problem} studies  the effect of the configuration at a vertex $r$, on that of the 
vertices at distance $h$ from $r$, as  $h\to\infty$.  Specifically, we want to distinguish the region of parameters where 
the effect is vanishing,  from that where the effect is non-vanishing. Typically, the two regions are specified in terms of a 
{\em sharp threshold}, i.e., we have an  abrupt transition from one region to the other as we vary the parameters of the model.  We usually call this 
phenomenon   {\em reconstruction threshold}, and it has been the subject of intense study, e.g. 
\cite{MolloyJACM, AchCojeBarriers,HiguchiReconIsing,BlehRuiZagrebNonReconIsing,SlyReconColouring,CoEftRSA15HCGeom}. 
In the context of r-CSPs,  the onset of reconstruction  has been linked to an abrupt  deterioration of the performance of algorithms
(both searching and counting), e.g. see \cite{AchCojeBarriers}. 

In this work, among other results, we establish precisely the reconstruction threshold for the Edwards-Anderson model 
 on the $\Delta$-ary tree, the Galton-Watson tree with general offspring distribution, and the random graph $\G(n,d/n)$. 
 Furthermore, as far as the Edwards-Anderson model on  $\G(n,d/n)$ is concerned, 
we combine our results with \cite{guerra2004high,CoEfJKKCMI}, to conclude that the reconstruction threshold coincides with 
the so-called {\em Replica Symmetry Breaking} phase transition.

Interestingly, for the $\Delta$-ary tree, we establish the reconstruction threshold, not only for the Edwards-Anderson model, but also
for the general version of the Gibbs distribution $\bmu$ defined in \eqref{eq:IntroGibbs}. That is, the coupling parameters
 are i.i.d. following a {\em general distribution}, not necessary the standard Normal.  

It turns out that  the corresponding reconstruction problems on the Galton-Watson tree with $\mathrm{Poisson}(d)$ offspring, 
and on the sparse random graph $\G(n,d/n)$, are not too different from each other.  Connections have been established 
 between these two Gibbs distributions,  e.g. see  
 \cite{BapCoOgEftSIlentPlanting, CokrzPerZdeSTOC17,CoEfJaaLocalWeakConv,CoKaMullCPC}. 
We relate the two reconstruction results, i.e., for the tree and the graph,  by exploiting the 
idea of planted-model (Teacher-Student model \cite{zdeborova2016statistical})  and the notion of  {\em contiguity} \cite{CoEfJKKCMI}. 
 In that respect, our basic analysis involves the complete  $\Delta$-tree, and the Galton-Watson tree, while,  subsequently, 
 we extend these results to  the random graph $\G(n,d/n)$. 

We study the reconstruction problem on trees by means of the broadcasting models. 
These are abstractions of  {\em noisy transmitted} information over the edges of the 
tree, i.e., the edges act as noisy channels.  To our knowledge, the study of the broadcasting models,
 and the closely 
related reconstruction problem, dates back to `60s with the seminal 
work of Kesten and Stigum  \cite{KSBound}.

Establishing the reconstruction threshold for the Edwards-Anderson model on the $\Delta$-ary tree, as well
as the generalisation of this distribution, turns out to be a challenging problem.  
The difficulty of these models stems from the manifestation of 
local {\em frustration phenomena}, i.e., mixed ferromagnetic and antiferromagnetic interaction in the same neighbourhood,
but also from the ``many levels of randomness" we need to deal with in their analysis.

To this end, we make an extensive use of various potentials in order to simplify the analysis. 
To establish non-reconstruction, we employ 
some newly introduced techniques in the area of Markov chains and Spectral Independence \cite{OptMCMCIS,VigodaSpectralInd}, 
that combine  potential functions to analyse tree recursions. To establish reconstruction, we use a carefully crafted potential 
as an estimator for the root configuration. We call this estimator {\em flip-majority vote}.


\subsection{Broadcasting, Reconstruction and the Kesten-Stigum  bound}
Consider the $\Delta$-ary tree $T=(V,E)$, of height $h>0$. Let $r$ be the root
of the tree $T$.
Broadcasting  on  $T$,  is a stochastic process which abstracts noisy transmission of information over the edges of
the tree.

There is a finite set of spins $\cA$, and an $\cA\times \cA$ stochastic 
matrix $M$, which we call the {\em broadcasting matrix}, or {\em transition matrix}. 
With the broadcasting we obtain a configuration $\bsigma\in \cA^V$ by working recursively  as follows: assume that the 
configuration at the root $r$ is obtained according to some predefined  distribution  over $\cA$.  If for the non-leaf
vertex $u$ in $T$ we have  $\bsigma(u)=i$, then for each vertex $w$, child of $u$,  we have  $\bsigma(w)=j$ with
probability $M(i,j)$, independently of the other children, i.e.,
\begin{equation*}
\Pr[\bsigma(w)=j\ |\ \bsigma(u)=i] = M(i,j) \enspace .
\end{equation*}
Here we  assume that $\bsigma(r)$ is distributed uniformly at random in $\cA$.

A natural problem to study in this setting is the so-called {\em reconstruction problem}. 
Suppose that $\mu_{h}$ is the marginal distribution of the configuration  of  the vertices at distance $h$ from the root. 
The reconstruction problem amounts to studying the influence of the configuration at the root of the tree 
to the marginal $\mu_h$.
Specifically, we want to compare the two distributions $\mu_{h}(\cdot \ |\ \bsigma(r)=i )$, and $\mu_{h}(\cdot \ |\ \bsigma(r)=j)$
for different $i,j\in \cA$,  i.e., $\mu_h$ conditional on the  configuration at the root being $i$ and $j$, respectively.
The comparison is by means of  the  total variation distance, i.e., 
\begin{equation*}
\lnorm \mu_{h}(\cdot \ |\ \bsigma(r)=i ) -\mu_{h}(\cdot \ |\ \bsigma(r)=j )  \rnorm_{\rm TV} \enspace.
\end{equation*}
Typically, we focus on the behaviour of the quantity above,  as $h$ grows. 
\begin{definition}\label{def:DetermReconstruction}
We say that the distribution $\mu$ exhibits {\em reconstruction} if there exist spins $i,j\in \cA$  such that
\begin{equation*}
\limsup_{h\to\infty} \lnorm \mu_{h}(\cdot \ |\ \bsigma(r)=i ) -\mu_{h}(\cdot \ |\ \bsigma(r)=j )  \rnorm_{\rm TV}>0 \enspace.
\end{equation*}
On the other hand, if for all $i,j\in \cA$ the above limit is zero, then we have  {\em non-reconstruction}. 
\end{definition}

The broadcasting process we describe above gives rise  to  well-known Gibbs distributions  on $T$ such as the 
{\em Ising model}, the {\em Potts model} etc.  In terms of the Gibbs distributions on the tree, the reconstruction problem can be formulated as to 
 whether the  free-measure on the tree is {\em extremal}, or not.  The extremality  here is considered with respect to
whether  the Gibbs distribution  can be  expressed as  a convex  combination of two, or more measures, e.g.
see \cite{Georg88}.   
It is interesting to compare the extremality condition with various spatial mixing conditions of the Gibbs distribution. 
Perhaps the most interesting case is to compare it with  the Gibbs tree {\em uniqueness}. Then, it is standard to
show that the extremality is a {\em weaker}  condition than uniqueness.

The reconstruction problem has been studied since 1960s. Perhaps the most general result in the area 
is the so-called  {\em Kesten-Stigum bound} \cite{KSBound}, or KS-bound for short.  
Let 
$\Delta_{\rm KS}=\Delta_{\rm KS}(M)$ be such that
\begin{equation}\label{def:DetermDeltaKS}
\Delta_{\rm KS}=\lambda_2^{-2}(M) \enspace, 
\end{equation} 
where $\lambda_2(M)$ is the second largest, in magnitude, eigenvalue of the transition matrix $M$.  
The result of \cite{KSBound} implies that  if  $\Delta > \Delta_{\rm KS}$, then we have reconstruction. 

In light of the above, a natural question is whether  the condition   $\Delta <\Delta_{\rm KS}$  implies  that we have non-reconstruction.  
In general, the answer to this question is no, 
e.g. see  \cite{BhatSlyTetHCRecon,SlyReconColouring}. 
However, for several important distributions, including the Ising model, the KS-bound is tight,  in the sense
that the condition  $\Delta <\Delta_{\rm KS}$ indeed implies non-reconstruction,  see 
\cite{BlehRuiZagrebNonReconIsing,EvKenPerSchulConConIsingTree,HiguchiReconIsing}.


\newcommand{\StochMat}{\bS}

\subsection{Broadcasting with random matrices}
Here, we consider the natural problem of  broadcasting on a tree, where  the transition matrix is {\em random}. 
In this setting, as before, we consider the $\Delta$-ary tree $T=(V,E)$, of height $h>0$, rooted at $r$. Also, we have 
a finite set of spins $\cA$.  
Rather than using the same matrix for every edge of the tree, each edge has its own matrix, which is an independent 
sample from a predefined distribution $\psi$. 

More formally, every  $\cA\times \cA$ stochastic matrix can be viewed as a point in the ${|\cA|^2}$ Euclidean space.
We endow  the set of all $\cA\times \cA$ stochastic matrices with the  $\sigma$-algebra induced by the Borel algebra.
Then,  $\psi$ is a distribution over the set of these matrices.

Once we have a matrix for each edge of $T$, the broadcasting proceeds with  the same rules as in the deterministic 
case. If  for the non-leaf vertex $u$ in $T$ we have  $\bsigma(u)=i$, then the vertex $w$, child of $u$,  gets $\bsigma(w)=j$ with
probability $\bM_e(i,j)$, independently of the other children of $u$, i.e.,
\begin{equation*}\nonumber
\Pr[\bsigma(w)=j\ |\ \bsigma(u)=i] = \bM_{e}(i,j) \enspace,
\end{equation*}
where $e=\{u,w\}$.

The above setting gives rise to a {\em random} probability measure on the set of configurations $\cA^V$
which we   denote as $\bmu=\bmu_{T, \psi}$. Hence,  the configuration $\bsigma\in \cA^V$ we get from 
the broadcasting,  consists  of {\em two-levels of randomness}. The first level is due to the fact that the measure $\bmu$ 
is  induced by the random instances of  the broadcasting matrices $\{\bM_e\}_{e\in E}$.  Once these matrices 
have been fixed, the second level of randomness emerges from the random choices of the broadcasting process. 
The above formulation gives rise to well-studied Gibbs distributions, such as the Edwards–Anderson model of spin-glasses,
by choosing appropriately the distribution  $\psi$.

In this new setting,  we study the reconstruction problem. Here, the definition of reconstruction differs slightly from Definition \ref{def:DetermReconstruction} above.  Denote with $\bmu_{h}$ the marginal of $\bmu$ on the vertices at distance $h$ from the root of the tree $T$. Then, the reconstruction problem is defined as follows:

\begin{definition}\label{def:RandomReconstruction}
For a distribution  $\psi$  on $\cA\times \cA$ stochastic matrices , we say that the random measure
$\bmu=\bmu_{T,\psi}$  exhibits {\em reconstruction} if there exist spins $i,j\in \cA$  such that
\begin{equation*}\nonumber
\limsup_{h\to\infty} \mathbb{E} \left [ \lnorm \bmu_{h}(\cdot \ |\ \bsigma(r)=i ) -\bmu_{h}(\cdot \ |\ \bsigma(r)=j )  \rnorm_{\rm TV} \right]>0 \enspace,
\end{equation*}
where the expectation is with respect to the randomness of $\bmu$.

On the other hand, if for all $i,j\in \cA$ the above limit is zero, then we have  {\em non-reconstruction}. 
\end{definition}
We consider the reconstruction problem in terms of the KS-bound, i.e., we examine whether it is tight, or not. 
Before addressing this question, we need to specify what the parameter $\Delta_{\rm KS}$ might be in this setting.

It turns out that  a natural candidate for $\Delta_{\rm KS}$ can be defined as follows:
Let $\bM$ be a matrix sampled from the distribution $\psi$, and define 
\begin{equation}\label{def:MatrixXi}
\Xi=\mathbb{E}\left [\bM\otimes\bM\right] \enspace,
\end{equation} i.e., 
the matrix $\Xi$ is the expectation of the tensor product of the matrix $\bM$ with itself.  Let 
${\bf 1} \in \mathbb{R}^{\cA}$ denote the vector whose entries are all equal to one.  
Also, write 
\begin{equation*}
\cE = \left\{ z\in \mathbb{R}^{\cA}\otimes \mathbb{R}^{\cA}: \forall y\in \mathbb{R}^{\cA}\  
 \langle z, {\bf 1}\otimes y\rangle=\langle z, y\otimes{\bf 1}\rangle=0\right\}\enspace,
\end{equation*}
 where  $\langle \cdot, \cdot \rangle$ is the standard
inner product operation.    
Then, we define $\Delta_{\rm KS}(\psi)$ to be such that
\begin{equation}\label{def:GlassDeltaKS}
\Delta_{\rm KS}(\psi) =
\left( \max_{x\in \cE: \lnorm x \rnorm=1}\langle \Xi x, x\rangle\right)^{-1} \enspace.
\end{equation}
The above quantity, $\Delta_{\rm KS}$, arises in the study of phases transitions in  random CSPs   \cite{CoEfJKKCMI}. 
Specifically,  it signifies an  {\em upper bound}  on the  density of the so-called    {\em Replica Symmetric}  phase, of symmetric
Gibbs distributions.  The  value  $\Delta_{\rm KS}$ is derived in \cite{CoEfJKKCMI} by means  of a stability analysis of  
the so-called  {\em free-energy}  functional. 
Note that the above definition for $\Delta_{\rm KS}(\psi)$  applies to  any set of spins $\cA$, and any distribution $\psi$ on 
$\cA\times \cA$ matrices.

Here, we prove that the above is indeed the analogue of KS-bound for {\em symmetric},  2-spin distributions $\bmu$ (including the EA model). That is, for any distribution $\psi$ over the broadcasting matrices whose support is comprised 
of symmetric $2\times 2$ matrices, we  prove that the $\Delta$-ary tree $T$ exhibits reconstruction 
when $\Delta>\Delta_{\rm KS}(\psi)$, while we have non-reconstruction when $\Delta<\Delta_{\rm KS}(\psi)$. 

Furthermore, we go beyond the  basic case  of  the $\Delta$-ary tree.  Firstly,   we extend our results to the cases
where the underlying graph is the {\em Galton-Watson} random tree with general offspring distribution. Secondly, 
we exploit the notion of contiguity 
of measures to derive non-reconstruction results for the Edwards-Anderson model on the random graph $\G(n,d/n)$.

\section{Results}

We start the presentation of our results on the 2-spin, symmetric distributions, by considering the $\Delta$-ary tree. 
Specifically, for integers $\Delta>0$ and $h>0$, let $T=(V,E)$ be the $\Delta$-ary tree of height $h$, rooted
at vertex $r$.   We  let $\cA=\{\pm 1\}$ be the set of spins.

Suppose that we have a broadcasting process on $T$, while
assume that each edge of the tree is equipped with its own broadcasting  matrix, each matrix drawn {\em independently} from the distribution induced by the following experiment:
We have  two parameters, a real number $\beta> 0$, and a distribution $\phi$ on the real numbers $\mathbb{R}$, i.e., we have the probability space $(\mathbb{R}, \cF, \phi)$ where
$\cF$ is the $\sigma$-algebra induced by the Borel algebra. We generate a matrix $\bM$ following the two steps below:
\begin{description}
\item[Step 1]  Draw $\bJ\in \mathbb{R}$ from the distribution $\phi$.
\item[Step 2]  Generate the $\cA\times \cA$ matrix $\bM$ such that
\begin{align}\label{def:BdMatixM}
\bM&= \frac{1}{\exp( \beta \bJ)+1}\left [
\begin{array}{cc}
\exp( \beta \bJ)  & 1  \\ \vspace{-.3cm} \\
1 &\exp( \beta \bJ) 
\end{array}
\right ] \enspace. 
\end{align}
\end{description}
Note that our broadcasting matrices are always {\em symmetric}.

The above broadcasting process   gives rise to  configurations  in $\cA^V$ following the Gibbs distribution $\bmu_{\beta, \phi}$  specified as follows:
Let $\{\bJ_e\}_{e\in E}$ be independent, identically distributed (i.i.d.) random variables such that each 
one of them is  distributed as in $\phi$ (this is the same distribution used to generate matrix $\bM$).  
Each $\sigma \in \cA^V$   is assigned probability mass $\bmu_{\beta,\phi}(\sigma)$ defined by 
\begin{equation}\label{def:SpinGlass}
\bmu_{\beta,\phi}(\sigma) \propto \textstyle \exp\left( \beta\sum_{\{w,u\}\in E} {\bf 1}\{\sigma(u)=\sigma(w)\} \cdot \bJ_{\{u,w\}}   \right) \enspace ,
\end{equation}
where $\propto$ stands for ``proportional to".

At this point, it is immediate that by choosing $\phi$ to be the {\em standard  Gaussian} distribution, we retrieve the
 Edwards-Anderson model in \eqref{eq:IntroGibbs}. Note however, that \eqref{def:SpinGlass} above generates a whole family of ``spin-glass" 
distributions with the EA-model  being  a special case.

The definition of the distribution of the broadcasting matrix in \eqref{def:BdMatixM} allows us to derive  an explicit formula for  the quantity $\Delta_{\rm KS}$  in  \eqref{def:GlassDeltaKS}. Specifically,  for  $\bJ$ distributed according to $\phi$, it is not hard to prove (see Appendix \ref{sec:KSMatrix}) that
\begin{equation}\label{eq:DeltaKSBPhi}
\Delta_{\rm KS}(\beta,\phi) = 
\textstyle \left( \mathbb{E}\left [   \left (\frac{1-\exp(\beta \bJ)}{1+\exp(\beta \bJ)}  \right)^2 \right] \right)^{-1}\enspace, 
\end{equation}
where the expectation is with respect  to the random variable $\bJ$.
In light of the above, we prove the following result for the general Gibbs distribution.

\begin{theorem}\label{thrm:GeneralDeltary}
 For  a real number $\beta> 0$, and a distribution $\phi$ on the real numbers $\mathbb{R}$
let $\Delta_{\rm KS}=\Delta_{\rm KS}(\beta,\phi)$ be defined as in \eqref{eq:DeltaKSBPhi}.

For any integer $\Delta>\Delta_{\rm KS}$, the Gibbs distribution $\bmu_{\beta, \phi}$, defined as in \eqref{def:SpinGlass},
on the $\Delta$-ary tree
exhibits reconstruction. On the other hand,   if  $\Delta < \Delta_{\rm KS}$ the distribution $\bmu_{\beta,\phi}$ 
exhibits non-reconstruction. 
\end{theorem}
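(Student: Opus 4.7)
The plan is to prove both directions of the threshold separately, using the standard tree recursion for the posterior at the root. Throughout, let $\theta_e=\tanh(\beta\bJ_e/2)$ denote the nontrivial eigenvalue of the symmetric broadcasting matrix $\bM_e$ from \eqref{def:BdMatixM}; by \eqref{eq:DeltaKSBPhi} one has $\mathbb{E}[\theta_e^2]=1/\Delta_{\rm KS}$, and the $\theta_e$'s are i.i.d.\ across edges. For a vertex $v$ at depth $h-k$, write $X_v$ for the posterior log-likelihood ratio of $\bsigma(v)$ given the leaves of the subtree rooted at $v$, and $m_v=\tanh(X_v/2)=\mathbb{E}[\bsigma(v)\mid\text{leaves below }v]$. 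The symmetry of $\bM_e$ produces the clean recursion $X_v=\sum_{u}2\,\mathrm{arctanh}(\theta_{e}\,m_u)$ over the $\Delta$ children $u$ of $v$, with linearization $X_v\approx\sum_u\theta_{e}X_u$ near $0$.

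For the reconstruction direction $\Delta>\Delta_{\rm KS}$, I would use the weighted flip-magnetization estimator implicit in the introduction. For each leaf $\ell$ at depth $h$, set $\pi_\ell=\prod_{e\in\text{path}(r,\ell)}\theta_e$ and consider $Y_h=\sum_\ell\pi_\ell\,\bsigma(\ell)$. The broadcast identity $\mathbb{E}[\bsigma(\ell)\mid\bsigma(r),\text{matrices}]=\bsigma(r)\,\pi_\ell$ gives $\mathbb{E}[Y_h\mid\bsigma(r)=\pm 1,\text{matrices}]=\pm\sum_\ell\pi_\ell^2$, and averaging over the $\theta_e$'s yields $\pm(\Delta/\Delta_{\rm KS})^h$. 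For the second moment I would partition pairs of leaves by the depth of their most recent common ancestor and exploit the independence of the $\theta_e$'s across disjoint edges; a telescoping geometric sum then gives $\mathbb{E}[Y_h^2\mid\pm 1]=\Theta((\Delta/\Delta_{\rm KS})^{2h})$ provided $\Delta>\Delta_{\rm KS}$. A Paley--Zygmund argument shows $Y_h$ separates the two conditional distributions of the leaves in TV with positive probability over the matrices, yielding $\liminf_h\mathbb{E}[\lnorm\bmu_h(\cdot\mid+1)-\bmu_h(\cdot\mid-1)\rnorm_{\mathrm{TV}}]>0$ as required by Definition~\ref{def:RandomReconstruction}; the ``flip-majority vote'' of the introduction is simply $\mathrm{sign}(Y_h)$.

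For non-reconstruction at $\Delta<\Delta_{\rm KS}$ the goal is $\mathbb{E}[\lnorm\bmu_h(\cdot\mid+1)-\bmu_h(\cdot\mid-1)\rnorm_{\mathrm{TV}}]\to 0$, which, via the standard two-state identity $\mathbb{E}[\bsigma(r)\,m_r]=\mathbb{E}[m_r^2]$, is controlled up to constants by $\mathbb{E}[m_r^2]$. The single-edge step $m\mapsto \theta_e m$ contracts the magnetization by exactly $\theta_e$, producing a factor $\mathbb{E}[\theta^2]=1/\Delta_{\rm KS}$ in the second moment. The crucial step is the $\Delta$-fold combination at a vertex, where I aim to establish
$$\mathbb{E}[m_v^2]\;\le\;\frac{\Delta}{\Delta_{\rm KS}}\cdot\mathbb{E}[m_u^2],$$
matching the bound suggested by the linearization $X_v\approx\sum_u\theta_{e}X_u$. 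Iterating $h$ times then gives $\mathbb{E}[m_r^2]\le(\Delta/\Delta_{\rm KS})^h\to 0$. To globalize the linearized inequality I would use a carefully chosen potential $\Phi$ adapted from the spectral-independence/Markov-chain framework referenced in the introduction, tuned so that $\Phi(m_v)$ contracts by $\Delta/\Delta_{\rm KS}$ per level when averaged jointly over both levels of randomness.

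The main obstacle is the non-reconstruction direction, specifically the $\Delta$-fold combination. The cross terms $\mathbb{E}[m_{u_i}m_{u_j}]$ between children in disjoint subtrees do not vanish unconditionally---they are linked through $\bsigma(v)$ and are of order $(\mathbb{E}[\theta])^2(\mathbb{E}[m_u^2])^2$---so a naive variance bound misses a genuine contribution that must either be tracked in a coupled induction on several moments, or absorbed by a tailored $\Phi$. A related subtlety is that the random $\theta_e$ can take both signs, so any usable potential has to exploit cancellations between subtrees rather than bound them in absolute value, and the crude bound $|2\,\mathrm{arctanh}(\theta m)|\le |\theta m|$ goes the wrong way near $m=\pm 1$. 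These are precisely the obstructions the spectral-independence potentials were designed to handle, so I would expect the heart of the proof to consist of identifying such a $\Phi$ and verifying the single-step contraction by a direct computation that integrates simultaneously over the broadcast and over $\phi$.
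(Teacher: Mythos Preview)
Your reconstruction argument is essentially the paper's: the estimator $Y_h=\sum_\ell\pi_\ell\,\bsigma(\ell)$ with $\pi_\ell=\prod_e\theta_e$ is exactly the flipped majority vote $F_h$ of Definition~\ref{def:DefOfFlipMaj} (since $\theta_e=\widehat{\bInf_e}$), and your first/second moment computation by depth of the most recent common ancestor is Proposition~\ref{prop:RatioOfFh}. The paper packages the final step as a Cauchy--Schwarz inequality (Theorem~\ref{thm:TVtoSecond}) rather than Paley--Zygmund, but the content is the same and yields the explicit lower bound $\delta/(1+\delta)$ of Lemma~\ref{lem:lowerBoundDtree}.

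For non-reconstruction you have correctly identified the obstacle, and it is a real gap in your proposal: you do not actually produce the potential $\Phi$, and the cross-term issue you describe is genuine. The paper does \emph{not} resolve it by finding a contraction for $\mathbb{E}[m_r^2]$ or $\mathbb{E}[\Phi(m_r)]$ through the $\Delta$-fold combination at all. Instead it bypasses the recursion on the posterior entirely via the \emph{down--up coupling} of \cite{BhatVeraVigodaWeitz}: one first bounds
\[
\lnorm \mu_h^+ - \mu_h^- \rnorm_{\rm TV} \;\le\; \sqrt{\lnorm \mdownup^+ - \mdownup^- \rnorm_{\rm TV}}
\]
(Lemma~\ref{lem:UpDwnUpperBound}), where $\mdownup^s$ is the law of the root obtained by broadcasting down from $\bsigma(r)=s$ to the leaves and then resampling the root given those leaves. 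The right-hand side is then bounded pathwise (for \emph{fixed} couplings $\{J_e\}$) by $\sum_{v\in\Lambda}\prod_{e\in\mathrm{path}(r,v)}\Gamma_e^2$ (Proposition~\ref{prop:UpDownDistance}), using two ingredients: a maximal ``down'' coupling showing a disagreement reaches leaf $v$ with probability at most $\prod_e\Gamma_e$, and the worst-case leaf-to-root influence bound $\max_\eta\lnorm\mu_r^{\Lambda,\eta}-\mu_r^{\Lambda,\eta_v}\rnorm_{\rm TV}\le\prod_e\Gamma_e$ from the spectral-independence gradient estimate (Theorem~\ref{thrm:PairwiseInfluenceBound}). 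The product form over independent edges then makes the expectation over $\phi$ trivial, giving $\mathbb{E}\big[\lnorm\mu_h^+-\mu_h^-\rnorm_{\rm TV}^2\big]\le(\Delta/\Delta_{\rm KS})^h$. The point is that the $\log$-potential from spectral independence is used only to obtain the single-edge, worst-case influence $\Gamma_e=|\theta_e|$; it is never asked to control the $\Delta$-fold merge of posteriors, so the cross terms you worry about simply do not appear.
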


The proof of Theorem \ref{thrm:GeneralDeltary} appears in Section \ref{sec:thrm:GeneralDeltary}.
Let us state the implications of Theorem \ref{thrm:GeneralDeltary} for the 
Edwards-Anderson model on the $\Delta$-ary tree. 

\begin{corollary}\label{cor:EAIsingDeltaAry}
For $\beta>0$ and the standard Gaussian $\bJ$,  let 
\begin{equation*}
\Delta_{\rm EA}(\beta) =\textstyle \left( \mathbb{E}\left [   \left (\frac{1-\exp(\beta \bJ)}{1+\exp(\beta \bJ)}  \right)^2 \right] \right)^{-1}\enspace, 
\end{equation*}
where the expectation is with respect to  $\bJ$.  

For any integer $\Delta>\Delta_{\rm EA}(\beta)$, the distribution $\bmu_{\beta}$, the Edwards-Anderson model with inverse temperature 
$\beta$ on the $\Delta$-ary tree, 
exhibits reconstruction.
On the other hand,   if  $\Delta < \Delta_{\rm EA}(\beta)$ the distribution $\bmu_{\beta}$ 
exhibits non-reconstruction. 
\end{corollary}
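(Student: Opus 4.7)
The plan is that Corollary~\ref{cor:EAIsingDeltaAry} should follow as an immediate specialization of Theorem~\ref{thrm:GeneralDeltary}, with no genuine additional work required. The corollary is precisely the case in which the coupling distribution $\phi$ is the standard Gaussian on $\mathbb{R}$, so the task reduces to matching the two definitions and then invoking the theorem.

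First, I would verify that the Edwards-Anderson Gibbs distribution $\bmu_\beta$ from \eqref{eq:IntroGibbs} coincides with the random measure $\bmu_{\beta,\phi}$ from \eqref{def:SpinGlass} when $\phi$ is taken to be the standard normal. This is immediate from comparing the two formulas: both assign probability proportional to $\exp\bigl(\beta \sum_{\{u,w\}\in E} \mathbf{1}\{\sigma(u)=\sigma(w)\}\cdot \bJ_{\{u,w\}}\bigr)$ with i.i.d.\ standard Gaussian couplings $\{\bJ_e\}_{e\in E}$. Consequently, the broadcasting matrices associated edgewise to $\bmu_\beta$ are exactly those in \eqref{def:BdMatixM}, with $\bJ$ a standard Gaussian, as required for the broadcasting setup behind Theorem~\ref{thrm:GeneralDeltary}.

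Next, I would specialize formula \eqref{eq:DeltaKSBPhi} to the standard Gaussian $\bJ$. Direct substitution yields
\begin{equation*}
\Delta_{\rm KS}(\beta,\phi) \;=\; \left( \mathbb{E}\left[\left(\frac{1-\exp(\beta \bJ)}{1+\exp(\beta \bJ)}\right)^2\right] \right)^{-1},
\end{equation*}
which is exactly the quantity $\Delta_{\rm EA}(\beta)$ appearing in the corollary. Invoking Theorem~\ref{thrm:GeneralDeltary} then gives reconstruction for $\Delta > \Delta_{\rm EA}(\beta)$ and non-reconstruction for $\Delta < \Delta_{\rm EA}(\beta)$, which is the statement to be proved. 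There is no substantive obstacle: all analytic content lives in Theorem~\ref{thrm:GeneralDeltary} and in the derivation of \eqref{eq:DeltaKSBPhi} (carried out in Appendix~\ref{sec:KSMatrix}); the corollary itself is pure bookkeeping on top of these.
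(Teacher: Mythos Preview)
Your proposal is correct and matches the paper's treatment exactly: the corollary is stated immediately after Theorem~\ref{thrm:GeneralDeltary} as a direct specialization to the case where $\phi$ is the standard Gaussian, with no separate proof given. There is nothing to add.
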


\subsection{The case of the Galton-Watson tree}

As a further step, we study the reconstruction problem on the Galton-Watson tree. Even though
this is a very interesting problem on its own, we make use of our results for the Galton-Watson tree to derive subsequent results for $\G(n,d/n)$, see Section \ref{sec:ReconCSPs}.

Let  $\zeta:\mathbb{Z}_{\geq 0}\to [0,1]$ be a distribution over the non-negative integers. 
Then, the rooted tree $\bT$ is a Galton-Watson tree with offspring distribution $\zeta$, if
the number of children for each vertex in  $\bT$ is distributed according to $\zeta$, {\em independently} from the other vertices.

Note that  broadcasting with random matrices over the Galton-Watson tree $\bT$, gives rise to configurations that 
consist of {\em three} levels of randomness. 
One of the challenges we circumvent with our analysis, is to  disentangle all of three levels of randomness,
and make clear the contribution of each one of them.  
Before getting there, we need to clarify what we mean 
by (non-)reconstruction in the current  setting. 

\begin{definition}\label{def:GWRandomRecon}
Consider the  distributions $\phi$ over $\mathbb{R}$ and   $\zeta$ over $\mathbb{Z}_{\geq 0}$, and  a real number $\beta \ge 0$.
Let   the Galton-Watson tree $\bT$ with offspring distribution $\zeta$,  while let  the measure $\bmu=\bmu_{\beta,\phi}$ be defined as in
\eqref{def:SpinGlass},   on the tree $\bT$. 
We say that  $\bmu$  exhibits {\em reconstruction} if 
\begin{equation*}
\limsup_{h\to\infty}  \mathbb{E}_{\bT } \left[  \ 
\mathbb{E}_{\bmu } \left [  \lnorm \bmu_{h}(\cdot \ |\ \bsigma(r)=+1) -\bmu_{h}(\cdot \ |\ \bsigma(r)=-1)  \rnorm_{\rm TV} \ | \ \bT \right] 
\ \right]>0 \enspace. 
\end{equation*}
On the other hand, if  the above limit is zero, then we have  {\em non-reconstruction}. 
\end{definition}
For the above, recall that $\bmu_{h}$ is the marginal of $\bmu$ on the set of vertices at distance $h$ from the root. 
Note that if $\bT$ has no vertex at level $h$, then the total variation distance above is, degenerately, equal to zero. 
We use the double expectation in Definition \ref{def:GWRandomRecon} for the sake of clarity: 
we can just replace it by a single expectation with respect to both the random tree $\bT$, and the random measure $\bmu$. 

As far as the reconstruction problem on the Galton-Watson trees is concerned, we have the following result, which we prove in Section \ref{sec:thrm:GeneralGWtree}.

\begin{theorem}\label{thrm:GeneralGWtree}
 For any real numbers $d>0, \beta>0$, for any distribution $\phi$ on  $\mathbb{R}$,  for any distribution 
 $\zeta$ on $\mathbb{Z}_{\geq 0}$  with expectation $d$ and {\em bounded second moment}, let $\bT$ be the Galton-Watson tree with offspring distribution $\zeta$.  Let also $\bmu_{\beta,\phi}$ be the Gibbs distribution defined as in \eqref{def:SpinGlass},  on the tree $\bT$. Finally, let $\Delta_{\rm KS}=\Delta_{\rm KS}(\beta,\phi)$ be defined as in \eqref{eq:DeltaKSBPhi}.

 The distribution $\bmu_{\beta,\phi}$  exhibits reconstruction if $d>\Delta_{\rm KS}$. On the other hand,   
 if  $d < \Delta_{\rm KS}$,  the distribution $\bmu_{\beta,\phi}$ exhibits non-reconstruction. 
\end{theorem}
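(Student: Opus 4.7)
My plan is to extend the tree-recursion machinery developed for Theorem \ref{thrm:GeneralDeltary} (the $\Delta$-ary case) to the Galton-Watson setting. The central observation is that, conditional on the realisation of $\bT$, broadcasting is exactly the ``fixed tree'' process already analysed; the extra randomness from the branching is then absorbed by an outer expectation over $\zeta$, provided we have enough moment control. The bounded-second-moment hypothesis enters in two related places: (i) to control the nonlinear correction terms in the potential recursion used for non-reconstruction, and (ii) to control the second moment of the level-$h$ population $|L_h|$ used in the variance calculation for reconstruction.

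For the non-reconstruction direction ($d<\Delta_{\rm KS}$), I would take the potential $\Phi_h$ used in the proof of Theorem \ref{thrm:GeneralDeltary} (an $L^2$-type distance between $\bmu_h(\cdot\mid \bsigma(r)=+1)$ and $\bmu_h(\cdot\mid \bsigma(r)=-1)$, in expectation over the broadcasting matrices), and derive a one-step recursion of the form
\begin{equation*}
\mathbb{E}[\Phi_h]\,\leq\, \mathbb{E}[\bD]\cdot \big(\Delta_{\rm KS}\big)^{-1}\cdot \mathbb{E}[\Phi_{h-1}] + R_h\,=\,\frac{d}{\Delta_{\rm KS}}\cdot\mathbb{E}[\Phi_{h-1}]+R_h,
\end{equation*}
where $\bD\sim\zeta$ is the root's offspring count and $R_h$ is a higher-order remainder. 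The linear factor $\mathbb{E}[\bD]=d$ arises because the subtrees hanging from the children of $r$ are i.i.d.\ conditional on $\bD$, so a Wald-type identity applies after linearising around $\Phi_{h-1}=0$. The bounded second moment of $\zeta$ gives $R_h=O(\mathbb{E}[\bD^2]\cdot\mathbb{E}[\Phi_{h-1}^2])$, which is lower-order once $\Phi_{h-1}$ is small. Iterating yields $\mathbb{E}[\Phi_h]\to 0$ exponentially, whence non-reconstruction.

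For the reconstruction direction ($d>\Delta_{\rm KS}$), I would use the flip-majority-vote estimator $\hat{\bsigma}_h$ introduced for Theorem \ref{thrm:GeneralDeltary}: a weighted sum over level-$h$ vertices of the leaf spin times a ``flip correction'' determined by the couplings $\bJ_e$ along the path back to the root. On the $\Delta$-ary tree with $\Delta>\Delta_{\rm KS}$, the analysis of Theorem \ref{thrm:GeneralDeltary} yields $\mathbb{E}[\hat{\bsigma}_h\mid \bsigma(r)=\pm 1]\asymp \pm(\Delta/\Delta_{\rm KS})^h$ and $\mathrm{Var}(\hat{\bsigma}_h)\asymp (\Delta/\Delta_{\rm KS})^h$, giving a nontrivial test. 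On $\bT$, I would first condition on the tree and apply the same calculation to obtain, after averaging over $\bT$, a mean of order $(d/\Delta_{\rm KS})^h$ and a variance that splits into a ``diagonal'' piece of order $\mathbb{E}[|L_h|]=d^h$ and an ``off-diagonal'' piece controlled by $\mathbb{E}[|L_h|^2]\leq C\,d^{2h}$; here the bounded second moment of $\zeta$ is precisely what yields the latter bound, via a standard Galton-Watson second-moment recursion. The signal-to-noise ratio then stays bounded away from zero, producing a positive $\limsup$ in Definition \ref{def:GWRandomRecon}.

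The main obstacle, I expect, lies in the reconstruction variance: the cross-term between two level-$h$ leaves depends on the depth of their lowest common ancestor in the \emph{random} tree, so the three sources of randomness (branching, couplings, spins) must be disentangled in the right order. Conditioning first on $\bT$ reduces the cross-term to a product of conditional covariances that were essentially computed for Theorem \ref{thrm:GeneralDeltary}; the outer expectation over $\bT$ then becomes a sum over pair depths that converges at the right rate precisely because $\mathbb{E}[\bD^2]<\infty$. Handling this carefully, while keeping the flip-correction weights unbiased under the joint law of the couplings, is where I expect the proof to require the most delicate bookkeeping.
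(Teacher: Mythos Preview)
Your reconstruction argument is essentially the paper's: flip-majority vote, second-moment test, condition on $\bT$ first, then decompose the cross-term by the depth of the lowest common ancestor, with $\mathbb{E}[\zeta^2]<\infty$ controlling the pair count. That part is fine.

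For non-reconstruction, however, you are taking an unnecessarily roundabout route and making an incorrect claim along the way. The paper does \emph{not} set up a one-step recursion $\mathbb{E}[\Phi_h]\leq (d/\Delta_{\rm KS})\mathbb{E}[\Phi_{h-1}]+R_h$ with a second-order remainder. Instead it invokes Theorem~\ref{thrm:ReconBoundNonUniform}, which already holds for an \emph{arbitrary} tree with \emph{arbitrary} fixed couplings: for every realisation of $\bT$ and of $\{\bJ_e\}$,
\[
\big\|\bmu_h^{+}-\bmu_h^{-}\big\|_{\rm TV}^{2}\ \leq\ \sum_{v\in\Lambda}\ \prod_{e\in\mathrm{path}(r,v)}\mathbold{\Gamma}_e^{2}.
\]
Because the right-hand side has exact product structure, taking expectations over the couplings and then over $\bT$ is a one-line computation by independence:
\[
\mathbb{E}_{\bT,\bmu}\big[\|\bmu_h^{+}-\bmu_h^{-}\|_{\rm TV}^{2}\big]\ \leq\ \mathbb{E}_{\bT}\big[|\Lambda|\big]\cdot \Delta_{\rm KS}^{-h}\ =\ (d/\Delta_{\rm KS})^{h}.
\]
No linearisation, no Wald-type step, and no remainder term. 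In particular, the bounded-second-moment hypothesis on $\zeta$ is \emph{not used at all} for non-reconstruction; only $\mathbb{E}[|\Lambda|]=d^h$ enters. Your statement that the second moment ``enters in two related places'' is therefore wrong for place~(i). Your proposed recursion with $R_h=O(\mathbb{E}[\bD^2]\,\mathbb{E}[\Phi_{h-1}^2])$ might be made to work, but it demands an a~priori smallness argument for $\Phi_{h-1}$ to absorb the remainder, which is exactly the bootstrap one wants to avoid --- and which the paper does avoid by using the deterministic inequality upfront.
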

 
Let us now state the implications of Theorem \ref{thrm:GeneralGWtree} 
for the  Edwards-Anderson model on the Galton-Watson tree. 

\begin{corollary}\label{cor:EAIsingGW}
For $\beta>0$, consider the quantity $\Delta_{\rm EA}(\beta)$ defined in Corollary \ref{cor:EAIsingDeltaAry}.
 For  any real number $d>0$,  and any distribution $\zeta:\mathbb{Z}_{\geq 0}\to [0,1]$ with expectation 
 $d$, and {\em bounded second moment},   let $\bT$ be the Galton-Watson tree with offspring distribution $\zeta$.
 
 Then, for  $\bmu_{\beta}$ the Edwards-Anderson model with inverse temperature $\beta$, on the tree $\bT$, the following
 is true. The distribution $\bmu_{\beta}$  exhibits reconstruction if $d>\Delta_{\rm EA}(\beta)$. On the other hand,   
 if  $d < \Delta_{\rm EA}(\beta)$,  the distribution $\bmu_{\beta}$ exhibits non-reconstruction. 
\end{corollary}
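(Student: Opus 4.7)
The plan is to obtain Corollary \ref{cor:EAIsingGW} as an immediate specialisation of Theorem \ref{thrm:GeneralGWtree}, by choosing the distribution $\phi$ of the couplings to be the standard Gaussian. First I would observe that the Edwards--Anderson model with inverse temperature $\beta$, as defined in \eqref{eq:IntroGibbs}, is precisely the distribution $\bmu_{\beta,\phi}$ defined in \eqref{def:SpinGlass} when $\phi$ is the standard Gaussian law on $\mathbb{R}$. This identification is tautological once one notes that the coupling parameters $\{\bJ_e\}_{e\in E}$ in both definitions play the same role and are i.i.d.\ samples from $\phi$.

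The second step is to match the thresholds. The quantity $\Delta_{\rm EA}(\beta)$ in Corollary \ref{cor:EAIsingDeltaAry} is defined by exactly the same formula as $\Delta_{\rm KS}(\beta,\phi)$ in \eqref{eq:DeltaKSBPhi}, with the random variable $\bJ$ taken to be a standard Gaussian. Hence $\Delta_{\rm EA}(\beta) = \Delta_{\rm KS}(\beta,\phi_{\rm Gauss})$, where $\phi_{\rm Gauss}$ denotes the standard Gaussian distribution. This finiteness and positivity of the expectation defining the threshold is also immediate, since the integrand $\left(\frac{1-\exp(\beta\bJ)}{1+\exp(\beta\bJ)}\right)^2$ is bounded in $[0,1]$.

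Finally, I would verify the hypotheses of Theorem \ref{thrm:GeneralGWtree}: the distribution $\phi_{\rm Gauss}$ is a valid distribution on $\mathbb{R}$, and, by assumption, the offspring distribution $\zeta$ has expectation $d$ and bounded second moment. Applying Theorem \ref{thrm:GeneralGWtree} with these choices then yields reconstruction for $\bmu_{\beta}$ when $d > \Delta_{\rm EA}(\beta)$ and non-reconstruction when $d < \Delta_{\rm EA}(\beta)$, which is exactly the statement of the corollary. There is no genuine obstacle at this stage; the entire technical content has been absorbed into the proof of Theorem \ref{thrm:GeneralGWtree}, and what remains is purely a bookkeeping substitution.
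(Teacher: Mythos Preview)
Your proposal is correct and matches the paper's approach exactly: the paper states Corollary \ref{cor:EAIsingGW} immediately after Theorem \ref{thrm:GeneralGWtree} as a direct implication, without giving a separate proof, and your specialisation of $\phi$ to the standard Gaussian together with the identification $\Delta_{\rm EA}(\beta)=\Delta_{\rm KS}(\beta,\phi_{\rm Gauss})$ is precisely the intended argument.
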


\subsection{The Edwards-Anderson model on \texorpdfstring{$\G(n,d/n)$}{Lg}} \label{sec:ReconCSPs}

For integer $n\geq1$, and real $p\in[0,1]$, let $\G=\G(n,p)$ be the random graph on $V_n=\{x_1,\ldots,x_n\}$,
whose edge set $E(\G)$ is obtained by including each edge with probability, $p$ {\em independently}.

The {\em Edwards-Anderson model}  on $\G$ at inverse temperature $\beta>0$,   is defined as follows: 
for $\J = \{\J_e\}_{e\in E(\G)}$ a family of independent {\em standard Gaussians},  we let 
	\begin{equation}\label{eq:kSpin1}
	\mu_{\G, {\J},\beta}(\sigma) \textstyle =\frac{1}{Z_\beta(\G,\J)} \exp\left( \beta \sum_{x\sim y}{\bf 1}\{\sigma(y)=\sigma(x)\}\cdot \J_{\{x,y\}}  \right) \enspace,
	\end{equation}
where
	\begin{equation*}
	Z_\beta(\G, {\J})  \textstyle =\sum_{\tau\in\{\pm1\}^{V_n}} \exp\left( \beta \sum_{x\sim y}{\bf 1}\{\tau(y)=\tau(x)\}\cdot \J_{\{x,y\}}  \right)\enspace.
	\end{equation*}
Here we assume that $p=\frac{d}{n}$, where $d>0$ is a fixed number. Typically, we study this distribution as $n\to \infty$. 
The natural question we ask here is how does the model change as we vary $d$.
According to the physics predictions, for any $\beta$ there exists a {\em condensation threshold}, denoted as $\dc(\beta)$, where the
function  
\begin{equation*}
d\mapsto\lim_{n\to\infty}\frac1n\Exp[\ln Z_\beta(\G,{\J})]
\end{equation*} 
is non-analytic~\cite{franz2001exact}. This conjecture was proved  by Guerra and Toninelli~\cite{guerra2004high}.
The regime $d<\dc(\beta)$ is called the {\em replica symmetric phase}.  This region has several interesting properties; here we consider one that seems to be most relevant to our discussion. For any $d<\dc(\beta)$ the distribution $\mu_{\G,\J,\beta}$ satisfies the following property: for $\bsigma$ distributed as in 
$\mu_{\G,\J,\beta}$, for   two randomly chosen vertices ${\bf x}$ and ${\bf y}$, the configurations $\bsigma({\bf x})$ and $\bsigma({\bf y})$ 
are asymptotically independent. 
Formally, the above can be expressed  as follows: for $d<\dc(\beta)$ and any $i,j\in \{\pm 1\}$,  we have that 
\begin{equation*}
\limsup_{n\to\infty}\frac{1}{n^2}\sum_{x,y\in V_n} \Exp\left[\mExp{{\bf 1}\{\bsigma(x)=i\}\times {\bf 1}\{\bsigma(y)=j\}}-
\mExp{{\bf 1}\{\bsigma(x)=i\} }\times \mExp{ {\bf 1}\{\bsigma(y)=j\}}\right] =0\enspace,
\end{equation*}
where $\mExp{\cdot}$ denotes expectation with respect to the Gibbs distribution  $\mu_{\G,\J,\beta}$.
Note that the above holds not only for pairs of vertices, but also for sets of $k$ vertices,  for any fixed integer $k>0$. 
Using  our notation, the work  by Guerra and Toninelli~\cite{guerra2004high} implies the following result. 
\begin{theorem}[\cite{guerra2004high}]\label{thrm:DcondValueEAIsingG}
For any $\beta>0$, for the distribution $\mu_{\G,\J,\beta}$ defined as in \eqref{eq:kSpin1},  we have that
\begin{align*}
\dc(\beta) &=  \textstyle \left( \Exp\left[  \left (\frac{1-\exp(\beta \bJ)}{1+\exp(\beta \bJ)}  \right)^2 \right] \right)^{-1} \enspace,
\end{align*}
where $\bJ$ is a standard Gaussian random variable. 
\end{theorem}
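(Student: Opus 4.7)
The plan is to compute $\dc(\beta)$ via Guerra's interpolation method, exploiting the fact that the value claimed in the statement coincides exactly with the Kesten-Stigum threshold $\Delta_{\rm KS}(\beta,\phi_{\rm Gauss})$ in \eqref{eq:DeltaKSBPhi}. Because the EA Hamiltonian is invariant under the global flip $\sigma\mapsto-\sigma$ and the couplings $\J_e$ are symmetric Gaussians, the natural replica symmetric ansatz in the cavity formalism is that the cavity marginal is concentrated at the trivial fixed point $p=1/2$ of the tree recursion studied in Theorem \ref{thrm:GeneralGWtree}. Our goal is to identify $\dc(\beta)$ as the exact threshold beyond which this fixed point loses stability.

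First I would introduce an interpolating Hamiltonian $H_{n,t}$, $t\in[0,1]$, connecting the EA model on $\G(n,d/n)$ at $t=1$ to a fully decoupled ``cavity'' system at $t=0$, in which each vertex feels an auxiliary field built out of $\mathrm{Poisson}(d)$-many independent copies of $\bJ$ acting through \eqref{def:BdMatixM}. Differentiating $\Exp[\ln Z_t]$ in $t$ produces a Guerra-Toninelli sum rule whose leading term is controlled by the annealed second moment of the spin-pair overlap $q_{12}=\frac1n\sum_x\sigma_1(x)\sigma_2(x)$ between two independent Gibbs samples. The content of the Guerra-Toninelli bound is that, when the cavity recursion contracts around the symmetric fixed point, the overlap concentrates at its annealed value and the $t$-derivative has a fixed sign, forcing the interpolation upper bound to meet the annealed lower bound $\frac1n\ln\Exp Z_\beta$ term-by-term. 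This yields a closed-form real-analytic expression for $\Phi(d,\beta):=\lim_n\frac1n\Exp[\ln Z_\beta(\G,\J)]$ throughout the subcritical regime.

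The key computation is to put the contraction condition in closed form. Linearising the belief-propagation update around $p=1/2$, one checks that a small perturbation is multiplied across an edge with coupling $\bJ$ by the factor $(1-e^{\beta\bJ})/(1+e^{\beta\bJ})$, an entry directly readable off the matrix $\bM$ in \eqref{def:BdMatixM}. On a Galton-Watson tree with mean offspring $d$, $h$-step perturbations contract in $L^2$ precisely when
\[
d\cdot\Exp\!\left[\left(\frac{1-\exp(\beta\bJ)}{1+\exp(\beta\bJ)}\right)^{2}\right]<1,
\]
i.e.\ when $d<\Delta_{\rm KS}(\beta,\phi_{\rm Gauss})$. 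Rearranging yields exactly the formula claimed for $\dc(\beta)$.

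The main obstacle is matching this upper bound on the analyticity region with genuine non-analyticity for $d>\Delta_{\rm KS}$: the interpolation alone naturally gives only one direction. For the subcritical side, the interpolation together with a matching annealed second-moment computation exhibits $\Phi(d,\beta)$ as a real-analytic function of $d$. For the supercritical side, one must produce an ansatz that strictly improves the RS free energy just above $\Delta_{\rm KS}$, thereby creating a corner in $d\mapsto\Phi(d,\beta)$. I would obtain this either directly, by perturbing the symmetric cavity fixed point along the unstable eigendirection picked out by the KS eigenvalue to construct a non-trivial RS profile, or, staying within the framework of the present paper, by invoking the reconstruction result Corollary \ref{cor:EAIsingGW} on the Galton-Watson tree and transferring it to $\G(n,d/n)$ via the planted-model/contiguity argument outlined in Section \ref{sec:ReconCSPs}, which is precisely the route the authors use to identify $\dc(\beta)$ with the reconstruction threshold.
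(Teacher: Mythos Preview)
The paper does not actually prove this theorem. It is stated with the attribution \texttt{[guerra2004high]} and treated as a known result; the only remark the paper adds is the one-line observation that ``one obtains the above by combining our Theorem~\ref{thrm:GeneralGWtree} and using standard results from \cite{CoEfJKKCMI,Coja-OghlanGGRS22}.'' So there is no in-paper proof to compare against beyond that sentence.

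Your proposal, by contrast, is a sketch of how the Guerra--Toninelli interpolation argument itself would go, together with the matching lower bound. That is a reasonable outline of the original cited work, but it is substantially more than what the paper does, and it is not what you were asked to reproduce. Your second suggested route---invoking the tree reconstruction result (Corollary~\ref{cor:EAIsingGW} / Theorem~\ref{thrm:GeneralGWtree}) and transferring it to $\G(n,d/n)$ via the contiguity machinery of \cite{CoEfJKKCMI}---is exactly the alternative derivation the paper points to in its one-sentence remark. So in that sense your proposal contains the paper's ``proof'' as a subcase, but wraps it in a much longer discussion of interpolation that the paper simply outsources to the citation.
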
 
Interestingly, one obtains the above  by combining  our Theorem \ref{thrm:GeneralGWtree} and  using 
standard results from  \cite{CoEfJKKCMI,Coja-OghlanGGRS22}.
Our main focus is on the reconstruction threshold for the Edwards-Anderson model on $\G$. The reconstruction for
$\mu_{\G,\J,\beta}(\cdot)$ is defined in a slightly different way  than what we have for the random tree. 

\begin{definition}\label{def:GnpRecon}
For $d>0$, for $\beta>0$,  consider the Gibbs distribution 	$\mu_{\G,\J,\beta}$ as this is defined in \eqref{eq:kSpin1}.
We say that the  measure $\mu=\mu_{\G,\bJ, \beta}$  exhibits {\em reconstruction} if 
\begin{align}\nonumber
\limsup_{h\to\infty} \lim_{n\to\infty} \frac{1}{n}\sum_{x\in V_n}\mathbb{E}  \left [  \lnorm \mu_{x,h}(\cdot \ |\ \bsigma(x)=+1 ) -\mu_{x,h}(\cdot \ |\ \bsigma(x)=-1 )  \rnorm_{\rm TV} 
\ \right]>0 \enspace,
\end{align}
where $\mu_{x,h}$ denote the Gibbs marginal  at the vertices at distance $h$ from vertex $x$. 
On the other hand, if the above limit is zero, then we have  {\em non-reconstruction}. 
\end{definition}

Perhaps, it is interesting to notice the order with which we take the double limit in the above definition.

Furthermore, we let the reconstruction threshold, denoted 
as $d_{\rm recon}$, to be the infimum over $d>0$ such that 
\begin{align*}
 \limsup_{h\to\infty}  \lim_{n\to\infty}\frac{1}{n}\sum_{x\in V_n}\mathbb{E}  \left [  \lnorm \mu_{h}(\cdot \ |\ \bsigma(x)=+1 ) -\mu_{h}(\cdot \ |\ \bsigma(x)=-1 )  \rnorm_{\rm TV} 
\ \right]>0 \enspace.
\end{align*}
The region of values of $d$ such that $d<d_{\rm recon}$ is called the {\em non-reconstruction phase}. It is immediate from
Definition \ref{def:GnpRecon} that, for any $d<d_{\rm recon}$, we have that non-reconstruction.

In the following result, we prove that  the replica symmetric phase coincides with the non-reconstruction phase of
the Edwards-Anderson model on $\G$. 
\begin{theorem}
For any $\beta>0$,  for the distribution $\mu_{\G,\J,\beta}$ defined as in \eqref{eq:kSpin1}, we have that $d_{\rm recon}(\beta)=\dc(\beta)$.
\end{theorem}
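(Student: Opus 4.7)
The plan is to combine Theorem \ref{thrm:GeneralGWtree} (reconstruction on the Galton--Watson tree) with a local weak convergence / contiguity argument that transfers the tree reconstruction threshold to $\G(n,d/n)$. First, observe that by Theorem \ref{thrm:DcondValueEAIsingG} and formula \eqref{eq:DeltaKSBPhi}, we have $\dc(\beta)=\Delta_{\rm KS}(\beta,\phi)$ when $\phi$ is the standard Gaussian. Since $\mathrm{Poisson}(d)$ has mean $d$ and bounded second moment, Theorem \ref{thrm:GeneralGWtree} applies: on the Galton--Watson tree $\bT$ with $\mathrm{Poisson}(d)$ offspring, the Edwards--Anderson measure exhibits reconstruction iff $d>\dc(\beta)$. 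The task therefore reduces to showing that the reconstruction threshold on $\G(n,d/n)$ in the sense of Definition \ref{def:GnpRecon} coincides with the reconstruction threshold on $\bT$.

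For the reconstruction direction ($d>\dc(\beta)$), the plan is to use the planted (teacher--student) model together with the contiguity machinery of \cite{CoEfJKKCMI,Coja-OghlanGGRS22}. Sample $\bsigma^\ast\in\{\pm 1\}^{V_n}$ uniformly, and then draw the pair $(\G,\bJ)$ with density proportional to $\exp(\beta\sum_{\{u,v\}\in E(\G)}\Ind\{\bsigma^\ast(u)=\bsigma^\ast(v)\}\bJ_{\{u,v\}})$ times the product of the $\G(n,d/n)$ law with i.i.d.\ standard Gaussians. By the Nishimori identity, conditional on the resulting pair $(\G,\bJ)$, the configuration $\bsigma^\ast$ is distributed exactly as $\mu_{\G,\bJ,\beta}$. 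Under this planted measure, the $h$-neighborhood of a uniformly random vertex $x$ together with $\bsigma^\ast$ and $\bJ$ restricted to it converges in distribution to the broadcast process on the first $h$ levels of $\bT$ rooted at a uniformly random spin, with i.i.d.\ Gaussian edge weights. Theorem \ref{thrm:GeneralGWtree} then supplies an estimator of the root spin whose expected absolute correlation is bounded away from $0$ uniformly in $h$, and projecting this estimator to the graph and invoking the Nishimori identity yields a non-vanishing averaged total variation distance as in Definition \ref{def:GnpRecon}, so that $d_{\rm recon}(\beta)\le \dc(\beta)$.

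For the non-reconstruction direction ($d<\dc(\beta)$), I would show the matching upper bound using the decorrelation property of the replica symmetric phase. For a typical vertex $x$ in $\G(n,d/n)$, local weak convergence gives that the $h$-neighborhood of $x$ is with high probability a tree distributed (in the limit $n\to\infty$) as the first $h$ levels of the Galton--Watson tree with $\mathrm{Poisson}(d)$ offspring and i.i.d.\ Gaussian couplings. In the RS phase $d<\dc(\beta)$, the boundary felt by this neighborhood from the rest of the graph is asymptotically free in the following averaged sense: the Gibbs conditional marginals $\mu_{x,h}(\cdot\mid\bsigma(x)=\pm 1)$ can be coupled to the corresponding tree conditional marginals $\bmu_{h}(\cdot\mid\bsigma(r)=\pm 1)$ with an error in total variation that tends to zero when averaged over $x$ and $(\G,\bJ)$ and then over $n$. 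This uses the contiguity between the null and planted ensembles and the variance decorrelation estimates that hold throughout the RS phase, from \cite{CoEfJKKCMI,Coja-OghlanGGRS22}. Combined with the non-reconstruction part of Theorem \ref{thrm:GeneralGWtree}, this gives $d_{\rm recon}(\beta)\ge \dc(\beta)$.

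The hard part is the non-reconstruction direction. Passing from local weak convergence of the rooted graph to convergence of Gibbs conditional marginals is delicate, because the definition of $d_{\rm recon}$ takes $\lim_{n\to\infty}$ before $\limsup_{h\to\infty}$: one must first show that integrating out the complement of the $h$-neighborhood produces, on average, an asymptotically free boundary condition in the $n\to\infty$ limit, and only then send $h\to\infty$. The non-commutation of these two limits is precisely why the replica symmetric decorrelation theorems of \cite{CoEfJKKCMI,Coja-OghlanGGRS22} must be invoked as a black box rather than a purely tree-based argument. A secondary technical point is to verify that the rare occurrence of short cycles inside the $h$-neighborhood of $x$ in $\G(n,d/n)$ contributes negligibly to the averaged total variation distance, which follows from the standard tree-excess estimates for sparse Erd\H{o}s--R\'enyi graphs.
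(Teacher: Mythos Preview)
Your overall strategy matches the paper's exactly: the paper's entire proof is the one-line invocation ``follows from Theorems~\ref{thrm:DcondValueEAIsingG}, Corollary~\ref{cor:EAIsingGW} and \cite[Corollary 1.5]{CoEfJKKCMI}'', and you have correctly identified those three ingredients (the value of $\dc$, the Galton--Watson threshold, and the tree-to-graph transfer) and attempted to unpack what the cited transfer result contains.

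There is, however, a genuine gap in your unpacking of the reconstruction direction. For $d>\dc(\beta)$ you invoke the planted model and say that ``contiguity machinery'' plus Nishimori gives reconstruction in the sense of Definition~\ref{def:GnpRecon}. But Definition~\ref{def:GnpRecon} is phrased for the \emph{null} law, where $\G\sim\G(n,d/n)$ and the couplings are i.i.d.\ Gaussians; the planted marginal on $(\G,\bJ)$ is tilted by the partition function and is \emph{not} that null law. The Nishimori identity tells you that $\bsigma^\ast\mid(\G,\bJ)$ has law $\mu_{\G,\bJ,\beta}$ under the planted measure, so your estimator argument establishes reconstruction only under the planted graph law. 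Transferring this to the null law is precisely what contiguity would give you, but contiguity between the planted and null ensembles is established in \cite{CoEfJKKCMI,Coja-OghlanGGRS22} only throughout the replica symmetric phase $d<\dc(\beta)$, not for $d>\dc(\beta)$. So as written, your argument for $d_{\rm recon}\le\dc$ does not close. The inequality $d_{\rm recon}\le\dc$ is in fact part of what \cite[Corollary 1.5]{CoEfJKKCMI} delivers as a black box (it comes from the general structure of the condensation transition rather than from a planted-to-null contiguity step above $\dc$), and the paper is content to cite it as such.

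Your treatment of the non-reconstruction direction is on firmer ground: there contiguity does hold, and your identification of the real difficulty (integrating out the complement of the $h$-ball to obtain an asymptotically free boundary, with the $n\to\infty$ limit taken first) is exactly what \cite[Corollary 1.5]{CoEfJKKCMI} supplies.
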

The above follows from Theorems \ref{thrm:DcondValueEAIsingG}, \ref{cor:EAIsingGW} and   \cite[Corollary 1.5]{CoEfJKKCMI}.

\subsubsection*{Notation} For the graph $G=(V,E)$ and the Gibbs distribution $\mu$ on the set of configurations
$\{\pm 1\}^V$.  For a configuration $\sigma$,  we let $\sigma(\Lambda)$ denote the configuration that $\sigma$
specifies on the set of vertices $\Lambda$.
We let   $\mu_{\Lambda}$  denote the marginal of $\mu$ at the  set $\Lambda$.
We let $\mu(\cdot \ |\ \Lambda, \sigma)$, denote the distribution 
$\mu$ conditional on the configuration at $\Lambda$ being $\sigma$. Also, we interpret the conditional
marginal $\mu_{\Lambda}(\cdot \ |\ \Lambda', \sigma)$, for $\Lambda'\subseteq V$, in the natural way.

\section{Approach}\label{sec:Approach}

A major challenge in our setting is that we have to deal with multiple levels of randomness, i.e., we have 
two levels of randomness in the case of the $\Delta$-ary tree,  while the levels increase with the Galton-Watson trees or $\G(n,d/n)$. 
To circumvent this problem,   we follow an analysis that allows us to disentangle the different sources of randomness
in our models.  In this section, we provide a high-level description of our approach. We restrict our discussion on the 
$\Delta$-ary tree.   

\subsection*{Non-reconstruction}
Consider the $\Delta$-ary tree $T=(V,E)$ rooted at $r$.  
Suppose that we have a distribution $\mu$ as in \eqref{def:SpinGlass} on $T$, while assume that each  edge $e\in E$ has its own coupling parameter $J_e$. 
Assume, for the moment,  that the  coupling parameters at the edges are fixed, e.g. the reader may assume 
that are arbitrary real numbers. That is, each $J_e$ can be either positive, or negative. 
Hence, one might consider the aforementioned distribution as a {\em non-homogenous} Ising model
which involves both ferromagnetic and anti-ferromagnetic interactions. 
Let us focus on  non-reconstruction.  We derive an upper bound on 
\begin{equation*}
\lnorm \mu_{h}(\cdot \ |\ \bsigma(r)=+1 ) -\mu_{h}(\cdot \ |\ \bsigma(r)=-1 )  \rnorm_{\rm TV}\enspace, 
\end{equation*}
which is expressed in terms of the {\em influence} between neighbouring vertices. The notion of influence between
vertices is the same as  the one developed in the context of {\em Spectral  Independence} 
technique for establishing rapid mixing of Glauber dynamics \cite{OptMCMCIS, VigodaSpectralInd}. 
These influences are used  in the context of the so-called  {\em down-up} coupling to establish non-reconstruction.  
This is a coupling approach  from \cite{BhatVeraVigodaWeitz}, which also relies on ideas in~\cite{SlyReconColouring}.

Let us be more specific. For the probability measure $\mu$ we consider, let  $R_r$ be the {\em ratio of  Gibbs marginals}  at the root $r$
defined by 
\begin{equation}\label{eq:DefOfRf}
R_r=\frac{\mu_r(+1)}{\mu_r(-1)} \enspace.
\end{equation}
Recall that  $\mu_r(\cdot)$ denotes the marginal of the Gibbs distribution $\mu(\cdot)$ 
at the root $r$. 

For a vertex $u\in V$,  we let $T_u$ be the subtree of $T$ that includes $u$, and all its descendants. 
Also, we let  $R_u$ be the ratio of marginals at vertex $u$, where the Gibbs distribution is, now,  
with respect to the subtree $T_u$. 

Suppose that the vertices $w_1, \ldots, w_{\Delta}$ are the children of the root $r$.  Our focus is on 
expressing $\log R_r $ recursively, as a function of $\log R_{w_1}, \ldots, \log R_{w_{\Delta}}$. 
Note that we study the logarithm of the ratios involved, which can be viewed as applying the 
potential function  $\log (\cdot)$ to the tree recursions. 
We have that  $\log\left(R_r\right)=H\left(\log R_{w_1},   \ldots, \log R_{w_{\Delta}}\right)$ where 
\begin{equation}\label{def:OfHOverview}
H(x_1,x_2, \ldots, x_{\Delta}) =\sum^{\Delta}_{i=1} \log\left( \frac{\exp \left(x_i +\beta J_{\{ r,w_i\}} \right)+1}{\exp(x_i)+\exp\left(\beta J_{\{r,w_i\}} \right)} \right) \enspace.
\end{equation}
Note that  $J_{\{r,w_i\}}$ is the coupling parameter that corresponds to the edge between the root $r$ with its child $w_i$.

All the above extends naturally in the case where we impose boundary conditions. That is, for a region $K\subseteq V$,
and $\tau\in \{\pm 1\}^K$, we define the ratio of marginals $R^{K,\tau}_r$ at the root, where now the ratio
is between the conditional marginals $\mu_r(+1 \ |\ K,\tau)$ and $\mu_r(-1 \ |\ K,\tau)$. The recursive function $H$ for the conditional
ratios is exactly the same as the one above. 

Our interest is on the {\em gradient} of the function $H$. Specifically, for every $i\in [\Delta]$, we let
\begin{equation}\label{eq:InfDef}
\Inf_{\{ r,w_i\}} =  \sup_{x_1,\ldots, x_\Delta} \left| \frac{\partial }{\partial x_i} H(x_1,x_2, \ldots, x_{\Delta}) \right| \enspace.
\end{equation}
It turns out that, in our case,  $\Inf_{\{ r,w_i\}} $ has a simple form
\begin{equation*}
\Inf_{\{ r,w_i\} } =  \frac{\left|1-\exp\left( \beta J_{\{ r,w\} } \right) \right|}{1+\exp\left( \beta J_{\{ r,w_i\}}\right) }\enspace.
\end{equation*}
Utilising the idea of down-up coupling  from  \cite{BhatVeraVigodaWeitz}, we prove the following: 
\begin{equation}\label{eq:ApproachDetBound}
 \lnorm \mu_h(\cdot \ |\ \bsigma(r)=+ 1) -\mu_h(\cdot \ |\ \bsigma(r)=-1 )  \rnorm_{\rm TV}  \le
\sqrt{\sum_{v\in \Lambda}\;\prod_{e\in\mathrm{path}(r,v)} \Inf_e^2 } \enspace,
\end{equation}
where $\Lambda=\Lambda(h)$ denotes the set of vertices at distance $h$ from the root $r$. 
Note that the above provides a bound for the total variation distance of the the marginals for fixed, i.e., non-random, 
couplings $\{J_{e}\}_{e\in E}$. 
Inequality \eqref{eq:ApproachDetBound}, extends naturally when we study reconstruction for the distribution $\bmu$ defined in \eqref{def:SpinGlass}, i.e.,
when the coupling parameters $\bJ_e$ are i.i.d. samples from a distribution $\phi$.  
Indeed, averaging yields 
\begin{equation}\label{eq:HighLevel4SecondMomentNonRecon}
\mathbb{E}\left[ \left( \lnorm \bmu_h(\cdot \ |\ \bsigma(r)=+ 1) -\bmu_h(\cdot \ |\ \bsigma(r)=-1 )  \rnorm_{\rm TV} \right)^2\right] \le
\sum_{v\in \Lambda}\;\prod_{e\in\mathrm{path}(r,v)} \mathbb{E} \left[\mathbold{\Gamma}_e^2\right] \enspace, 
\end{equation}
where we have $\mathbold{\Gamma}_{e}=\frac{\left|1-\exp\left( \beta \bJ_{e } \right) \right|}{1+\exp\left( \beta \bJ_{e}\right)}$, for each $e\in E$. 
Note that the above holds, since each $\mathbold{\Gamma}_e$ depends only on $\bJ_e$, while the coupling parameters 
$\bJ_e$ are assumed to be independent with each other.  

At this point, and since the $\bJ_e$'s are identically distributed, we further observe  that  for any $e\in E$,  we have that
\begin{equation*}
\Delta_{\rm KS}(\beta,\phi)=\left(\mathbb{E}\left[\mathbold{\Gamma}_e^2\right] \right)^{-1}\enspace. 
\end{equation*}
Since the underlying tree  $T$ is $\Delta$-ary, it is immediate to see that for  
$\Delta<\Delta_{\rm KS}(\beta, \phi)$, the r.h.s. of~\eqref{eq:HighLevel4SecondMomentNonRecon} tends to zero 
as $h\to\infty$.  
From this point on,  it is standard to prove non-reconstruction.

Our analysis   allows to deal with the randomness of the spin-glass measure $\bmu$  by utilising 
the bound in~\eqref{eq:ApproachDetBound}. That is, the upper bound on the total variation distance 
has a nice {\em product} form of the quantities $\Gamma_e$, which,  in turn,   expresses the dependence 
of the total variation distance on the  edge couplings $\{J_e\}_{e\in E}$. 
This product form of the bound, behaves rather nicely when we need to take averages over the 
randomness of the coupling parameters $\{\bJ_e\}_{e\in E}$ of the the spin-glass measure $\bmu$.

\subsection*{Reconstruction}

In the reconstruction regime, the configuration at the root has a non-vanishing effect on the configuration
of the vertices at distance $h$, regardless of the height $h$. 
Specifically, the  corresponding  leaf configurations from the measure conditioned on root's spin being $+1$, and $-1$, 
  are so different  with each other,  that any discrepancies  cannot be attributed to random fluctuations.
Therefore, a question that naturally arises is how can we take advantage of the discrepancies
so that we  infer the spin of the root.

For the standard  ferromagnetic Ising, several approaches  have been developed to establish
reconstruction (see \cite{EvKenPerSchulConConIsingTree}, \cite{BorgsCMR06}, \cite{ioffe1996extremality}). 
Here, we build on an elegant argument in \cite{EvKenPerSchulConConIsingTree}.
The authors in this work, show that
a simple {\em majority vote} of the leaf spins, conveys information 
sufficient  to reconstruct root's spin,  The  majority vote on the leaves is defined by
\begin{equation}\label{def:MajVote}
M_{h}=\sum_{u\in \Lambda}\sigma(u) \enspace. 
\end{equation}
The estimation rule is to infer that the spin at the root is  ${\rm sgn}\{M_{h}\}$, 
i.e., the sign of $M_h$. Impressively, it turns out that this estimator  is optimal, i.e., it  coincides with the {\em maximum 
likelihood} one. For  the $\Delta$-ary tree, one establishes  reconstruction for the ferromagnetic  Ising model  by employing  
a second moment  argument on the estimator $M_h$.  

For the distributions we consider here, the above estimator  is far from sufficient. This is due to various facts.
Firstly,  we allow for mixed couplings on the edges, i.e., certain edges can be ferromagnetic, and others can be anti-ferromagnetic. 
Secondly, the strength of the interaction, i.e., the magnitude of $\J_e$'s,  is expected to vary from one edge to the other. 
To this end, we introduce a new estimator, and we establish reconstruction by building on the second moment argument from  
\cite{EvKenPerSchulConConIsingTree}.  The starting point towards deriving this estimator,  comes from just considering the standard anti-antiferromagnetic Ising. 
The statistic from \eqref{def:MajVote}, clearly does not  work for  this distribution. However, there is an easy  remedy,
by   taking into account the parity of  the height $h$, i.e., if $h$ is an even, or an odd number. 
We  infer that the spin at the root is equal to   ${\rm sgn}\left\{\widehat{M}_{h}\right\}$,  where
\begin{equation*}
\widehat{M}_{h}=(-1)^{h} \sum_{u\in \Lambda}\sigma(u) \enspace. 
\end{equation*}
For the spin-glass distributions we consider here, we need to get  the above idea even further. Firstly,  
in order to accommodate the   {\em mixed} ferromagnetic and anti-ferromagnetic couplings on the edges of the tree. 
It seems meaningful to use  the estimator ${\rm sgn}\left\{\widetilde{M}_{h} \right\}$ for the root configuration,  where
\begin{equation*}
\widetilde{M}_{h}=\sum_{u\in \Lambda}  \sigma(u) \prod_{e\in {\rm path}(r,u)} {\rm sign}\{\bJ_{e}\} \enspace,
\end{equation*}
with ${\rm path}(r,u)$ denoting the set of edges along the unique path connecting $r$ to $u$.
So that in $\widetilde{M}_{h}$,  for each leaf we essentially examine the parity of the number of antiferromagnetic couplings along the path that 
connects it to the root. Unfortunately, for the above estimator, our  second moment argument does not seem to work all that well. 

The estimator we end up using, is a  {\em reweighted} version of $\widetilde{M}_{h}$, which we call
the  ``flip majority" vote, and is defined by 
\begin{equation*}
{F}_{h}=\sum_{u\in \Lambda}  \sigma(u) \prod_{e\in {\rm path}(r,u)} 
{\textstyle \frac{ 1-\exp\left( \beta \bJ_e \right)}{1+\exp\left( \beta \bJ_e \right)}} \enspace. 
\end{equation*}
Note that the absolute value of the weight for the edge $e$, above, coincides with the quantity $\mathbold{\Gamma}_e$
in \eqref{eq:HighLevel4SecondMomentNonRecon}.
Naturally, the estimation rule is to infer that the root spin is  ${\rm sgn}\left\{F_{h}\right\}$. 

\section{Tree recursions and Influences}\label{sec:TreeRecurInf}

What follows applies to any kind of tree. For the sake of simplicity, in this section, 
we consider the $\Delta$--ary tree $T=(V,E)$ rooted at $r$.  
Suppose  that we are given the number  $\beta\geq 0$, while each  edge $e\in E$ has its 
own coupling parameter,~$J_e$.
Assume, for the moment,  that the  coupling parameters at the edges are fixed, i.e., they are arbitrary real numbers. 
Given $\beta$ and $\{J_e\}_{e\in E}$,  we consider the Gibbs distribution $\mu=\mu_{\beta, \{J_e\}}$ 
similarly to the one we have  in \eqref{def:SpinGlass}.  That is, every  $\sigma\in\{\pm 1\}^V$ gets 
a probability mass defined by
\begin{align}\label{def:GibbDistrTreeRecur}
\mu(\sigma) &\propto \textstyle \exp\left( \beta \cdot \sum_{\{w,u\}\in E} {\bf 1}\{\sigma(u)=\sigma(w)\} \cdot J_{\{u,w\}}   \right) \enspace.
\end{align}
For a region  $K \subseteq V \setminus\{r\}$ and   $\tau\in  \{\pm 1\}^{K}$, we consider the {\em ratio of 
marginals} at the root $R^{K, \tau}_r$  such that
\begin{align}\label{eq:DefOfR}
R^{K, \tau}_r=\frac{\mu_r(+1\ |\ K,  \tau )}{\mu_r(-1\ |\ K,  \tau)} \enspace.
\end{align}
Recall that  $\mu_r(\cdot \ |\ K,  \tau )$ denotes the marginal of the Gibbs distribution $\mu(\cdot \ |\ K,  \tau )$ 
at the root $r$. Also, note that the  above  allows  for $R^{K, \tau}_r=\infty$, when 
$\mu_r(-1\ |\ K,  \tau)=0$.

For a vertex $u\in V$,  we let $T_u$ be the subtree of $T$ that includes $u$, and all its descendants. 
We always assume  that the root of $T_u$ is the vertex $u$.  
With a slight abuse of notation,  we let  $R^{K,\tau}_u$ denote the ratio of marginals at the root for 
the subtree $T_u$, where the Gibbs distribution is, now,  with respect to $T_u$.

Suppose that the root $r$ is of degree $\Delta>0$, while let the vertices $w_1, \ldots, w_{\Delta}$
be its children.  We express $R^{K, \tau}_r$ it terms of  $R^{K, \tau}_{w_i}$'s 
by having  $R^{K, \tau}_{r}=F_\Delta(R^{K, \tau}_{w_1}, R^{K, \tau}_{w_2}, 
\ldots, R^{K, \tau}_{w_\Delta})$,   for  
\begin{align*}
F_\Delta:[0, +\infty]^\Delta\to [0, +\infty] & &\textrm{such that}& &
 (x_1, \ldots, x_\Delta)\mapsto \prod^\Delta_{i=1}\frac{\exp(\beta J_{\{r, w_i\}}) {x}_i+1}{{x}_i+\exp(\beta J_{\{r, w_i\}})} \enspace.
\end{align*}

For  the analysis that  follows, we get cleaner results by equivalently working with log-ratios rather 
than  ratios of Gibbs marginals.  Let  $H_\Delta=\log \circ F_\Delta \circ \exp$,  which means that $H_\Delta:[-\infty, +\infty]^\Delta\to [-\infty, +\infty] $ is such that
\begin{align*}
 \textstyle (x_1, \ldots, x_\Delta)\mapsto \sum^\Delta_{i=1}
\log\left( \frac{\exp(\beta J_{\{r, w_i\}}+x_i)+1}{\exp(x_i)+\exp(\beta J_{\{r, w_i\}})} \right) \enspace.
\end{align*}
From  the above, it is elementary to verify that   $\log R^{K, \tau}_{r}=H_\Delta(\log R^{K, \tau}_{w_1}, 
\ldots, \log R^{K, \tau}_{w_\Delta})$. The above transformation is standard in the literature, and can be viewed as applying the potential function $\log(\cdot)$ in 
the tree recursion. 
For every $i\in [\Delta]$, we let
\begin{align}\label{def:OfGammae}
\Inf_{\{ r,w_i\}} &= \sup_{x_1,\ldots, x_{\Delta}} \left|\frac{\partial }{\partial x_i} H(x_1,x_2, \ldots, x_{\Delta}) \right|
\enspace.
\end{align}
The quantities $\{\Gamma_e\}_{e\in E}$ arise naturally in various settings in  our 
analysis. Specifically, we use the theorem below, which follows as corollary from results in \cite{OptMCMCIS,VigodaSpectralInd}. 

\begin{theorem}\label{thrm:PairwiseInfluenceBound}
For $\beta>0$, consider the tree $T=(V, E)$ and $\{J_e\}_{e\in E}$ for fixed $J_{e}\in \mathbb{R}$.
Let the Gibbs distribution $\mu=\mu_{\beta, \{J_e\}}$ on $T$,  defined as in \eqref{def:GibbDistrTreeRecur}.

For any two vertices $u,w\in V$, for any $M \subseteq V\setminus\{u,w\}$, and any $\tau\in \{\pm 1\}^{M}$
the following holds:
\begin{align}
\lnorm \mu_{w}(\cdot\ |\ (M, \tau), (u,+1))-\mu_{w}(\cdot\ |\ (M, \tau), (u,-1)) \rnorm_{\rm TV} &
\leq \prod_{e\in\mathrm{path}(u,w)} \Gamma_e \enspace,
\end{align}
where $\mathrm{path}(u,w)$  is the set of edges along the path from $u$ to $w$ in $T$, 
while  $\Gamma_e$'s are defined   in \eqref{def:OfGammae}.
\end{theorem}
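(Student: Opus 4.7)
The plan is to reduce the two-point total variation to a scalar pairwise influence and then induct on the length $k:=|\mathrm{path}(u,w)|$. Because the spin set is $\{\pm 1\}$, the total variation on the left-hand side coincides with $\bigl|\mu_w(+1\mid M,\tau, u{=}+1)-\mu_w(+1\mid M,\tau, u{=}-1)\bigr|$, so it suffices to prove this real-valued quantity is at most $\prod_{e\in\mathrm{path}(u,w)}\Gamma_e$. The argument does not really require the full machinery of \cite{OptMCMCIS,VigodaSpectralInd}: the product structure will follow transparently from the Markov property on the tree together with a one-edge computation that matches the explicit form of $\Gamma_e$ coming from $H_\Delta$.

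For the base case $k=1$, I let $\ell$ denote the log-ratio at $w$ induced by $(M,\tau)$ with the edge $\{u,w\}$ deleted. Conditioning $u$ to $\pm 1$ amounts to adding $\pm\beta J_{\{u,w\}}$ to $\ell$, so after applying the sigmoid $\sigma(x)=1/(1+e^{-x})$ the quantity to bound is $|\sigma(\ell+\beta J)-\sigma(\ell-\beta J)|$. A short algebraic manipulation shows this is maximised at $\ell=0$ and equals $|(e^{\beta J}-1)/(e^{\beta J}+1)|$, which is exactly $\Gamma_{\{u,w\}}$ as obtained from $H_\Delta$'s partial derivative in \eqref{def:OfGammae} (the latter being additive in its $\Delta$ arguments, so the per-coordinate supremum collapses to the single-variable expression). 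For the inductive step, let $v$ be the neighbour of $u$ on the path to $w$, and let $M_w$ denote the restriction of $(M,\tau)$ to the component of $T\setminus\{v\}$ containing $w$. The spatial Markov property on the tree yields
\[
\mu_w(+1\mid M,\tau,u{=}s)=\sum_{t\in\{\pm 1\}}\mu_w(+1\mid M_w, v{=}t)\,\mu_v(t\mid M,\tau,u{=}s),
\]
and taking the difference over $s\in\{\pm 1\}$, combined with the 2-spin identity $\mu_v(-1\mid\cdot,u{=}{+}1)-\mu_v(-1\mid\cdot,u{=}{-}1)=-[\mu_v(+1\mid\cdot,u{=}{+}1)-\mu_v(+1\mid\cdot,u{=}{-}1)]$, factors the influence at $w$ into the product of the edge influence from $u$ to $v$ and the pairwise influence from $v$ to $w$. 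Applying the base case to the former and the inductive hypothesis to the latter (which acts on the strictly shorter path from $v$ to $w$ with boundary $M_w$) yields the desired product bound.

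The only delicate point I anticipate is a careful bookkeeping of boundary conditions: the second factor in the factorisation uses $M_w$ rather than $M$, which is legitimate since the inductive hypothesis is stated uniformly over all admissible boundary conditions, but one must verify that $v\notin M_w$ and $w\notin M_w$ so the hypothesis genuinely applies. One might alternatively attempt a direct chain-rule computation along the path, exploiting $\partial \log R_w/\partial \log R_u=\prod_e (\partial H/\partial x_e)$ with each factor bounded by $\Gamma_e$; however, the conditioning $u{=}{\pm}1$ forces $\log R_u=\pm\infty$, so a naive mean-value argument on the log-ratio line fails at the boundary. The Markov-property factorisation above sidesteps this issue entirely, keeping every quantity finite throughout the induction.
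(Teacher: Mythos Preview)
Your proof is correct. The base case computation is exactly right: with $\ell$ the ``prior'' log-ratio at $w$ coming from its own subtree, conditioning $u=\pm 1$ shifts $\ell$ by $\pm\beta J_{\{u,w\}}$, and the resulting sigmoid difference is maximised at $\ell=0$ with value $|e^{\beta J}-1|/(e^{\beta J}+1)$, matching $\Gamma_{\{u,w\}}$ via Claim~\ref{cl:WhatIsGamma}. The inductive factorisation is also correct: once you condition on $v$, the marginal at $w$ depends only on the boundary inside the $w$-component of $T\setminus\{v\}$, so the difference over $s\in\{\pm 1\}$ factors exactly as you write, and both factors are instances of the statement at strictly smaller path length. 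Your bookkeeping remark about $v,w\notin M_w$ is the right check; note also that if $v\in M$ the first factor vanishes and the bound is trivial, so you may freely assume $v\notin M$ when invoking the $k=1$ case on the pair $(u,v)$.

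The paper itself does not prove this theorem: it simply records that the statement is a direct consequence of Lemma~B.2 in \cite{OptMCMCIS} and Lemma~15 in \cite{VigodaSpectralInd}. Those results go through the potential-function/chain-rule machinery for general spin systems, bounding the derivative of the log-ratio recursion by $\Gamma_e$ along each edge and then composing. Your argument is genuinely different and more elementary for the 2-spin case at hand: it bypasses the potential altogether, working directly with marginal probabilities and the tree Markov property, so it never has to confront the $\log R_u=\pm\infty$ boundary issue you flag in your last paragraph. The trade-off is that the cited route is stated once for arbitrary (soft-constraint) spin systems and arbitrary contractive potentials, whereas your factorisation relies on the 2-spin identity $\mu_v(-1\mid\cdot)-\mu_v(-1\mid\cdot)=-[\mu_v(+1\mid\cdot)-\mu_v(+1\mid\cdot)]$ to collapse the sum over $t$ into a clean product; for $|\mathcal A|>2$ one would recover only an $\ell_1$-type bound on the intermediate vertex, which is where the potential-function approach earns its keep.
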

Specifically, Theorem  \ref{thrm:PairwiseInfluenceBound} is a direct consequence of Lemma B.2 in \cite{OptMCMCIS},
and Lemma 15 in \cite{VigodaSpectralInd}.
For the distributions we consider in this work, it turns out,  that the quantities  $\Inf_{\{ r,w_i\}} $ have a simple form
which, somehow,  is a reminiscent of the quantity $\Delta_{\rm KS}$ in \eqref{eq:DeltaKSBPhi}.
\begin{claim}\label{cl:WhatIsGamma}
For $e=\{r,w_i\} \in E$, consider the quantity $\Inf_e$ defined in \eqref{def:OfGammae}. 
We have that  
\begin{align}\label{eq:whatisgamma}
\Inf_{e} =  \frac{\left|1-\exp\left( \beta J_{e} \right) \right|}{1+\exp\left( \beta J_{e}\right) } \enspace.
\end{align}
\end{claim}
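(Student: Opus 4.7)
The plan is to note that $H_\Delta(x_1,\dots,x_\Delta)$ is a sum whose $i$-th term depends only on $x_i$, so $\frac{\partial}{\partial x_i}H_\Delta$ depends only on $x_i$ and equals $h'(x_i)$, where $h(x)=\log\bigl(ae^x+1\bigr)-\log\bigl(e^x+a\bigr)$ and $a:=\exp(\beta J_e)$. The supremum in \eqref{def:OfGammae} therefore reduces to a one-variable problem: determine $\sup_{x\in\mathbb{R}}|h'(x)|$.

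A short computation gives
\[
h'(x)=\frac{ae^x}{ae^x+1}-\frac{e^x}{e^x+a}=\frac{(a^2-1)e^x}{(ae^x+1)(e^x+a)}.
\]
Setting $u=e^x>0$ turns the target into maximising
\[
g(u)=\frac{(a^2-1)u}{au^2+(a^2+1)u+a}
\]
in absolute value over $u\in(0,\infty)$. Differentiating and simplifying yields $g'(u)=\frac{a(a^2-1)(1-u^2)}{[au^2+(a^2+1)u+a]^2}$, so the unique positive critical point is $u=1$, i.e.\ $x=0$. Since $g(u)\to 0$ as $u\to 0^+$ and as $u\to\infty$, this critical point is the global extremum.

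Evaluating at $u=1$ gives
\[
g(1)=\frac{a^2-1}{a^2+2a+1}=\frac{a-1}{a+1},
\]
hence $\sup_x|h'(x)|=|g(1)|=\dfrac{|a-1|}{a+1}=\dfrac{|1-\exp(\beta J_e)|}{1+\exp(\beta J_e)}$, as claimed. There is no real obstacle here beyond the algebraic simplification; the only point worth being careful about is handling the degenerate case $a=1$ (i.e.\ $J_e=0$), where $h'\equiv 0$ and both sides of \eqref{eq:whatisgamma} vanish, so the identity still holds trivially.
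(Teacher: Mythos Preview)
Your proof is correct and follows essentially the same approach as the paper: both reduce to the one-variable derivative, locate the unique critical point at $x=0$ (equivalently $u=1$) via a first-derivative computation, observe the vanishing limits at the boundary, and evaluate there. The only cosmetic differences are your explicit substitution $u=e^{x}$ and that you treat the degenerate case $a=1$ at the end rather than the beginning.
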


\begin{proof}[Proof of Claim~\ref{cl:WhatIsGamma}]

The derivations below are standard and we present them for the sake of our work being self-contained. 
For $i\in [\Delta]$ and $e=\{r,w_i\}$, let $h_i:[-\infty,+\infty]\to \mathbb{R}$ be the function  

\begin{align}\nonumber 
x \mapsto -\frac{(1-\exp(2\beta J_e))\cdot \exp(x)}{(\exp(\beta J_{e}+x)+1)(\exp(x)+\exp(\beta J_e))} \enspace.
\end{align}
It is easy to verify that $\frac{\partial }{\partial x_i} H_\Delta(x_1,  \ldots, x_\Delta)=h_i(x_i)$. It is also straightforward to see that for any real function $f$ we have 
\begin{equation*}
\sup_{x}|f(x)| = \textstyle
\max\left\{ \left|\sup_{x}f(x)\right|, \left|\inf_{x}f(x)\right|\right\} \enspace,
\end{equation*}
so that
\begin{equation}\label{eq:SimpleInf}
	\Inf_{e} = 	
					\sup_{x_1,\ldots, x_d} \left|\frac{\partial }{\partial x_i} H(x_1,x_2, \ldots, x_{d})\right| 
				 =	\sup_{x}|h_i(x)|
				 = 	\max\left\{ \left|\sup_{x}h_i(x)\right|, \left|\inf_{x}h_i(x)\right|\right\} 
				 \enspace.
\end{equation}
Now let also $b_i=\exp(\beta J_{e}) > 0$, so that 
\begin{equation}\nonumber
	h_i(x) =
	-\frac{{\mathrm e}^{x} \left(1-b_i^{2}\right)}{\left({\mathrm e}^{x} b_i+1\right) \left({\mathrm e}^{x}+b_i\right)} 
	\enspace,  
\end{equation}
and notice that we want to show 
\begin{equation*} 
	\Inf_{e } 
		=  \frac{\left|1-b_i\right|}{1+b_i }
	\enspace. 
\end{equation*}
First, if $b_i=1$, then \eqref{eq:SimpleInf} gives 
\begin{equation}\nonumber 
\Inf_{e } = \sup_{x}|h_i(x)| = 0 = \frac{\left|1-b_i\right|}{1+b_i } \enspace.
\end{equation}
 Assume now $b_i \neq 1$. Differentiating $h_i$ gives 
\begin{equation}\nonumber 
h^{\prime}_i(x) =-\frac{{\mathrm e}^{x} \left(b_i^{2}-1\right) b_i \left({\mathrm e}^{2 x}-1\right)}{\left({\mathrm e}^{x} b_i+1\right)^{2} \left({\mathrm e}^{x}+b_i\right)^{2}}\enspace.
\end{equation}
Since $b_i>0$, and $b_i \neq 1$, we observe that $h^{\prime}_i$ vanishes only at $x=0$, and in particular, $x=0$ must be the only sign alternation point of $h^{\prime}_i$. Finally, it is elementary to check that 
\begin{equation}\nonumber 
\lim_{x\to \infty} h_i(x) = \lim_{x\to -\infty} h_i(x) = 0 \enspace.
\end{equation}
Therefore, $0$ and $h_i(0)$ must be the global optima of $h_i$. Hence, \eqref{eq:SimpleInf} yields

\begin{equation*}
	\Inf_{e} 
		= 	\max\left\{ \left|\sup_{x}h_i(x)\right|, \ \left|\inf_{x}h_i(x)\right|\right\} 
		= 	|h_i(0)| 
		=	\frac{\left|1-b_i^{2}\right|}{\left( b_i+1\right) \left(1+b_i\right)}
		=	\frac{\left|1-b_i\right|}{ 1+b_i} \enspace,
\end{equation*}
as desired. 
\end{proof}

\section{Theorem \ref{thrm:GeneralDeltary} - Proof of non-reconstruction.}\label{sec:thrm:GeneralDeltary}

In order to prove Theorem \ref{thrm:GeneralDeltary}, first consider the distribution we define in \eqref{def:GibbDistrTreeRecur}, 
in Section \ref{sec:TreeRecurInf}. That is,  for a  tree $T=(V,E)$ rooted at $r$,  assume that we are given the  parameters 
$\beta>0$  and $\{J_e\}_{e\in E}$, such that $J_e\in \mathbb{R}$.  Note that $J_e$ are fixed real constants, i.e.,  they are not random numbers. 

We define the Gibbs distribution $\mu=\mu_{\beta, \{J_e\}}$ on the tree $T$ such that  each $\sigma \in \{\pm 1\}^V$   is assigned probability 
measure $\mu(\sigma)$ such that 
\begin{align}\label{def:NonUniform}
\mu(\sigma) &\propto \textstyle \exp\left( \beta\sum_{\{w,u\}\in E} {\bf 1}\{\sigma(u)=\sigma(w)\} \cdot J_{\{u,w\}}   \right) \enspace .
\end{align}
For two vertices $u,w$ in $T$, write $\mathrm{path}(u,w)$ for the set of edges in the unique path from $u$ to $w$.
Building on Theorem \ref{thrm:PairwiseInfluenceBound},  for the aforementioned distribution we have the following result:

\begin{theorem}\label{thrm:ReconBoundNonUniform} 
For integer $h>0$,  $\beta>0$, and $\{J_e\}_{e\in E}$ such that $J_e\in \mathbb{R}$, 
let $T=(V,E)$ be an arbitrary tree of height $h$, rooted at vertex $r$, and let   
the Gibbs distribution $\mu=\mu_{\beta, \{J_e\}}$ on $T$ be defined as in \eqref{def:NonUniform}.

We have that
\begin{align}\label{eq:thrm:ReconBoundNonUniform}
 \lnorm \mu_h(\cdot \ |\ \bsigma(r)=+ 1) -\mu_h(\cdot \ |\ \bsigma(r)=-1 )  \rnorm_{\rm TV}  &\le
\sqrt{\sum_{v\in \Lambda}\; \prod_{e\in\mathrm{path}(r,v)} \Inf_e^2 } \enspace,
\end{align}
where $\Inf_e$ is the influence of edge $e$ defined in \eqref{eq:whatisgamma}, and 
 $\Lambda$ is  the set of vertices at distance $h$ from the root.  
\end{theorem}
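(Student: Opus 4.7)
My plan is to combine the pointwise influence bound of Theorem~\ref{thrm:PairwiseInfluenceBound} with a tensorisation step that aggregates pairwise bounds into a joint bound at the leaves. The sum of squared products on the right-hand side of \eqref{eq:thrm:ReconBoundNonUniform} strongly suggests working with a quadratic (Hellinger-type) divergence, which tensorises additively under independence, and converting back to total variation only at the end via $\|\cdot\|_{\rm TV}^2\le 2H^2(\cdot,\cdot)$.

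I would argue by induction on the height of the tree, with inductive hypothesis that for every subtree $T_u$ and every set $\Lambda'$ of vertices at some common distance from $u$,
\[
\|\mu^{T_u}_{\Lambda'}(\cdot\,|\,u=+1) - \mu^{T_u}_{\Lambda'}(\cdot\,|\,u=-1)\|^2_{\rm TV} \le \sum_{v \in \Lambda'} \prod_{e \in \mathrm{path}(u,v)} \Inf_e^2.
\]
Two structural facts of the model drive the induction. First, conditioning on the spin at $r$ makes the subtrees rooted at the children $w_1,\ldots,w_\Delta$ of $r$ mutually independent, so $\mu_{\Lambda}(\cdot\,|\,r=\pm 1)$ is a product of the per-subtree measures $\mu_{\Lambda_i}(\cdot\,|\,r=\pm 1)$, and Hellinger tensorisation yields $H^2(\mu^{+}_{\Lambda},\mu^{-}_{\Lambda}) \le \sum_i H^2(\mu^{+}_{\Lambda_i},\mu^{-}_{\Lambda_i})$. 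Second, the Bayesian mixture representation $\mu_{\Lambda_i}(\cdot\,|\,r=s) = \alpha_s\,\mu^{T_{w_i}}_{\Lambda_i}(\cdot\,|\,w_i=+1) + (1-\alpha_s)\,\mu^{T_{w_i}}_{\Lambda_i}(\cdot\,|\,w_i=-1)$, combined with the bound $|\alpha_+-\alpha_-|\le\Inf_{\{r,w_i\}}$ coming from Theorem~\ref{thrm:PairwiseInfluenceBound} applied to the edge $\{r,w_i\}$, produces the extra edge factor $\Inf^2_{\{r,w_i\}}$ that extends each inductive path by one edge.

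Composing these two moves and closing with the Hellinger-to-TV inequality recovers \eqref{eq:thrm:ReconBoundNonUniform}. The main obstacle I anticipate is precisely the Hellinger-TV bookkeeping across the two moves: the vertex-to-child scaling is most naturally expressed in total variation, whereas the cross-subtree aggregation is cleanest in squared Hellinger, so the two operations must be interleaved without incurring multiplicative losses. The key aid here is the global spin-flip symmetry $\mu(\sigma)=\mu(-\sigma)$ of \eqref{def:NonUniform}, which forces $\mu_{\Lambda}(\cdot\,|\,r=+1)$ and $\mu_{\Lambda}(\cdot\,|\,r=-1)$ to be exact sign-flips of each other. This symmetry tightens the Hellinger-TV conversion in our setting and is exactly what lets the induction close with the stated constant; in spirit this is the ``down-up coupling'' argument of Bhatnagar--Vera--Vigoda--Weitz adapted to the inhomogeneous Ising model on~$T$.
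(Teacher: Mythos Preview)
Your plan has a genuine gap at exactly the point you flag as an obstacle. The mixture step through an edge gives the exact identity
\[
\bigl\|\mu^{+}_{\Lambda_i}-\mu^{-}_{\Lambda_i}\bigr\|_{\rm TV}=\Inf_{\{r,w_i\}}\cdot\bigl\|\mu^{T_{w_i}}_{\Lambda_i}(\cdot\,|\,w_i=+1)-\mu^{T_{w_i}}_{\Lambda_i}(\cdot\,|\,w_i=-1)\bigr\|_{\rm TV},
\]
but there is no companion inequality $H^{2}(\mu^{+}_{\Lambda_i},\mu^{-}_{\Lambda_i})\le \Inf_{\{r,w_i\}}^{2}\,H^{2}(p_i,q_i)$. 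A direct computation (using the symmetry $\alpha_-=1-\alpha_+$ you invoke) yields only
\[
H^{2}(\mu^{+}_{\Lambda_i},\mu^{-}_{\Lambda_i})\le \Inf_{\{r,w_i\}}^{2}\cdot\tfrac{1}{2}\sum_\tau\frac{(p_i(\tau)-q_i(\tau))^{2}}{p_i(\tau)+q_i(\tau)},
\]
and the Le~Cam quantity on the right dominates $H^{2}(p_i,q_i)$, so the induction in Hellinger does not reproduce itself. Conversely, if you carry $\|\cdot\|_{\rm TV}^{2}$ through the edge and then try to aggregate across children, $\|\cdot\|_{\rm TV}^{2}$ is not subadditive under products. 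Switching between the two at every level accumulates a constant per level. The spin-flip symmetry does \emph{not} tighten the Hellinger--TV conversion: already for a single $\{\pm1\}$ coordinate with $p=(a,1-a)$, $q=(1-a,a)$ one has $\|p-q\|_{\rm TV}^{2}=(2a-1)^{2}$ while $H^{2}(p,q)=1-2\sqrt{a(1-a)}$, and these are not equal.

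What you call ``in spirit the down-up coupling of Bhatnagar--Vera--Vigoda--Weitz'' is in fact a different mechanism, and it is what the paper uses. One introduces a distribution $\mdownup^{s}$ at the \emph{root}: broadcast from $r=s$ down to $\Lambda$, then resample the root conditioned on the resulting leaf configuration. The key step (Lemma~\ref{lem:UpDwnUpperBound}) is
\[
\bigl\|\mu_h^{+}-\mu_h^{-}\bigr\|_{\rm TV}^{2}\le\bigl\|\mdownup^{+}-\mdownup^{-}\bigr\|_{\rm TV},
\]
which follows from Bayes' rule, the symmetry $\mu_r(\pm1)=\tfrac12$, and Cauchy--Schwarz; it replaces the high-dimensional leaf comparison by a one-dimensional root comparison and is where the square root in \eqref{eq:thrm:ReconBoundNonUniform} really comes from. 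The right-hand side is then bounded (Proposition~\ref{prop:UpDownDistance}) without any tensorisation at the leaves: one couples $\mu_\Lambda^{+}$ and $\mu_\Lambda^{-}$ greedily down the tree, so that leaf $v$ disagrees with probability at most $\prod_{e\in\mathrm{path}(r,v)}\Inf_e$, and then invokes Theorem~\ref{thrm:PairwiseInfluenceBound} for the effect of a single-leaf disagreement on the root marginal, contributing a second factor $\prod_{e\in\mathrm{path}(r,v)}\Inf_e$. The product of the two traversals gives $\prod_e\Inf_e^{2}$ per leaf, summed over $\Lambda$, with no divergence bookkeeping required.
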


For the above, recall that $\mu_{h}$ is the marginal of $\mu$ on the set of vertices at distance $h$ from the root, i.e., the set $\Lambda$. 
In light of  Theorem~\ref{thrm:ReconBoundNonUniform},  the non-reconstruction part of Theorem~\ref{thrm:GeneralDeltary}  follows as a corollary.

\begin{proof}[Proof of Theorem \ref{thrm:GeneralDeltary} - Non-Reconstruction]
Consider the Gibbs distribution $\bmu=\bmu_{\beta, \phi}$ on the $\Delta$-ary tree $T=(V,E)$, and let
 $\Delta_{\rm KS}=\Delta_{\rm KS}(\beta,\phi)$ be defined as in  \eqref{eq:DeltaKSBPhi}.
We need to show that for $\Delta<\Delta_{\rm KS}$ we have  
\begin{equation}\label{eq:Target4NonReconThrm:GeneralDeltary}
\limsup_{h \to \infty}
\Exp\left[  \lnorm \bmu_h(\cdot \ |\ \bsigma(r)=+ 1) -\bmu_h(\cdot\ |\ \bsigma(r)=-1)  \rnorm_{\rm TV} \right] =0 \enspace.
\end{equation}
Given the $\sigma$-algebra generated 
by the coupling  parameters $\{\bJ_e\}_{e\in E}$, from Theorem \ref{thrm:ReconBoundNonUniform}, we have  that 
\begin{align}\label{eq:Base4thrm:GeneralDeltary}
\left( \lnorm \bmu_h(\cdot \ |\ \bsigma(r)=+ 1) -\bmu_h(\cdot\ |\ \bsigma(r)=-1)  \rnorm_{\rm TV} \right)^2 &\leq \sum_{v\in \Lambda}\;\prod_{e\in\mathrm{path}(r,v)} \mathbold{\Inf}_e^2\enspace, 
\end{align}
where recall that $\Lambda$ is the set of vertices at distance $h$ from the root $r$, while for every $e\in E$ we have that 
\[ \textstyle\mathbold{\Inf}_{e} =  \frac{\left|1-\exp\left( \beta \bJ_{e } \right) \right|}{1+\exp\left( \beta \bJ_{e}\right) }\enspace.\]
For the sake of brevity, we let 
\begin{align}
\bmu^+(\cdot )&= \bmu(\cdot \ |\ \bsigma(r)=+ 1)\enspace
, & \textrm{and} && \bmu^-(\cdot )&= \bmu(\cdot \ |\ \bsigma(r)=-1 )\enspace. 
\end{align}
Then, from \eqref{eq:Base4thrm:GeneralDeltary} we have that
\begin{equation}\label{eq:ExpQSqareBound}
\Exp \left[ \left( \lnorm \bmu^+_h(\cdot ) -\bmu^-_h(\cdot)  \rnorm_{\rm TV} \right)^2\right]
\le \sum_{v\in \Lambda }\;\prod_{e\in\mathrm{path}(r,v)} \Exp \left[ \mathbold{\Inf}_e^2\right]  \enspace,
\end{equation}
where the expectation is with respect to random variable $\bJ_e$. 
We derive the r.h.s. of the equation above  using the observation that  each $\mathbold{\Gamma}_e$ depends only on $\bJ_e$, and 
the coupling parameters $\{\bJ_e\}$, are assumed to be {\em independent} with each other. 

Furthermore, our assumption that $\Delta<\Delta_{\rm KS}$, corresponds to having that 
$\Exp \left[ \mathbold{\Inf}_e^2\right]<\Delta^{-1}$.  Hence, there exists $\varepsilon\in (0,1]$  such that 
\begin{align*}
\Exp \left[ \mathbold{\Inf}_e^2\right]  = \frac{1-\varepsilon}{\Delta}\enspace .
\end{align*}
Using the above, \eqref{eq:ExpQSqareBound}, and the fact that $T$ is $\Delta$-ary, and hence, the size of $\Lambda$ is $\Delta^{h}$, we get that
\begin{align*}
\Exp \left[ \left( \lnorm \bmu^+_h(\cdot ) -\bmu^-_h(\cdot)  \rnorm_{\rm TV} \right)^2 \right] \leq (1-\varepsilon)^{h}\enspace. 
\end{align*}

\noindent
Invoking Markov's inequality we further get that
\begin{equation*}
\Pr\left[ \left( \lnorm \bmu^+_h(\cdot ) -\bmu^-_h(\cdot)  \rnorm_{\rm TV} \right)^2
\ge (1-\varepsilon)^{h/2}\right] \le (1-\varepsilon)^{h/2} \enspace,
\end{equation*}
or, since $\lnorm \bmu^+_h(\cdot ) -\bmu^-_h(\cdot)  \rnorm_{\rm TV}  \geq 0$, we equivalently have that  
\begin{equation*}
\Pr\left[  \lnorm \bmu^+_h(\cdot ) -\bmu^-_h(\cdot)  \rnorm_{\rm TV}
\ge (1-\varepsilon)^{h/4}\right] \le (1-\varepsilon)^{h/2}  \enspace.
\end{equation*}
Furthermore, since $\ \lnorm \bmu^+_h(\cdot ) -\bmu^-_h(\cdot)  \rnorm_{\rm TV}  \le 1$ and $\Pr\left[ \lnorm \bmu^+_h(\cdot ) -\bmu^-_h(\cdot)  \rnorm_{\rm TV} < (1-\varepsilon)^{h/4}\right] \leq 1$,  we have
\begin{align*}
\lefteqn{
\Exp\left[  \lnorm \bmu^+_h(\cdot ) -\bmu^-_h(\cdot)  \rnorm_{\rm TV} \right] } 
\vspace{1cm}\\
& \leq (1-\varepsilon)^{h/4}
\Pr\left[\lnorm \bmu^+_h(\cdot ) -\bmu^-_h(\cdot)  \rnorm_{\rm TV}< (1-\varepsilon)^{h/4}\right]
+ 
\Pr\left[\lnorm \bmu^+_h(\cdot ) -\bmu^-_h(\cdot)  \rnorm_{\rm TV}\ge (1-\varepsilon)^{h/4}\right]
\\
&\le  (1-\varepsilon)^{{h}/{4}} +(1-\varepsilon)^{{h}/{2}} \le 2(1-\varepsilon)^{{h}/{4}}.
\end{align*}
The above implies \eqref{eq:Target4NonReconThrm:GeneralDeltary}, and concludes the non-reconstruction part of
Theorem \ref{thrm:GeneralDeltary}.
\end{proof}

\section{Proof of Theorem \ref{thrm:ReconBoundNonUniform}}\label{sec:thrm:ReconBoundNonUniform}

Recall that we are dealing with the Gibbs distribution
$\mu=\mu_{\beta, \{J_e\}}$ on the tree $T$ of height $h$.
With respect to $\mu$ and every edge $e$ of the tree, we obtain the influence $\Inf_e$ in the standard way. 

To prove Theorem ~\ref{thrm:ReconBoundNonUniform}, we use the idea of down-up coupling from \cite{BhatVeraVigodaWeitz}, which also relies on ideas in \cite{SlyReconColouring}. 
To this end,  let us introduce a few notions. For $s\in \{\pm1\}$, we let $\mdownup^{s}$ be the distribution on the configuration at the root $r$ of the tree $T=(V,E)$ that is induced by
the following experiment.  Recall that $\Lambda$ is the set of vertices at distance $h$ from the root.  First, we obtain the configuration
$\bsigma\in \{\pm 1\}^V$ on the tree from the measure  $\mu^s(\cdot )$, where 
\begin{align*}
\mu^s(\cdot )&= \mu(\cdot \ |\ \bsigma(r)=s) \enspace. 
\end{align*}
Next, we erase all the assignments apart from those at the vertices in $\Lambda$. Then, we obtain a new configuration, $\btau$, from the distribution  $\mu^{\Lambda, \bsigma}$, i.e., the distribution $\mu$ conditional on the configuration 
of set $\Lambda$ be as in $\bsigma$. With the measure $\mdownup^{s}$ we denote the distribution of $\btau(r)$, i.e., the assingment
of $\btau$ at the root $r$. 

Recall now that for $s \in \{\pm1\}$, we write  $\mu^{s}_{h}(\cdot)$ for the marginal of $\mu$ at the vertices at distance $h$ from the root, conditioned on $\bsigma(r) = s$.  The following lemma  was essentially proved for standard Gibbs distributions in \cite{BhatVeraVigodaWeitz}. For the sake of completeness, we present our own proof for the spin-glasses in the Appendix \ref{sec:lem:UpDwnUpperBound}.

\begin{lemma}[\cite{BhatVeraVigodaWeitz} ]\label{lem:UpDwnUpperBound}
For integer $h>0$, let $T=(V,E)$ be an arbitrary tree of height $h$, rooted at vertex $r$.  
For any $\beta>0$, for any $\{J_e\}_{e\in E}$ with $J_e\in \mathbb{R}$,   let  
the Gibbs distribution $\mu=\mu_{\beta, \{J_e\}}$ on $T$ be defined as in~\eqref{def:NonUniform}.  Then,
\begin{align}\label{eq:lem:UpDwnUpperBound}
 \lnorm \mu^{+}_{h}(\cdot) -\mu_{h}^{-}(\cdot)  \rnorm_{\rm TV}  &\leq
 \sqrt{\lnorm\mdownup^{+}(\cdot)-\mdownup^{-}(\cdot)\rnorm_{\rm TV} } \enspace.
\end{align}
\end{lemma}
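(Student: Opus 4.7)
The plan is to reduce both sides of \eqref{eq:lem:UpDwnUpperBound} to the variance of a single random variable, and then close the gap by Jensen's inequality. First I would observe that the Hamiltonian in \eqref{def:NonUniform} is invariant under the global sign flip $\sigma \mapsto -\sigma$, so $\mu(\bsigma(r)=+1) = \mu(\bsigma(r)=-1) = \tfrac{1}{2}$, regardless of $\{J_e\}_{e\in E}$. This root-symmetry is the crucial (and essentially the only) feature of the present symmetric spin-glass setting that is needed.

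Write $p(\tau) = \mu(\bsigma(r)=+1 \mid \Lambda, \tau)$ for $\tau \in \{\pm 1\}^{\Lambda}$. By Bayes' rule and the identity above,
\begin{equation*}
\mu^{+}_{h}(\tau) = 2\mu_h(\tau)\, p(\tau) \enspace, \qquad \mu^{-}_{h}(\tau) = 2\mu_h(\tau)\,(1-p(\tau)) \enspace.
\end{equation*}
Substituting and simplifying would then yield the two closed forms
\begin{equation*}
\lnorm \mu^{+}_{h} - \mu^{-}_{h} \rnorm_{\rm TV} = \Exp\bigl[\,|2p(\btau)-1|\,\bigr] \enspace, \qquad \lnorm \mdownup^{+} - \mdownup^{-}\rnorm_{\rm TV} = 4\,\mathrm{Var}(p(\btau)) \enspace,
\end{equation*}
where $\btau \sim \mu_h$ is unconditioned. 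The second identity would follow by unrolling the down--up experiment: conditional on $\bsigma(r)=s$, one samples $\btau\sim\mu^{s}_{h}$ and then sets the new root to $+1$ with probability $p(\btau)$, so $\mdownup^{+}(+1) - \mdownup^{-}(+1) = 4\Exp[p(\btau)^2]-1 = 4\,\mathrm{Var}(p(\btau))$, using the tower property $\Exp[p(\btau)] = \mu(\bsigma(r)=+1) = \tfrac{1}{2}$ at the last step.

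Given these two identities, \eqref{eq:lem:UpDwnUpperBound} follows immediately from Jensen's inequality applied to $|2p(\btau)-1|$, whose mean is zero:
\begin{equation*}
\bigl(\Exp[\,|2p(\btau)-1|\,]\bigr)^{2} \le \Exp\bigl[(2p(\btau)-1)^{2}\bigr] = 4\,\mathrm{Var}(p(\btau)) \enspace.
\end{equation*}
The main step to check carefully is the derivation of $\mdownup^{+}(+1)-\mdownup^{-}(+1)$ above; the rest is bookkeeping. The symmetry $\mu(\bsigma(r)=+1)=\tfrac{1}{2}$ is essential, because without it the Bayes identities pick up an unwanted factor $1/(4ab)$ with $a=\mu(\bsigma(r)=+1)$, which is exactly what would spoil the clean Cauchy--Schwarz step at the end.
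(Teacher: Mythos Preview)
Your proposal is correct and is essentially the same argument as the paper's proof. Setting $p(\tau)=\mu_r^{\Lambda,\tau}(+1)$, the paper likewise derives $\lnorm \mu_h^{+}-\mu_h^{-}\rnorm_{\rm TV}=\Exp_{\tau\sim\mu_h}\bigl[\,|2p(\tau)-1|\,\bigr]$ and $\lnorm \mdownup^{+}-\mdownup^{-}\rnorm_{\rm TV}=\Exp_{\tau\sim\mu_h}\bigl[(2p(\tau)-1)^2\bigr]$ via Bayes' rule and the root symmetry $\mu(\bsigma(r)=+1)=\tfrac12$, and then concludes by Cauchy--Schwarz, which is exactly your Jensen step; your variance formulation is a cosmetic repackaging. (One small wording slip: it is $2p(\btau)-1$, not $|2p(\btau)-1|$, that has mean zero --- but your displayed inequality uses this correctly.)
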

We prove the upper bound in \eqref{eq:thrm:ReconBoundNonUniform} be means of the Lemma \ref{lem:UpDwnUpperBound}, i.e., 
by bounding appropriately the quantity on the r.h.s. of \eqref{eq:lem:UpDwnUpperBound}.  Specifically, we use the bound obtained in the following
proposition.

\begin{proposition}\label{prop:UpDownDistance}
For integer $h>0$, let $T=(V,E)$ be an arbitrary tree of height $h$, rooted at vertex $r$, and write $\Lambda$ for the set of vertices
at distance $h$ from the root. For  each $e \in E$, let  $\Inf_e$  be the influence of edge $e$, given by \eqref{eq:whatisgamma}. Then,
\begin{align} \label{eq:prop:UpDownDistance}
\lnorm\mdownup^{+}(\cdot)-\mdownup^{-}(\cdot)\rnorm_{\rm TV}& \leq 
\sum_{v\in \Lambda}\;\prod_{e\in\mathrm{path}(r,v)} \Inf_e^2 \enspace,
\end{align}
where $\mdownup^{+}$, $\mdownup^{-}$ are as in Lemma~\ref{lem:UpDwnUpperBound}.

\end{proposition}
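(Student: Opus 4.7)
The plan is to combine Theorem~\ref{thrm:PairwiseInfluenceBound} (pairwise influences) with a recursively constructed coupling of the conditional leaf distributions. Write $\mu^{s}:=\mu(\cdot\,|\,r,s)$ for $s\in\{\pm1\}$. For any coupling $\pi$ of $(\bsigma^+,\bsigma^-)$ with marginals $(\mu^+_\Lambda,\mu^-_\Lambda)$, the definition of $\mdownup^{s}$ gives
\begin{align*}
\lnorm \mdownup^+-\mdownup^-\rnorm_{\rm TV}
&= \bigl|\,\mathbb{E}_{\mu^+_\Lambda}[\mu_r(+1\,|\,\Lambda,\cdot)]-\mathbb{E}_{\mu^-_\Lambda}[\mu_r(+1\,|\,\Lambda,\cdot)]\,\bigr|\\
&\leq \mathbb{E}_\pi\bigl[\,|\mu_r(+1\,|\,\Lambda,\bsigma^+)-\mu_r(+1\,|\,\Lambda,\bsigma^-)|\,\bigr].
\end{align*}
Setting $c_v:=\prod_{e\in\mathrm{path}(r,v)}\Inf_e$ and flipping the coordinates of $\bsigma^+$ into those of $\bsigma^-$ one leaf at a time, each individual single-coordinate flip is controlled by Theorem~\ref{thrm:PairwiseInfluenceBound} (with $u=v$, $w=r$, $M=\Lambda\setminus\{v\}$, and the appropriate intermediate pinning). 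Telescoping,
\[
|\mu_r(+1\,|\,\Lambda,\bsigma^+)-\mu_r(+1\,|\,\Lambda,\bsigma^-)|\leq \sum_{v\in\Lambda:\,\bsigma^+(v)\neq\bsigma^-(v)}c_v,
\]
so taking expectations under $\pi$ yields the intermediate inequality
\[
\lnorm\mdownup^+-\mdownup^-\rnorm_{\rm TV}\leq \sum_{v\in\Lambda} c_v\cdot\Pr_\pi\bigl[\bsigma^+(v)\neq\bsigma^-(v)\bigr].
\]

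The second step is to exhibit $\pi$ with $\Pr_\pi[\bsigma^+(v)\neq\bsigma^-(v)]\leq c_v$ for every leaf $v$, which immediately produces $\sum_v c_v\cdot c_v=\sum_v\prod_{e\in\mathrm{path}(r,v)}\Inf_e^2$ and hence~\eqref{eq:prop:UpDownDistance}. I build $\pi$ by induction on the height of $T$, leveraging the Markov property: given $X_r$, the subtrees rooted at the children of $r$ are mutually independent. Let $w_1,\ldots,w_\Delta$ be the children of $r$. For each $i$, couple $(X_{w_i}^+,X_{w_i}^-)$ by the optimal coupling of $\mu_{w_i}(\cdot|X_r=+1)$ and $\mu_{w_i}(\cdot|X_r=-1)$; a direct single-edge computation shows that $|\mu_{w_i}(+1|X_r=+1)-\mu_{w_i}(+1|X_r=-1)|=\tfrac{|\exp(\beta J_{\{r,w_i\}})-1|}{\exp(\beta J_{\{r,w_i\}})+1}=\Inf_{\{r,w_i\}}$ by Claim~\ref{cl:WhatIsGamma}, so the optimal coupling disagrees with probability exactly $\Inf_{\{r,w_i\}}$. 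Conditional on $X_{w_i}^+=X_{w_i}^-$, the two conditional distributions on $\Lambda_{T_{w_i}}$ coincide and I couple them to agree pointwise; conditional on $X_{w_i}^+\neq X_{w_i}^-$, I invoke the inductive hypothesis on the subtree $T_{w_i}$. For any leaf $v\in\Lambda_{T_{w_i}}$,
\[
\Pr_\pi[\bsigma^+(v)\neq\bsigma^-(v)]\leq \Inf_{\{r,w_i\}}\cdot\prod_{e\in\mathrm{path}(w_i,v)}\Inf_e=c_v,
\]
and couplings across distinct children are taken independently, which is legitimate because the subtrees are conditionally independent given $X_r$.

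The main technical item is to confirm the two inductive invariants: that conditioning on $X_{w_i}^+=X_{w_i}^-=s$ yields literally identical conditional leaf distributions in $T_{w_i}$ (tree Markov property), and that the one-edge optimal coupling attains disagreement probability exactly $\Inf_{\{r,w_i\}}$ (the explicit single-edge calculation above together with Claim~\ref{cl:WhatIsGamma}). Both hold cleanly for the $2$-spin Gibbs measure with arbitrary fixed $\{J_e\}_{e\in E}$, and combining Steps~1 and~2 produces the bound~\eqref{eq:prop:UpDownDistance}.
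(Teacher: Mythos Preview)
Your proposal is correct and follows essentially the same approach as the paper: the paper's Section~\ref{sec:prop:UpDownDistance} establishes the same intermediate inequality $\lnorm\mdownup^+ - \mdownup^-\rnorm_{\rm TV}\le\sum_{v\in\Lambda}c_v\cdot\Pr_\nu[\bsigma(v)\neq\btau(v)]$ via a coupling reduction (Lemma~\ref{lem:NoClue}) followed by an interpolation/telescoping argument (Lemma~\ref{lemma:UpDownBoundInterpol} and Proposition~\ref{pro:HammingPlusWorstCaseBoundUPDown}) combined with Theorem~\ref{thrm:PairwiseInfluenceBound}, and then constructs the same level-by-level optimal coupling (called the ``Down Coupling'') to bound the disagreement probability at each leaf $v$ by $c_v$. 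Your presentation is more compact but the ingredients and their assembly are identical.
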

The proof of Proposition \ref{prop:UpDownDistance} appears in Section \ref{sec:prop:UpDownDistance}. Now, Theorem \ref{thrm:ReconBoundNonUniform} follows by plugging \eqref{eq:prop:UpDownDistance} into
\eqref{eq:lem:UpDwnUpperBound}, i.e., we have that 
\begin{align*}
\lnorm \mu^{+}_h(\cdot) -\mu^{-}_h(\cdot) \rnorm_{\rm TV}  &\leq
 \sqrt{\lnorm\mdownup^{+}(\cdot)-\mdownup^{-}(\cdot)\rnorm_{\rm TV} }\le \sqrt{
\sum_{v\in \Lambda}\;\prod_{e\in\mathrm{path}(r,v)} \Inf_e^2}\enspace.
\end{align*}

\section{Proof of Proposition \ref{prop:UpDownDistance}}\label{sec:prop:UpDownDistance}

For $s \in \{\pm 1\}$, recall that  $\mu^{s}_{\Lambda}(\cdot)$ be the marginal of $\mu$ on  $\Lambda$,  conditional on the configuration at $r$ being $s$. For any configuration $\tau \in \{\pm 1\}^\Lambda$,  also recall that $\mu^{\Lambda,\tau}_r$ 
is the marginal of $\mu$ at the root $r$, conditional on the configuration at $\Lambda$  being $\tau$. 
 In order to prove Proposition~\ref{prop:UpDownDistance} we use the following result. 

\begin{lemma}\label{lem:NoClue} 
For integer $h>0$, let $T=(V,E)$ be an arbitrary tree of height $h$ rooted at vertex $r$, 
and write $\Lambda$ for the vertices at distance $h$ from the root.  
For any $\beta>0$, for any $\{J_e\}_{e\in E}$ such that $J_e\in \mathbb{R}$,   let  
the Gibbs distribution $\mu=\mu_{\beta, \{J_e\}}$ on $T$ be defined as in \eqref{def:NonUniform}.

Then, for any distribution $\nu:\{\pm 1\}^{\Lambda}\times \{\pm 1\}^{\Lambda}$, coupling of the marginals 
$\mu^{+}_{\Lambda}(\cdot)$  and $\mu^{-}_{\Lambda}(\cdot)$, we have 
\begin{align}
\lnorm\mdownup^{+}(\cdot)-\mdownup^{-}(\cdot)\rnorm_{\rm TV} &\le 
\Exp_{(\bsigma, \btau ) \sim \nu} \left[
	\left\| \mu^{\Lambda, \bsigma}_r (\cdot) - \mu^{\Lambda,\btau}_r (\cdot )\right\|_{\rm TV}
	\right]\enspace.
\end{align}
\end{lemma}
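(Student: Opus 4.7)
The plan is to prove this via a standard convexity/triangle-inequality argument for total variation distance, once we re-express the two laws $\mdownup^{+}$ and $\mdownup^{-}$ as mixtures indexed by the given coupling $\nu$.

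First I would rewrite the definition of $\mdownup^{s}$ in a more useful form. By the construction of the down-up process, for $s\in\{\pm 1\}$ and any spin $a\in\{\pm 1\}$ we have
\begin{equation*}
\mdownup^{s}(a) \;=\; \sum_{\sigma\in\{\pm 1\}^{\Lambda}} \mu^{s}_{\Lambda}(\sigma)\,\mu^{\Lambda,\sigma}_{r}(a) \enspace.
\end{equation*}
Now I would use the coupling $\nu$. Since $\nu$ has marginals $\mu^{+}_{\Lambda}$ and $\mu^{-}_{\Lambda}$ on its two coordinates, we can replace the single sum above by a double sum over $(\sigma,\btau)$:
\begin{equation*}
\mdownup^{+}(a) = \sum_{\sigma,\btau} \nu(\sigma,\btau)\,\mu^{\Lambda,\sigma}_{r}(a), \qquad
\mdownup^{-}(a) = \sum_{\sigma,\btau} \nu(\sigma,\btau)\,\mu^{\Lambda,\btau}_{r}(a) \enspace.
\end{equation*}
This is the crucial representation: both $\mdownup^{+}$ and $\mdownup^{-}$ are expressed as expectations of the \emph{same} random kernel $\mu^{\Lambda,\cdot}_{r}$, evaluated on coupled inputs drawn from $\nu$.

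Next I would subtract and apply the triangle inequality. We get
\begin{equation*}
\mdownup^{+}(a)-\mdownup^{-}(a) = \sum_{\sigma,\btau} \nu(\sigma,\btau)\bigl(\mu^{\Lambda,\sigma}_{r}(a)-\mu^{\Lambda,\btau}_{r}(a)\bigr),
\end{equation*}
so that
\begin{equation*}
\lnorm\mdownup^{+}-\mdownup^{-}\rnorm_{\rm TV}
= \tfrac{1}{2}\sum_{a}\bigl|\mdownup^{+}(a)-\mdownup^{-}(a)\bigr|
\le \sum_{\sigma,\btau} \nu(\sigma,\btau)\cdot \tfrac{1}{2}\sum_{a}\bigl|\mu^{\Lambda,\sigma}_{r}(a)-\mu^{\Lambda,\btau}_{r}(a)\bigr|,
\end{equation*}
and the inner sum is precisely $\lnorm\mu^{\Lambda,\sigma}_{r}-\mu^{\Lambda,\btau}_{r}\rnorm_{\rm TV}$. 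Recognising the outer sum as an expectation with respect to $\nu$ yields the claimed bound.

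There is no serious obstacle here: the whole content is the joint convexity of the total variation distance in its two arguments, combined with the observation that the coupling $\nu$ lets us ``synchronise'' the two mixture representations. The one technical care point is to make sure we actually use both marginal constraints on $\nu$ (one for $\mdownup^{+}$ and one for $\mdownup^{-}$), which is exactly what is needed to introduce the same coupled pair $(\sigma,\btau)$ in both expressions before taking the difference.
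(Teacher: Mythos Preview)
Your proposal is correct and is essentially the same argument as the paper's, just run in the opposite direction: the paper starts from the right-hand side, expands the expectation, applies the triangle inequality, and collapses the marginal sums to recover $\lnorm\mdownup^{+}-\mdownup^{-}\rnorm_{\rm TV}$, whereas you start from the left-hand side. The only cosmetic difference is that the paper exploits the two-spin fact $\lnorm\mu^{\Lambda,\sigma}_{r}-\mu^{\Lambda,\tau}_{r}\rnorm_{\rm TV}=\bigl|\mu^{\Lambda,\sigma}_{r}(+1)-\mu^{\Lambda,\tau}_{r}(+1)\bigr|$ while you use the general $\tfrac12\sum_a|\cdot|$ formula.
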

\begin{proof}
We have that 
\begin{align}
\lefteqn{
\Exp_{(\bsigma, \btau ) \sim \nu} \left[
\left\| \mu^{\Lambda, \bsigma}_r (\cdot) - \mu^{\Lambda,\btau}_r (\cdot )\right\|_{\rm TV}
\right]
} \hspace{3cm} \nonumber \\
&= \sum_{\sigma,\tau\in\{\pm1\}^{\Lambda} } 
	\left|  \mu^{\Lambda, \sigma}_r (+1) - \mu^{\Lambda,\tau}_r (+ 1)\right| \cdot\nu(\sigma,\tau ) \nonumber \\
&\ge \left|\sum_{\sigma , \tau \in\{\pm1\}^{\Lambda} } 
\left( \mu^{\Lambda, \sigma}_r (+1) - \mu^{\Lambda,\tau}_r (+ 1) \right) \cdot\nu(\sigma,\tau )\right| \nonumber \\
&= \left|\left(\sum_{\sigma \in\{\pm1\}^{\Lambda} } 
\mu^{\Lambda, \sigma}_r(+1)\cdot \mu^+_{\Lambda}(\sigma ) \right)
- 
\left(\sum_{\tau \in\{\pm1\}^{\Lambda} } 
\mu^{\Lambda, \tau}_r(+1) \mu^-_{\Lambda}(\tau ) \right)\right| \enspace. \label{eq:BasicIneqlem:NoClue}
\end{align}
The second derivation  is due the triangle inequality, while last equality holds since $\nu(\cdot, \cdot)$ is a coupling of 
$\mu^{+}_{\Lambda}(\cdot)$  and $\mu^{-}_{\Lambda}(\cdot)$, and thus for any $\sigma\in \{\pm 1\}^{\Lambda}$ 
we have that 
\begin{align*}
\sum_{\tau\in \{\pm 1 \}^{\Lambda}}\nu(\sigma,\tau) &= \mu^{+}_{\Lambda}(\sigma)\enspace, & \textrm{and} &&
\sum_{\tau\in \{\pm 1 \}^{\Lambda}}\nu(\tau, \sigma) &= \mu^{-}_{\Lambda}(\sigma) \enspace. 
\end{align*}
Furthermore, note that for any $s,t\in \{\pm 1\}$, we have that
\begin{align}\label{eq:downupis}
\mu^s_{\downarrow \uparrow}(t) &=\sum_{\sigma \in\{\pm\}^{\Lambda} } 
\mu^{\Lambda, \sigma}_r(t)\cdot \mu_{\Lambda}^{s}(\sigma ) \enspace. 
\end{align}
The above follows from the definition of $\mu^s_{\downarrow \uparrow}$, and the law of total probability.
Plugging \eqref{eq:downupis} into \eqref{eq:BasicIneqlem:NoClue}, we get that 
\begin{align*}
\Exp_{(\bsigma, \btau ) \sim \nu} \left[
\left\| \mu^{\Lambda, \bsigma}_r (\cdot) - \mu^{\Lambda,\btau}_r (\cdot )\right\|_{\rm TV}
\right]
&\geq  
\left|\mu^+_{\downarrow \uparrow}(+) 
- \mu^{-}_{\downarrow \uparrow}(+)\right| 
 = \|\mu^+_{\downarrow \uparrow}(\cdot) 
- \mu^{-}_{\downarrow \uparrow}(\cdot)\|_{\rm TV} \enspace.
\end{align*}
The above concludes the proof of Lemma \ref{lem:NoClue}.
\end{proof}

Lemma \ref{lem:NoClue} implies the following technical result, which we prove in Subsection 
\ref{sec:HammingPlusWorstCaseBoundUPDown} below.

\begin{proposition}\label{pro:HammingPlusWorstCaseBoundUPDown}
For any distribution $\nu:\{\pm 1\}^{\Lambda}\times \{\pm 1\}^{\Lambda}$, coupling of the marginals 
$\mu^{+}_{\Lambda}(\cdot)$  and $\mu^{-}_{\Lambda}(\cdot)$, the following is true: 

\begin{align}
\lnorm\mdownup^{+}(\cdot)-\mdownup^{-}(\cdot)\rnorm_{\rm TV} &\le 
\sum_{u\in \Lambda}\Exp_{(\bsigma, \btau)\sim \nu}[ {\bf 1}\{\bsigma(u)\neq \btau(u) \}] \cdot \max_{\eta\in \{\pm 1\}^{\Lambda}}
\left\| \mu^{\Lambda, \eta}_r (\cdot) - \mu^{\Lambda, \eta_u}_r (\cdot )\right\|_{\rm TV},
\end{align}
where $\eta_u\in \{\pm 1\}^{\Lambda}$  is obtained by $\eta$ by changing the configuration
at $u$, from $\eta(u)$ to its opposite. 
\end{proposition}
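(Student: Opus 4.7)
The plan is to combine Lemma \ref{lem:NoClue} with a standard path-coupling / telescoping argument on the Hamming distance. Lemma \ref{lem:NoClue} already reduces the problem to upper bounding $\Exp_{(\bsigma,\btau)\sim\nu}[\|\mu^{\Lambda,\bsigma}_r - \mu^{\Lambda,\btau}_r\|_{\rm TV}]$, so the remaining task is purely a pointwise inequality: for any two boundary configurations $\sigma,\tau \in \{\pm 1\}^{\Lambda}$,
\begin{equation*}
\|\mu^{\Lambda,\sigma}_r - \mu^{\Lambda,\tau}_r\|_{\rm TV} \le \sum_{u\in\Lambda} \mathbf{1}\{\sigma(u)\neq\tau(u)\} \cdot \max_{\eta\in\{\pm 1\}^{\Lambda}} \|\mu^{\Lambda,\eta}_r - \mu^{\Lambda,\eta_u}_r\|_{\rm TV}.
\end{equation*}
Once this is in hand, taking the expectation over $\nu$ and applying linearity gives exactly the bound claimed in the proposition.

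To prove the pointwise inequality, I would fix an arbitrary enumeration $u_1,\ldots,u_m$ of the vertices in $\Lambda$ and interpolate between $\sigma$ and $\tau$ one spin at a time. Concretely, set $\eta^{(0)}=\sigma$, and for $i=1,\ldots,m$ let $\eta^{(i)}$ agree with $\eta^{(i-1)}$ on all vertices except $u_i$, where its value is set to $\tau(u_i)$. Then $\eta^{(m)} = \tau$, and for every index $i$ either $\sigma(u_i)=\tau(u_i)$, in which case $\eta^{(i)}=\eta^{(i-1)}$, or $\sigma(u_i)\neq\tau(u_i)$, in which case $\eta^{(i)}$ is obtained from $\eta^{(i-1)}$ by flipping the spin at $u_i$, i.e.\ $\eta^{(i)} = (\eta^{(i-1)})_{u_i}$. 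By the triangle inequality for total variation distance,
\begin{equation*}
\|\mu^{\Lambda,\sigma}_r - \mu^{\Lambda,\tau}_r\|_{\rm TV} \le \sum_{i=1}^{m} \|\mu^{\Lambda,\eta^{(i-1)}}_r - \mu^{\Lambda,\eta^{(i)}}_r\|_{\rm TV},
\end{equation*}
and each nonzero term on the right is bounded by $\max_{\eta}\|\mu^{\Lambda,\eta}_r - \mu^{\Lambda,\eta_{u_i}}_r\|_{\rm TV}$, giving the pointwise bound.

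Combining this with Lemma \ref{lem:NoClue} and using linearity of expectation under $\nu$ yields
\begin{equation*}
\|\mdownup^{+}-\mdownup^{-}\|_{\rm TV} \le \sum_{u\in\Lambda} \Exp_{(\bsigma,\btau)\sim\nu}\bigl[\mathbf{1}\{\bsigma(u)\neq\btau(u)\}\bigr] \cdot \max_{\eta\in\{\pm 1\}^{\Lambda}} \|\mu^{\Lambda,\eta}_r - \mu^{\Lambda,\eta_u}_r\|_{\rm TV},
\end{equation*}
which is the desired inequality. I do not anticipate a serious obstacle here: Lemma \ref{lem:NoClue} does the real work of moving from the up-down distribution at the root to an expectation over the boundary coupling, and the remaining step is the classical observation that total variation along a single-site flip path telescopes into per-coordinate disagreements. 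The only minor point to verify is that the enumeration of $\Lambda$ is arbitrary, so the bound is valid for every choice and in particular matches the symmetric form in the statement.
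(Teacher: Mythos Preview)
Your proposal is correct and follows essentially the same approach as the paper: apply Lemma \ref{lem:NoClue}, then interpolate between $\bsigma$ and $\btau$ by single-site flips along a fixed enumeration of $\Lambda$, use the triangle inequality, and bound each nonzero step by the worst-case single-flip distance. The paper packages the telescoping step as a separate Lemma \ref{lemma:UpDownBoundInterpol} stated for the random pair $(\bsigma,\btau)$, whereas you prove the pointwise inequality for deterministic $\sigma,\tau$ and then take expectation, but the underlying argument is identical.
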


Proposition \ref{prop:UpDownDistance} follows by bounding appropriately the r.h.s. of the inequality above. 
Specifically, from Theorem \ref{thrm:PairwiseInfluenceBound} we have that
\begin{align}\label{eq:FirstBound4prop:UpDownDistance}
 \max_{\eta\in \{\pm 1\}^{\Lambda}}
\left\| \mu^{\Lambda, \eta}_r (\cdot) - \mu^{\Lambda, \eta_u}_r (\cdot )\right\|_{\rm TV}
& \leq  \prod_{e\in\mathrm{path}(r,u)} \Inf_e \enspace,
\end{align}
where recall that $\mathrm{path}(r,u)$ denotes the set of edges on the path from the root $r$ to the vertex $u\in \Lambda$. 

Moreover, we show that for any $u\in \Lambda$ we have 
\begin{align}\label{eq:SecondBound4prop:UpDownDistance}
\Exp_{(\bsigma, \btau)\sim \nu}[ {\bf 1}\{\bsigma(u)\neq \btau(u) \}] & \leq  \prod_{e\in\mathrm{path}(r,u)} \Inf_e \enspace.
\end{align}
It is immediate that Proposition \ref{prop:UpDownDistance} follows from Proposition \ref{pro:HammingPlusWorstCaseBoundUPDown} 
and \eqref{eq:FirstBound4prop:UpDownDistance}, \eqref{eq:SecondBound4prop:UpDownDistance}.

We prove \eqref{eq:SecondBound4prop:UpDownDistance} by   explicitly describing a coupling $\nu$ 
that achieves the aforementioned bound. We call this coupling the ``Down Coupling". 

\subsection*{Down Coupling}
Recall that we want to couple the distributions $\mu^+_{\Lambda}$ and $\mu^-_{\Lambda}$. Instead, we couple  $\mu^+$ and $\mu^-$, i.e., rather than
coupling the conditional Gibbs marginals at $\Lambda$, we couple the conditional measure. The coupling of  $\mu^+$ and $\mu^-$, trivially, specifies a coupling
for their marginals at $\Lambda$.

Write   $\zeta:\{\pm 1\}^V\times \{\pm 1\}^V\to [0,1]$ for the coupling of $\mu^+$ and $\mu^-$ we wish to define.  We specify $\zeta$ by describing how we 
generate two configurations $(\bsigma, \btau)\in \{\pm 1\}^V\times \{\pm 1\}^V$ which are distributed as in~$\zeta$. We generate the two 
configurations inductively.   In order to specify the configurations for the vertices at level $i$ of the tree, 
we use the configurations at level $i-1$. Suppose  that we need to decide the configuration for vertex $w$, while 
we already have the configurations for vertex $v$, the parent of $w$, i.e., we have both $\bsigma(v)$ and $\btau(v)$. Then, we use  
{\em maximal coupling} for the configuration at vertex $w$, i.e.,  couple  the distributions $\mu_{w}(\cdot \ |\ \bsigma(v))$
and $\mu_{w}(\cdot \ |\ \btau(v))$ so that $\Pr[\bsigma(w)\neq \btau(w)\ |\ \bsigma(v), \btau(v)]$ is {\em minimized}. This implies that
\begin{align*}
\Pr[\bsigma(w)\neq \btau(w)\ |\ \bsigma(v), \btau(v)] &=
\lnorm \mu_{w}(\cdot\mid \bsigma(v))-\mu_{w}(\cdot\mid \btau(v))\rnorm_{\rm TV} \enspace.
\end{align*}

With the above coupling, we need to find an upper bound for $\Exp_{(\bsigma, \btau)\sim \nu}[ {\bf 1}\{\bsigma(u)\neq \btau(u) \}]$,
where $u\in \Lambda$.  Ideally,  we would like to get the one in \eqref{eq:SecondBound4prop:UpDownDistance}.

For two vertices in the tree, $v$ and $w$, such that  $v$ is the parent of $w$,  we have the following:
Given $\bsigma(v), \btau(v)$, then in the above coupling we have that 
\begin{align*}
 \Pr[\bsigma(w)\neq \btau(w)\ |\ \bsigma(v), \btau(v)] &= 
 \left\{
 \begin{array}{ccl}
 \lnorm \mu_{w}(\cdot\mid \bsigma(v) =+1)-\mu_{w}(\cdot\mid \btau(v) =-1)\rnorm_{\rm TV} &  & \textrm{ if $\bsigma(v)\neq \btau(v)$}\\
  0 &  & \textrm{ if $\bsigma(v)=\btau(v)$}
 \end{array}
 \right. , 
\end{align*}
whereas,  
\begin{align*}
 \lnorm \mu_{w}(\cdot\mid \bsigma(v) =+1)-\mu_{w}(\cdot\mid \btau(v) =-1)\rnorm_{\rm TV} &=
  \left| \mu_{w}(-1 \mid \bsigma(v) =+1)-\mu_{w}(-1 \mid \btau(v) =-1)\right|
\\
&= \left|\frac{1}{1+\exp(J_{\{w,v\}})} - \frac{\exp(J_{\{w,v\}})}{1+\exp(J_{\{w,v\}})}\right|
\\
&=\Gamma_{\{w,v\}}\enspace,
\end{align*}
where the last equality is due to Claim~\ref{cl:WhatIsGamma}.
All the above imply that if the coupling generates a disagreement at vertex $v$, i.e., $\bsigma(v)\neq \btau(v)$, 
then the disagreement propagates at $w$ with probability $\Gamma_{\{w,v\}}$.  From this point on, it is elementary to verify that
\eqref{eq:SecondBound4prop:UpDownDistance} is true, concluding the proof of Proposition \ref{prop:UpDownDistance}.

\subsection{Proof of Proposition \ref{pro:HammingPlusWorstCaseBoundUPDown}}\label{sec:HammingPlusWorstCaseBoundUPDown}
Recall that $\Lambda$ is the set of vertices at distance $h$ from the root.  Consider an enumeration of the vertices in 
$\Lambda$, e.g., we have $w_1,  \ldots, w_{\ell}$, where $\ell=|\Lambda|$. 
For any two configurations $\sigma, \tau \in \{\pm1\}^{\Lambda}$, and any $i\in \{1, \ldots, \ell\}$, 
we define the {\em interpolating  sequence}  $\mathcal{I}(\sigma, \tau)=\{\xi_i\}_{i=0,\ldots, \ell}$ as follows:
for any $i=0,\ldots, \ell$ we have $\xi_{i}\in \{\pm 1\}^{\Lambda}$ such that
\begin{align*}
\xi_i(w_j)=\sigma(w_j) &\quad \textrm{, for all } j>i ,& \textrm{and} && 
\xi_i(w_j)=\tau(w_j) &\quad \textrm{, for all } j \leq i \enspace.
\end{align*}
Note that $\xi_0=\sigma$, while $\xi_{\ell}=\tau$.  Also note that any two $\xi_i$ and $\xi_{i+1}$ may be equal. 

\begin{lemma}\label{lemma:UpDownBoundInterpol}
For any distribution $\nu:\{\pm 1\}^{\Lambda}\times \{\pm 1\}^{\Lambda}$, coupling of the marginals 
$\mu^{+}_{\Lambda}(\cdot)$  and $\mu^{-}_{\Lambda}(\cdot)$, the following is true: 

Consider $(\bsigma, \btau)$ distributed as in $\nu$, and consider also the interpolating sequence  
\begin{align}
\mathcal{I}(\bsigma,\btau)=\{\bxi_i\}_{i=0,\ldots, \ell} \enspace, 
\end{align}
that is induced by $\bsigma$, $\btau$. We have that
\begin{align}
\lnorm\mdownup^{+}(\cdot)-\mdownup^{-}(\cdot)\rnorm_{\rm TV} &\le 
 \sum^{\ell}_{i=1} \Exp \left[ {\bf 1}\{\bxi_{i-1}\neq \bxi_i\} \cdot
	\left\| \mu^{\Lambda, \bxi_{i-1}}_r (\cdot) - \mu^{\Lambda,\bxi_i}_r (\cdot )\right\|_{\rm TV}
	\right]\enspace,
\end{align}
where the expectation is with respect to  $\bxi_i$ and $\bxi_{i-1}$ of the interpolating sequence 
$\cI(\bsigma, \btau)$.
\end{lemma}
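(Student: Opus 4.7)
The plan is to combine Lemma \ref{lem:NoClue} with a telescoping argument along the interpolating sequence. Lemma \ref{lem:NoClue} already gives
\[
\lnorm\mdownup^{+}(\cdot)-\mdownup^{-}(\cdot)\rnorm_{\rm TV} \;\le\;
\Exp_{(\bsigma, \btau ) \sim \nu} \left[
\left\| \mu^{\Lambda, \bsigma}_r (\cdot) - \mu^{\Lambda,\btau}_r (\cdot )\right\|_{\rm TV}\right],
\]
so it suffices to bound the integrand $\|\mu^{\Lambda,\sigma}_r-\mu^{\Lambda,\tau}_r\|_{\rm TV}$ pointwise, for each fixed pair $(\sigma,\tau) \in \{\pm 1\}^\Lambda \times \{\pm 1\}^\Lambda$, in terms of its interpolating sequence $\{\xi_i\}_{i=0,\ldots,\ell} = \mathcal{I}(\sigma,\tau)$.

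The key observation is that $\xi_0 = \sigma$ and $\xi_\ell = \tau$, and by construction consecutive elements $\xi_{i-1}$ and $\xi_i$ either agree entirely (when $\sigma(w_i) = \tau(w_i)$) or differ only at the single vertex $w_i$. Applying the triangle inequality along this chain gives
\[
\left\| \mu^{\Lambda, \sigma}_r (\cdot) - \mu^{\Lambda,\tau}_r (\cdot )\right\|_{\rm TV}
\;\le\; \sum_{i=1}^\ell \left\| \mu^{\Lambda, \xi_{i-1}}_r (\cdot) - \mu^{\Lambda,\xi_i}_r (\cdot )\right\|_{\rm TV}.
\]
Whenever $\xi_{i-1} = \xi_i$ the corresponding summand equals zero, so we may freely insert the indicator $\mathbf{1}\{\xi_{i-1}\neq \xi_i\}$ in front of each summand without altering the right-hand side, yielding
\[
\left\| \mu^{\Lambda, \sigma}_r (\cdot) - \mu^{\Lambda,\tau}_r (\cdot )\right\|_{\rm TV}
\;\le\; \sum_{i=1}^\ell \mathbf{1}\{\xi_{i-1}\neq \xi_i\}\cdot \left\| \mu^{\Lambda, \xi_{i-1}}_r (\cdot) - \mu^{\Lambda,\xi_i}_r (\cdot )\right\|_{\rm TV}.
\]

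Finally, I would take expectation over $(\bsigma,\btau)\sim \nu$ on both sides, noting that the interpolating sequence $\{\bxi_i\}$ is a deterministic function of $(\bsigma,\btau)$ and therefore measurable, so Fubini / linearity of expectation applies termwise. Chaining this with Lemma \ref{lem:NoClue} delivers the claimed inequality. No step presents a genuine obstacle here: the argument is essentially a triangle inequality packaged with a harmless indicator; the content of the lemma lies in having chosen the interpolating path $\mathcal{I}(\sigma,\tau)$ whose consecutive elements differ at a single vertex, so that the summands will later be amenable to the pairwise-influence bound from Theorem \ref{thrm:PairwiseInfluenceBound} used in Proposition \ref{pro:HammingPlusWorstCaseBoundUPDown}.
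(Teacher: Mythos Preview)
Your proposal is correct and follows essentially the same approach as the paper: invoke Lemma~\ref{lem:NoClue}, telescope $\mu_r^{\Lambda,\bsigma}-\mu_r^{\Lambda,\btau}$ along the interpolating sequence, apply the triangle inequality, insert the indicator for free, and use linearity of expectation. The only cosmetic difference is that the paper writes the telescoping sum inside the expectation before applying the triangle inequality, whereas you bound the integrand pointwise first and then average; the two orderings are interchangeable here.
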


\begin{proof} From Lemma \ref{lem:NoClue}, we have that 
\begin{align*}
\lnorm\mdownup^{+}(\cdot)-\mdownup^{-}(\cdot)\rnorm_{\rm TV}&\le
\Exp_{(\bsigma, \btau) \sim \nu} \left[
	\left\|  \mu^{\Lambda, \bsigma}_{r} (\cdot) - \mu^{\Lambda, \btau}_{r} (\cdot)\right\|_{\rm TV}
	\right] \\
&=\Exp_{(\bsigma, \btau) \sim \nu} \left[
	\left\| \sum^{\ell}_{i=1} \mu^{\Lambda, \bxi_{i-1}}_{r} (\cdot) - \mu^{\Lambda, \bxi_i}_{r} (\cdot)\right\|_{\rm TV}
	\right] \\
&\leq \Exp_{(\bsigma, \btau) \sim \nu} \left[
	 \sum^{\ell}_{i=1} \left\|\mu^{\Lambda, \bxi_{i-1}}_{r} (\cdot) - \mu^{\Lambda, \bxi_i}_{r} (\cdot)\right\|_{\rm TV}
	\right] \enspace. 
\end{align*}
The last inequality above follows from triangle inequality.  Furthermore, the last inequality is equivalent to the following one:
\begin{align}\nonumber
\lnorm\mdownup^{+}(\cdot)-\mdownup^{-}(\cdot)\rnorm_{\rm TV}
&\leq \Exp_{(\bsigma, \btau) \sim \nu} \left[ 
	 \sum^{\ell}_{i=1}{\bf 1}\{\bxi_{i-1}\neq \bxi_i\} \left\|\mu^{\Lambda, \bxi_{i-1}}_{r} (\cdot) - \mu^{\Lambda, \bxi_i}_{r} (\cdot)\right\|_{\rm TV}
	\right].
\end{align}
The lemma follows by applying the linearity of expectation on the inequality above. 
\end{proof}

From Lemma \ref{lemma:UpDownBoundInterpol} we get the following:
\begin{align*}
\lnorm\mdownup^{+}(\cdot)-\mdownup^{-}(\cdot)\rnorm_{\rm TV} &\le 
 \sum^{\ell}_{i=1} \Exp \left[ {\bf 1}\{\bxi_{i-1}\neq \bxi_i\}
	\left\| \mu^{\Lambda, \bxi_{i-1}}_r (\cdot) - \mu^{\Lambda,\bxi_i}_r (\cdot )\right\|_{\rm TV}
	\right]\\
		&\leq  \sum^{\ell}_{i=1} \Exp \left[ {\bf 1}\{\bxi_{i-1}\neq \bxi_i\} 	\right] \cdot  \max_{\eta\in \{\pm 1\}^{\Lambda}}
\left\| \mu^{\Lambda, \eta}_r (\cdot) - \mu^{\Lambda, \eta_{w_i}}_r (\cdot )\right\|_{\rm TV}\\
	&=  \sum^{\ell}_{i=1} \Exp \left[ {\bf 1}\{ \bsigma(w_i)\neq \btau(w_i)\} \right] \cdot 	 \max_{\eta\in \{\pm 1\}^{\Lambda}}
\left\| \mu^{\Lambda, \eta}_r (\cdot) - \mu^{\Lambda, \eta_{w_i}}_r (\cdot )\right\|_{\rm TV}.
\end{align*}
The last derivation follows by noting that $\bxi_{i-1}\neq \bxi_{i}$, if and only if, we have 
$\bsigma(w_i)\neq \btau(w_i)$. All the above conclude the proof of Proposition \ref{pro:HammingPlusWorstCaseBoundUPDown}.

\section{Proof of Theorem~\ref{thrm:GeneralGWtree} - Proof of Non-Reconstruction for Galton-Watson}\label{sec:thrm:GeneralGWtree}

First, let us briefly recall what we want to prove.
For  any real numbers $d>0$, and $\beta>0$, for any distribution $\phi$ on   $\mathbb{R}$
 let $\Delta_{\rm KS}=\Delta_{\rm KS}(\beta,\phi)$ be defined as in~\eqref{eq:DeltaKSBPhi}.
For any offspring distribution 
 $\zeta$ (on $\mathbb{Z}_{\geq 0}$), with expectation $d$ and {\em bounded second moment}, let $\bT$ be the Galton-Watson tree with offspring distribution $\zeta$,  and let the Gibbs distribution 
$\bmu=\bmu_{\beta,\phi}$ be defined as in \eqref{def:SpinGlass} ,  on the tree $\bT$. 

We want to show that  if $d<\Delta_{\rm KS}$, then $\bmu$  exhibits non-reconstruction  i.e.,
\begin{align}\nonumber
\limsup_{h\to\infty}  \mathbb{E}_{\bT } \left[  \ 
\mathbb{E}_{\bmu } \left [  \lnorm \bmu_{h}(\cdot \ |\ \bsigma(r)=+1) -\bmu_{h}(\cdot \ |\ \bsigma(r)=-1)  \rnorm_{\rm TV} \ | \ \bT \right] 
\ \right]=0 \enspace.
\end{align}
Using Theorem \ref{thrm:ReconBoundNonUniform}, which holds for arbitrary trees and taking expectations we have that 
\begin{align}\label{eq:UpperBoundCorToGW}
\Exp_{\bT,\bmu}\left[\left(\lnorm \bmu^{+}_h(\cdot) -\bmu^{-}_h(\cdot) \rnorm_{\rm TV} \right)^2\right] &\le 
\Exp_{\bT,\bmu}\left[\sum_{v\in \Lambda}\;\prod_{e\in\mathrm{path}(r,v)} \bInf_e^2\right]\enspace,
\end{align}
where $\Lambda$ denotes the set of vertices at distance $h$ from the root. 
Working out the r.h.s. of \eqref{eq:UpperBoundCorToGW}, using the law of total expectation, while conditioning on $\bT$, we get
\begin{align}\label{eq:BeforeToBrakeGw}
\Exp_{\bT,\bmu}\left[\sum_{v\in \Lambda}\;\prod_{e\in\mathrm{path}(r,v)} \bInf_e^2\right] 
	= 
\Exp_{\bT}\left[\Exp_{\bmu}\left[\sum_{v\in \Lambda}\;\prod_{e\in\mathrm{path}(r,v)} \bInf_e^2\middle| \bT\right]\right]
	= 
\Exp_{\bT}\left[|\Lambda|\cdot \left(\Exp_{\bmu}\left[\;\bInf_e^2\right]\right)^h\right]
\enspace,
\end{align}
since for fixed $\bT$, the influences  $\bInf_e$ are independent, and identically distributed. Recalling now that $\Delta_{\rm KS} = \left( \mathbb{E}_\mu[\bInf_e^2 ] \right)^{-1}$, (notice that $\Delta_{\rm KS}$ does not depend on $\bT$), and that the offspring distributions of the vertices of $\bT$ is $\zeta$ with expectation $d$, we can rewrite  \eqref{eq:BeforeToBrakeGw} as
\begin{align*}
\Exp_{\bT,\bmu}\left[\sum_{v\in \Lambda}\;\prod_{e\in\mathrm{path}(r,v)} \bInf_e^2\right] 
	= 
\Exp_{\bT}\left[|\Lambda|\cdot \left(\Delta_{\rm KS}\right)^{-h}\right] 
	=\Delta_{\rm KS}^{-h} \cdot d^h
\enspace.
\end{align*}

Per our assumption $\Delta_{\rm KS} > d$,  there exists  $\varepsilon \in (0,1]$ such that $(1-\varepsilon)\cdot\Delta_{\rm KS} =d$. Combining the above with \eqref{eq:UpperBoundCorToGW} we get that
\begin{align*}
\Exp_{\bT,\bmu}\left[\left(\lnorm \bmu^{+}_h(\cdot) -\bmu^{-}_h(\cdot) \rnorm_{\rm TV} \right)^2\right] \le 
\Delta_{\rm KS}^{-h} \cdot d^h \le (1-\varepsilon)^h \enspace.
\end{align*}
Invoking  Markov's inequality, similarly to the proof of the non-reconstruction claim of 
Theorem \ref{thrm:GeneralDeltary} in Section \ref{sec:thrm:GeneralDeltary}, we get that 
\[
\mathbb{E}_{\bT } \left[  \ 
\mathbb{E}_{\bmu } \left [  \lnorm \bmu_{h}(\cdot \ |\ \bsigma(r)=+1) -\bmu_{h}(\cdot \ |\ \bsigma(r)=-1)  \rnorm_{\rm TV} \ | \ \bT \right] 
\ \right]
\le 2(1-\varepsilon)^{\frac{h}{4}} \enspace,
\]
so that taking limits as $h$ goes to infinity, gives the desired result.

\section{Theorem \ref{thrm:GeneralDeltary} - Proof of reconstruction.}\label{sec:ProofGeneralReconstruction}

Here we prove the reconstruction part of Theorem \ref{thrm:GeneralDeltary}. Before we delve into the proof, let us recall our setup.  For an integer $\Delta>0$, let $T=(V,E)$ be the $\Delta$-ary tree rooted at $r$. 
Let also  $\phi$ be a distribution on   $\mathbb{R}$, and let $\{\bJ_e\}_{e\in E}$ be i.i.d. random variables, each distributed as in $\phi$. For a real number $\beta \ge 0$, recall that the probability measure $\bmu=\bmu_{\beta,\phi}(\sigma)$ on $\{\pm1\}^V$ is defined by 
\begin{align}\label{def:XanaSpinGlass}
\bmu_{\beta,\phi}(\sigma) &\propto \textstyle \exp\left( \beta\sum_{\{u,w\}\in E} {\bf 1}\{\sigma(u)=\sigma(w)\} \cdot \bJ_{\{u,w\}}   \right) \enspace .
\end{align}
In this setting, for $\bJ$ distributed as 
in $\phi$ we define
\begin{align}\label{eq:XANADeltaKSBPhi}
\Delta_{\rm KS}(\beta,\phi) &= 
\textstyle \left( \mathbb{E}\left [   \left (\frac{1-\exp(\beta \bJ)}{1+\exp(\beta \bJ)}  \right)^2 \right] \right)^{-1}\enspace,
\end{align}
where the expectation is over the random variable $\bJ$.
For an integer $h > 0 $, we write $\Lambda$ for the set of vertices at distance $h$ from the root $r$. We also write 
$\bmu_{\Lambda}^{+}$, and $\bmu_{\Lambda}^{-}$ for the marginal of $\bmu_{\beta,\phi}$ on the set $\Lambda$ conditioned on root being $+1$ and $-1$, respectively. We want to prove that if  $\Delta>\Delta_{\rm KS}$, then $\bmu_{\beta,\phi}$ exhibits reconstruction, i.e., 

\begin{align}\label{eq:ReconWantToProve}
\limsup_{h\to\infty} \mathbb{E} \left [ \;\lnorm \bmu_{\Lambda}^{+}(\cdot) -\bmu_{\Lambda}^{-}(\cdot)  \rnorm_{\rm TV} \;\right]>0 \enspace.
\end{align}

To distinguish between the two layers of randomness considered here (spin configurations are random having distribution $\bmu$, and $\bmu$ itself is random as $\bJ$ is a random variable), we use $\mExp{\cdot }$ to denote expectation with 
respect to the measure $\bmu$, and reserve $\Exp[\cdot]$ for expectations taken with respect to the random variable of the couplings $\{\bJ_e\}$.

In the same spirit as in \cite{EvKenPerSchulConConIsingTree}, we show that
in order to establish \eqref{eq:ReconWantToProve}, it is sufficient to find a real function  on $\{\pm 1\}^\Lambda$ whose expected values with respect to measures $\bmu_{\Lambda}^{+}$, and $\bmu_{\Lambda}^{-}$, differ significantly, while its second moment with respect to 
$\bmu$  is not much larger than the square of the first moment.  In particular,  we show the following technical result.

\begin{theorem}\label{thm:TVtoSecond}
For  integer $h>0$,  let $T=(V,E)$ be an arbitrary tree of height $h$, rooted at vertex~$r$, and let~$\Lambda$ be the set of vertices at distance
$h$ from $r$.  
For any $\beta> 0$, for any distribution $\phi$ on   $\mathbb{R}$, let the Gibbs distribution $\bmu=\bmu_{\beta, \phi}$ on $T$ be defined as in \eqref{def:XanaSpinGlass}. 

Then, for any real function $G: \{\pm 1\}^\Lambda \mapsto \mathbb{R}$ defined on spin configurations of $\Lambda$, we have that
\begin{equation}\label{eq:TVtoSecond}
	\Exp\left[\lnorm \bmu_\Lambda^{+}(\cdot) -\bmu_\Lambda^{-}(\cdot)  \rnorm_{\rm TV} \right]
	\ge
	\frac{\left(\Exp\left[\mExp{G}_{\bmu_\Lambda^+} - \mExp{G}_{\bmu_\Lambda^-}\right]\right)^2}
		{4\Exp\left[\mExp{G^2}_{\bmu}\right]} \enspace,
\end{equation}
{where $\Exp$ is with respect to the couplings $\{\bJ_e\}$ on the edges of $T$, induced the measure $\mathbold{\mu}$.}
\end{theorem}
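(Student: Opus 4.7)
My plan is to first establish a \emph{pointwise} (i.e.\ conditional on the random couplings $\{\bJ_e\}_{e\in E}$) inequality of the form
\[
\bigl(\mExp{G}_{\bmu_\Lambda^+} - \mExp{G}_{\bmu_\Lambda^-}\bigr)^2
\;\le\; 4 \,\mExp{G^2}_{\bmu} \cdot \lnorm \bmu_\Lambda^+(\cdot) - \bmu_\Lambda^-(\cdot)\rnorm_{\rm TV},
\]
and then deduce the claim by taking expectations over $\{\bJ_e\}$ and invoking the Cauchy--Schwarz inequality.

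For the pointwise step, I begin by observing that the interaction in \eqref{def:XanaSpinGlass} depends on $\sigma(u),\sigma(w)$ only through the indicator $\mathbf{1}\{\sigma(u)=\sigma(w)\}$; hence $\bmu(\sigma)=\bmu(-\sigma)$ by the global spin-flip symmetry, and consequently $\bmu_r(+1)=\bmu_r(-1)=\tfrac12$. This yields the key decomposition $\bmu_\Lambda(\sigma)=\tfrac12\bmu_\Lambda^+(\sigma)+\tfrac12\bmu_\Lambda^-(\sigma)$, and in particular the pointwise bound $|\bmu_\Lambda^+(\sigma)-\bmu_\Lambda^-(\sigma)|\le \bmu_\Lambda^+(\sigma)+\bmu_\Lambda^-(\sigma) = 2\bmu_\Lambda(\sigma)$. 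Writing $f(\sigma):=\bmu_\Lambda^+(\sigma)-\bmu_\Lambda^-(\sigma)$, the difference of expectations is
\[
\mExp{G}_{\bmu_\Lambda^+} - \mExp{G}_{\bmu_\Lambda^-} \;=\; \sum_{\sigma\in\{\pm1\}^\Lambda} G(\sigma)\,f(\sigma),
\]
and I apply Cauchy--Schwarz in the form $|\sum G(\sigma) f(\sigma)| \le \bigl(\sum G^2(\sigma)|f(\sigma)|\bigr)^{1/2}\bigl(\sum|f(\sigma)|\bigr)^{1/2}$. Using $\sum |f(\sigma)| = 2\lnorm \bmu_\Lambda^+-\bmu_\Lambda^-\rnorm_{\rm TV}$ and $\sum G^2(\sigma)|f(\sigma)| \le 2\sum G^2(\sigma)\bmu_\Lambda(\sigma) = 2\mExp{G^2}_{\bmu}$ (since $G$ depends only on the configuration on $\Lambda$), I obtain exactly the pointwise inequality above, which rearranges to
\[
\lnorm \bmu_\Lambda^+(\cdot)-\bmu_\Lambda^-(\cdot)\rnorm_{\rm TV} \;\ge\; \frac{\bigl(\mExp{G}_{\bmu_\Lambda^+}-\mExp{G}_{\bmu_\Lambda^-}\bigr)^2}{4\,\mExp{G^2}_{\bmu}}.
\]

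To conclude, I take $\Exp[\cdot]$ on both sides and apply Cauchy--Schwarz at the outer level: setting $a := \mExp{G}_{\bmu_\Lambda^+}-\mExp{G}_{\bmu_\Lambda^-}$ and $b := \mExp{G^2}_{\bmu}\ge 0$, the inequality $\Exp[a]^2 = \Exp\bigl[(a/\sqrt b)\sqrt b\bigr]^2 \le \Exp[a^2/b]\,\Exp[b]$ rearranges to $\Exp[a^2/b]\ge \Exp[a]^2/\Exp[b]$, which immediately gives \eqref{eq:TVtoSecond}. The only subtle point is the cancellation in the numerator on the right-hand side of \eqref{eq:TVtoSecond}, which forces the use of the outer Cauchy--Schwarz step rather than simply taking expectations term by term; the symmetry ensuring $\bmu_r(\pm 1)=\tfrac12$ is what makes the bound $|f(\sigma)|\le 2\bmu_\Lambda(\sigma)$ clean and dimensionally correct, and so I expect this to be the only step requiring care.
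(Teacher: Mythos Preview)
Your proof is correct. The approach, however, differs from the paper's in an instructive way. The paper first passes through the \emph{posterior} $\bmu_r(\cdot\mid\{G=t\})$: it shows (Lemma~\ref{lem:LowerTV}) that $\lnorm\bmu_\Lambda^+-\bmu_\Lambda^-\rnorm_{\rm TV}$ dominates $\mExp{(\bmu_r(+1\mid\{G=t\})-\bmu_r(-1\mid\{G=t\}))^2}_{t\sim\bmu}$ via a $\sigma$-algebra coarsening argument, and then (Lemma~\ref{lem:YuvalSecSpin}) applies a single Cauchy--Schwarz jointly over the law of $t$ and the couplings $\{\bJ_e\}$. You instead work directly with the signed measure $f=\bmu_\Lambda^+-\bmu_\Lambda^-$, apply a weighted Cauchy--Schwarz on $\{\pm1\}^\Lambda$ to get the pointwise bound $a^2\le 4b\cdot\lnorm\bmu_\Lambda^+-\bmu_\Lambda^-\rnorm_{\rm TV}$, and follow with a second Cauchy--Schwarz over the couplings. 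Your route is more elementary---it avoids the posterior and the $\sigma$-algebra machinery entirely---while the paper's route makes explicit the estimator interpretation (the quantity $\bmu_r(+1\mid\{G=t\})-\bmu_r(-1\mid\{G=t\})$ is precisely the Bayes-optimal advantage for guessing the root from $G$), which is conceptually natural in the reconstruction context and is what motivates the flip-majority construction later. One small caveat: in your outer Cauchy--Schwarz step you implicitly assume $b=\mExp{G^2}_{\bmu}>0$; when $b=0$ one has $a=0$ as well (since $\bmu_\Lambda^\pm\ll\bmu_\Lambda$), so the ratio is harmless under the convention $0/0=0$, but it is worth stating.
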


The proof of Theorem~\ref{thm:TVtoSecond} appears in Section~\ref{sec:ProofOfSecontToTV}. We now wish to define a real function $F_h$ on spin configurations of $\Lambda$, whose ratio in the r.h.s. of~\eqref{eq:TVtoSecond} is bounded away from zero. To this end, we define the ``signed influence" of an edge $e \in E$ to be
\begin{equation}\label{eq:defSignedGamma}
\textstyle
\widehat{\bInf_e} = 
	\frac {\exp(\beta \bJ_e) - 1}
		{1+ \exp(\beta \bJ_e)}\enspace.
\end{equation}
Observe that due to Claim~\ref{cl:WhatIsGamma}, we have that the relationship between $\widehat{\bInf_e}$, defined in above, and $\bInf_e$, defined by~\eqref{eq:InfDef} in Section~\ref{sec:TreeRecurInf}, is simply 
\begin{equation}\label{eq:SignedVsNormalGamma}
\textstyle
\left|\widehat{\bInf_e}\right| = {\bInf_e}
\enspace.
\end{equation}

\begin{definition}\label{def:DefOfFlipMaj} For integer $h> 0 $, let $T=(V,E)$ be an arbitrary tree, rooted at vertex~$r$, and let  $\Lambda$ be the set of vertices of $T$ at distance $h$ from the root. 
For any $\beta\ge 0$, for any distribution $\phi$ on 
 $\mathbb{R}$, let the Gibbs distribution $\bmu=\bmu_{\beta, \phi}$ on $T$ be defined as in \eqref{def:XanaSpinGlass}.

The \emph{flipped majority vote}  is the function $F_h : \{\pm1\}^\Lambda \to \mathbb{R}$ with
\begin{equation}\label{eq:DefOfFlipMaj}
F_h(\tau) = \sum_{u \in \Lambda} \tau(u)  \times \prod_{e \in \mathrm{path}(r,u)} \widehat{\bInf_e} \enspace,
\end{equation}
where $\widehat{\bInf_e}$ be defined as in \eqref{eq:defSignedGamma}.
\end{definition}

The following proposition expresses the enumerator and denominator of ratio in the r.h.s. of~\eqref{eq:TVtoSecond} for the flipped majority vote, $F_h$, defined above, in terms of the edge influences $\bInf_e$.  For two vertices $u,v$ of $T$, write $\mathrm{path}(u,v)$ for the set of edges along the 
unique path between $u$ and $v$, and write $u\wedge v$ for the common ancestor of $u$ and $v$ farthest from the root $r$.

\begin{proposition}\label{prop:RatioOfFh} For  integer $ h>0$,  let $T=(V,E)$ be an arbitrary tree of height $h$, rooted at vertex~$r$, and let  $\Lambda$ be the set of vertices at distance $h$ from the root of  $T$.  
For any $\beta > 0$, for any distribution $\phi$ on $\mathbb{R}$, let the Gibbs distribution $\bmu=\bmu_{\beta, \phi}$ on $T$ be defined as in \eqref{def:XanaSpinGlass}. 

Then,
\begin{equation}\label{eq:EqOfRarioOfFhNum}
	\Exp\left[\mExp{F_h}_{\bmu_\Lambda^+} - \mExp{F_h}_{\bmu_\Lambda^-}\right]
	=
	2{\sum_{v \in \Lambda} \;\prod_{e \in \mathrm{path}(r,v)}\Exp\left[ \bInf^2_e\right]}\enspace,
\end{equation}
and
\begin{equation}\label{eq:EqOfRarioOfFhDen}
		{\Exp\left[\mExp{F_h^2}_{\bmu}\right]}
	=
		{\displaystyle\sum_{u,v \in \Lambda}\;\left(\prod_{e \in \mathrm{path}(u,v)}\Exp\left[ \bInf_e^2\right]\right)\left(\prod_{e^{\prime}\in\mathrm{path}(r,u\wedge v) }\Exp\left[\bInf_{e^{\prime}}^2\right]\right)}\enspace,
\end{equation}
where $\bInf_e$ is the influence of edge $e$ defined in \eqref{eq:whatisgamma}.
\end{proposition}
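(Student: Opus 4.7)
The plan is to exploit two structural features of~$\bmu$: (a) the per-edge broadcasting description, which yields a clean multiplicative formula for conditional moments along a root-to-leaf path, and (b) the spin-flip symmetry $\bmu(\sigma)=\bmu(-\sigma)$, which eliminates odd powers of $\widehat{\bInf_e}$. Throughout, I will use that the random variables $\{\bJ_e\}_{e\in E}$ are i.i.d.\ across edges, so $\Exp$ of a product over a disjoint collection of edges factorises.

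\textbf{Step 1 (single-vertex moments).} For each $u\in\Lambda$ and $s\in\{\pm1\}$, I claim that
\begin{equation*}
\mExp{\bsigma(u)}_{\bmu^{s}}\;=\;s\!\!\!\prod_{e\in\mathrm{path}(r,u)}\!\!\!\widehat{\bInf_e}.
\end{equation*}
Indeed, the broadcasting matrix in \eqref{def:BdMatixM} gives $\Exp[\bsigma(\text{child})\mid\bsigma(\text{parent})=t]=t(e^{\beta\bJ_e}-1)/(e^{\beta\bJ_e}+1)=t\,\widehat{\bInf_e}$, and iterating the tower property along $\mathrm{path}(r,u)$ produces the product. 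The case $s=-1$ is handled either by the same recursion or by the symmetry $\bsigma\mapsto-\bsigma$. Plugging this into the definition \eqref{eq:DefOfFlipMaj} of $F_h$ and using \eqref{eq:SignedVsNormalGamma} (so that $\widehat{\bInf_e}^{\,2}=\bInf_e^{\,2}$) gives
\begin{equation*}
\mExp{F_h}_{\bmu_\Lambda^{+}}-\mExp{F_h}_{\bmu_\Lambda^{-}}\;=\;2\sum_{u\in\Lambda}\prod_{e\in\mathrm{path}(r,u)}\bInf_e^{\,2}.
\end{equation*}
Taking $\Exp$ and using independence of the $\bJ_e$ to split the product over distinct edges yields \eqref{eq:EqOfRarioOfFhNum}.

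\textbf{Step 2 (two-vertex correlations).} Let $w=u\wedge v$ and write $A=\mathrm{path}(r,w)$, $B=\mathrm{path}(w,u)$, $C=\mathrm{path}(w,v)$, so that $\mathrm{path}(r,u)=A\cup B$, $\mathrm{path}(r,v)=A\cup C$ and $\mathrm{path}(u,v)=B\cup C$ are pairwise disjoint unions. By the Markov property on the tree, $\bsigma(u)$ and $\bsigma(v)$ are conditionally independent given $\bsigma(w)$, so Step~1 applied to the subtrees $T_u,T_v$ rooted respectively at $w$ gives
\begin{equation*}
\mExp{\bsigma(u)\bsigma(v)\,\big|\,\bsigma(w)=s}_{\bmu}\;=\;s^{2}\!\prod_{e\in B}\widehat{\bInf_e}\!\prod_{e\in C}\widehat{\bInf_e}\;=\;\prod_{e\in\mathrm{path}(u,v)}\widehat{\bInf_e},
\end{equation*}
and averaging over $\bsigma(w)$ is harmless since $s^{2}=1$. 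Hence $\mExp{\bsigma(u)\bsigma(v)}_{\bmu}=\prod_{e\in\mathrm{path}(u,v)}\widehat{\bInf_e}$.

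\textbf{Step 3 (assembling the second moment).} Expanding $F_h^{\,2}$ linearly,
\begin{equation*}
\mExp{F_h^{\,2}}_{\bmu}\;=\;\sum_{u,v\in\Lambda}\mExp{\bsigma(u)\bsigma(v)}_{\bmu}\prod_{e\in\mathrm{path}(r,u)}\widehat{\bInf_e}\prod_{e\in\mathrm{path}(r,v)}\widehat{\bInf_e}.
\end{equation*}
A direct count shows that every $e\in A$ contributes $\widehat{\bInf_e}^{\,2}$ (from $\mathrm{path}(r,u)$ and $\mathrm{path}(r,v)$), and every $e\in B\cup C$ also contributes $\widehat{\bInf_e}^{\,2}$ (one factor from $\mathrm{path}(u,v)$ via Step~2, one from $\mathrm{path}(r,u)$ or $\mathrm{path}(r,v)$). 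Using $\widehat{\bInf_e}^{\,2}=\bInf_e^{\,2}$, the summand becomes $\prod_{e\in\mathrm{path}(u,v)}\bInf_e^{\,2}\cdot\prod_{e\in\mathrm{path}(r,u\wedge v)}\bInf_e^{\,2}$. Finally, since $\mathrm{path}(u,v)$ and $\mathrm{path}(r,u\wedge v)$ are edge-disjoint and the $\bJ_e$ are independent across edges, taking $\Exp$ factorises the two products and yields \eqref{eq:EqOfRarioOfFhDen}.

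The main obstacle is purely combinatorial bookkeeping in Step~3: verifying that every edge appears with an even exponent in the triple product so that the signs encoded in $\widehat{\bInf_e}$ disappear. Once the $A,B,C$ decomposition is in place, everything reduces to broadcasting/Markov identities and edge-wise independence of $\{\bJ_e\}$.
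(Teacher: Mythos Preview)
Your proposal is correct and follows essentially the same approach as the paper: prove the single-spin formula $\mExp{\bsigma(u)}_{\bmu^{s}}=s\prod_{e}\widehat{\bInf_e}$ along a path (the paper's Lemma~\ref{lem:ExpSigma}), deduce the two-point correlation $\mExp{\bsigma(u)\bsigma(v)}_{\bmu}=\prod_{e\in\mathrm{path}(u,v)}\widehat{\bInf_e}$ (the paper's Lemma~\ref{lem:ExpSigmaSigma}), then expand $F_h$ and $F_h^{2}$ and use edge independence to pull $\Exp$ through the products. The only cosmetic difference is that in Step~2 you condition on the common ancestor $u\wedge v$ and invoke conditional independence of the two branches, whereas the paper conditions on $\bsigma(u)$ and applies its Lemma~\ref{lem:ExpSigma} along the full path from $u$ to $v$; both routes give the same product formula.
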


The proof of Proposition~\ref{prop:RatioOfFh} appears in Section~\ref{sec:LemRatioOfFh}. Finally, for the case of $\Delta$-ary tree,  and using Proposition \ref{prop:RatioOfFh}, we have the following lemma

\begin{lemma}\label{lem:lowerBoundDtree}
For  integers $\Delta, h>0$,  let $T=(V,E)$ be the $\Delta$-ary tree of height $h$, rooted at vertex~$r$, and let  $\Lambda$ be the 
set of vertices at distance $h$ from the root of $T$.  
For any $\beta > 0$, for any distribution $\phi$ on $\mathbb{R}$, let the Gibbs distribution $\bmu=\bmu_{\beta, \phi}$ on $T$ be defined as in \eqref{def:XanaSpinGlass}. 

Let also $\Delta_{\rm KS}(\beta,\phi) = \Delta_{\rm KS}$ be defined as in~\eqref{eq:XANADeltaKSBPhi}, and let $F_h$ be the flipped majority vote defined in~\eqref{eq:DefOfFlipMaj}. Suppose that  $\Delta >\Delta_{\rm KS}$. Then, we have that
\begin{equation}\label{eq:EqOfRarioOfFhDreg}
	\frac{\left(\Exp\left[\mExp{F_h}_{\bmu_\Lambda^+} - \mExp{F_h}_{\bmu_\Lambda^-}\right]\right)^2}
		{4\Exp\left[\mExp{F_h^2}_{\bmu}\right]}
	\ge 
	\frac{\delta}{1+\delta} \enspace,
\end{equation}
where  $\delta>0$ is defined by $1+\delta=\Delta/\Delta_{\rm KS}$.
\end{lemma}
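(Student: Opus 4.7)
\medskip

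\noindent
\textbf{Proof plan for Lemma \ref{lem:lowerBoundDtree}.} The plan is to invoke Proposition \ref{prop:RatioOfFh} directly and then carry out the combinatorial computations that are available in the $\Delta$-ary setting. Let us abbreviate $q=\Exp[\bInf_e^2]=\Delta_{\rm KS}^{-1}$ (this is the same number for every edge $e$, since the $\bJ_e$ are i.i.d.\ with the law $\phi$), and set $\alpha=\Delta\,q=\Delta/\Delta_{\rm KS}=1+\delta$. Since the tree is $\Delta$-ary of height $h$, every root-to-leaf path has length $h$, and $|\Lambda|=\Delta^h$. Thus the numerator in \eqref{eq:EqOfRarioOfFhNum} equals $2\,\Delta^h q^h=2\alpha^h$, so the squared numerator in \eqref{eq:EqOfRarioOfFhDreg} is $\alpha^{2h}$.

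\smallskip

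\noindent
For the denominator, I would compute the sum in \eqref{eq:EqOfRarioOfFhDen} by partitioning pairs $(u,v)\in\Lambda\times\Lambda$ according to the depth $k=\mathrm{depth}(u\wedge v)\in\{0,1,\ldots,h\}$. A routine count on the $\Delta$-ary tree gives $\Delta^h$ ordered pairs with $u=v$ (depth $h$), and $(\Delta-1)\,\Delta^{2h-k-1}$ ordered pairs with common-ancestor depth $k<h$ (choose the ancestor in $\Delta^k$ ways, then pick two distinct children of it in $\Delta(\Delta-1)$ ways, and a depth-$h$ descendant in each of the two selected subtrees in $\Delta^{h-k-1}$ ways). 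For such a pair, the path length between $u,v$ is $2(h-k)$, while the root-to-ancestor path has $k$ edges, so the contribution of each ordered pair with LCA at depth $k$ is $q^{2(h-k)}\cdot q^k=q^{2h-k}$.  Summing and rewriting everything through $\alpha=\Delta q$ gives
\begin{equation*}
\Exp\bigl[\mExp{F_h^2}_{\bmu}\bigr]
= \alpha^h + (\Delta-1)\,q\sum_{k=0}^{h-1}\alpha^{2h-k-1}
= \alpha^h\left(1+(\Delta-1)\,q\cdot\frac{\alpha^h-1}{\alpha-1}\right).
\end{equation*}

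\smallskip

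\noindent
Plugging into the left-hand side of \eqref{eq:EqOfRarioOfFhDreg} and using $\delta/(1+\delta)=(\alpha-1)/\alpha$, the required inequality reduces after clearing denominators to
\begin{equation*}
\alpha^{h+1} \;\ge\; (\alpha-1) + (\Delta-1)\,q\,(\alpha^h-1).
\end{equation*}
The key algebraic simplification is that $\alpha-(\Delta-1)q=\Delta q-(\Delta-1)q=q$ and $\alpha-1-(\Delta-1)q=q-1$, so the inequality collapses to $q\,\alpha^h\ge q-1$. Since $\bInf_e\in[0,1]$, we have $q\in[0,1]$, so the right-hand side is non-positive while the left-hand side is non-negative, and the inequality holds. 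This finishes the argument.

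\smallskip

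\noindent
I do not anticipate a serious obstacle here: Proposition \ref{prop:RatioOfFh} does all the analytic work, and what remains is bookkeeping. The only slightly delicate point is doing the LCA-depth counting cleanly (in particular, not forgetting the diagonal pairs $u=v$, and handling the $k=h$ case separately from $k<h$) and then manipulating the resulting geometric sum so that the natural quantity $1+\delta=\Delta/\Delta_{\rm KS}$ appears. Everything after that is a one-line reduction using $q\le 1$.
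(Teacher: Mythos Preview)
Your proposal is correct and follows essentially the same route as the paper: invoke Proposition~\ref{prop:RatioOfFh}, substitute $q=\Exp[\bInf_e^2]=\Delta_{\rm KS}^{-1}$, partition pairs $(u,v)\in\Lambda^2$ by the depth of $u\wedge v$, and reduce to a geometric sum in $\Delta_{\rm KS}/\Delta$. The only tactical difference is that the paper upper-bounds the pair count at depth $\ell$ crudely by $\Delta^{2h-\ell}$ and then bounds $\sum_{\ell=0}^h(\Delta_{\rm KS}/\Delta)^\ell\le(1-\Delta_{\rm KS}/\Delta)^{-1}$, whereas you keep the exact count $(\Delta-1)\Delta^{2h-k-1}$ and finish with the one-line reduction $q\alpha^h\ge q-1$; both are fine and yield the same bound.
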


We prove Lemma~\ref{lem:lowerBoundDtree} in Section~\ref{sec:ProofLemmaBound}. The reconstruction claim of Theorem \ref{thrm:GeneralDeltary}, follows now readily from Theorem~\ref{thm:TVtoSecond} and Lemma \ref{lem:lowerBoundDtree}.

\begin{proof}[Proof of Theorem \ref{thrm:GeneralDeltary} - Reconstruction.]

Consider the Gibbs distribution $\bmu=\bmu_{\beta, \phi}$ defined as in \eqref{def:XanaSpinGlass}, on the $\Delta$-ary tree $T=(V,E)$.  
We need to show that for $\Delta>\Delta_{\rm KS}$, where $\Delta_{\rm KS} = \Delta_{\rm KS}(\beta,\phi)$ is defined as in~\eqref{eq:XANADeltaKSBPhi}, we have that 
\begin{equation}\label{eq:Target4ReconThrm:GeneralDeltary}
\limsup_{h \to \infty}
\Exp\left[  \lnorm \bmu_\Lambda^+ -\bmu_\Lambda^-  \rnorm_{\rm TV} \right] >0 \enspace,
\end{equation}
where $\Exp$ is taken with respect to the random variables $\{\bJ_e\}$. Let now $F_h:\{\pm1\}^\Lambda \to \mathbb{R}$ be the flipped majority vote defined as in \eqref{eq:DefOfFlipMaj}. Applying Theorem~\ref{thm:TVtoSecond}, which holds for any real function on $\{\pm1\}^\Lambda $, on $F_h$, gives

\begin{equation*}\label{eq:TVtoSecondForFlipped}
	\Exp\left[\lnorm \bmu_\Lambda^{+}(\cdot) -\bmu_\Lambda^{-}(\cdot)  \rnorm_{\rm TV} \right]
	\ge
	\frac{\left(\Exp\left[\mExp{F_h}_{\bmu_\Lambda^+} - \mExp{F_h}_{\bmu_\Lambda^-}\right]\right)^2}
		{4\Exp\left[\mExp{F_h^2}_{\bmu}\right]} \enspace.
\end{equation*}
Since $\Delta>\Delta_{\rm KS}$, there exist a $\delta > 0$ such that $\Delta=(1+\delta)\Delta_{\rm KS}$. Applying Lemma \ref{lem:lowerBoundDtree} on the r.h.s. of the above gives further that
\begin{equation*}\label{eq:TVBoundForFlipped}
	\Exp \left[\lnorm \bmu_\Lambda^{+}(\cdot) -\bmu_\Lambda^{-}(\cdot)  \rnorm_{\rm TV} \right]
	\ge
	\frac{\delta}{1+\delta}\enspace.
\end{equation*}
Taking limits, yields trivially 
\begin{equation}\nonumber 
	\limsup_{h \to \infty}
	\Exp\left[
	\lnorm \bmu_\Lambda^{+}(\cdot) -\bmu_\Lambda^{-}(\cdot)  \rnorm_{\rm TV} 
	\right]
	\ge
	\frac{\delta}{1+\delta}>0\enspace,
\end{equation}
as desired, concluding the proof of  the reconstruction claim of Theorem~\ref{thrm:GeneralDeltary}.
\end{proof}

\section{Proof Of  Theorem~\ref{thm:TVtoSecond}}\label{sec:ProofOfSecontToTV}

Let us first introduce some additional notation. For  integer  $h>0$,  let $T=(V,E)$ be an arbitrary tree of height $h$ rooted at vertex~$r$, and let  $\Lambda$ be the set of vertices at distance $h$ from the root  of $T$. Given a function  $G: \{\pm 1\}^\Lambda \mapsto \mathbb{R}$ defined on the spin configurations of $\Lambda$, write $\range(G) \subseteq  \mathbb{R}$ for the range of $G$. For $t \in\range(G)$, and $s\in \{\pm 1\}$, let us define
\begin{align}
	{\bmu_{r}(s\mid \{G=t\})=\bmu(\bsigma(r) = s \mid \{G=t\})}\;, 
		&&\text{and}&& 
	{\bmu^{s}( \{G=t\})=\bmu( \{G=t\}\mid \bsigma(r) = s)} \enspace,
\end{align}
where we used the notation 
$
\{G=t\} = \left\{\tau \in \{\pm 1\}^\Lambda: G(\tau) = t\right\}
$. 

Recall also that $\bmu_{r}$ denotes the marginal of $\bmu$ at the root $r$, while $\bmu_\Lambda^{+}$ and 
	$\bmu_\Lambda^{-}$ denote the marginals of $\bmu$ at $\Lambda$  conditioned on $\bsigma(r) =+1$ and 
	$\bsigma(r) =-1$, respectively.

\begin{lemma}\label{lem:LowerTV}
For  integer  $h>0$,  let $T=(V,E)$ be arbitrary tree of height $h$ rooted at vertex~$r$, and let  $\Lambda$ be 
the set of vertices at distance $h$ from the root of   $T$. 
For any $\beta\ge 0$, for any distribution $\phi$ on $\mathbb{R}$, let the Gibbs distribution $\bmu=\bmu_{\beta, \phi}$ on $T$ be defined as in \eqref{def:XanaSpinGlass}. 

Then, for any $G: \{\pm 1\}^\Lambda \mapsto \mathbb{R}$, we have that
\begin{equation}\label{eq:LowerTV}
	\mExp{\left|\bmu_{r}(+1\mid \{G=t\}) -\bmu_{r}(-1\mid \{G=t\})  \right|}_{t\sim \bmu} 
		\le
	\lnorm \bmu_\Lambda^{+}(\cdot) -\bmu_\Lambda^{-}(\cdot)  \rnorm_{\rm TV}
	\enspace.
	\end{equation}
\end{lemma}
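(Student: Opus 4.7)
The plan is to combine Bayes' rule with the $\pm 1$ spin-flip symmetry of the Hamiltonian in \eqref{def:XanaSpinGlass}, and then apply the triangle inequality to \emph{lift} the sum over the level sets of $G$ back to a sum over the atoms of $\{\pm 1\}^\Lambda$, which is precisely twice the total variation distance.

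First I would unfold the left-hand side using Bayes' rule. Writing $\pi(t) := \bmu(\{G = t\})$ for the distribution of $G(\bsigma)$ under $\bmu$, we have $\bmu_r(s \mid \{G=t\}) = \bmu^s(\{G=t\})\,\bmu_r(s)/\pi(t)$ for $s \in \{\pm 1\}$. The crucial observation is that the Hamiltonian in \eqref{def:XanaSpinGlass} is invariant under the global spin-flip $\sigma \mapsto -\sigma$, since $\mathbf{1}\{\sigma(u) = \sigma(w)\}$ is unchanged; hence $\bmu$ is symmetric and $\bmu_r(+1) = \bmu_r(-1) = \tfrac{1}{2}$ holds for every realization of the couplings $\{\bJ_e\}_{e\in E}$. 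Substituting gives
\[
\left| \bmu_r(+1 \mid \{G=t\}) - \bmu_r(-1 \mid \{G=t\}) \right| \;=\; \frac{\left| \bmu^+(\{G=t\}) - \bmu^-(\{G=t\}) \right|}{2\,\pi(t)} \enspace.
\]

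Next, I would compute the expectation on the left-hand side of \eqref{eq:LowerTV}. Since $t$ is distributed as $\pi$ under $\bmu$, the weight $\pi(t)$ cancels against the $1/\pi(t)$ factor, and we obtain
\[
\mExp{\left|\bmu_r(+1\mid \{G=t\}) - \bmu_r(-1\mid \{G=t\})\right|}_{t \sim \bmu}
\;=\; \tfrac{1}{2}\sum_{t\in \range(G)} \left| \bmu^+(\{G=t\}) - \bmu^-(\{G=t\}) \right| \enspace.
\]

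Finally, expanding $\bmu^{\pm}(\{G=t\}) = \sum_{\tau \in \{G=t\}} \bmu^{\pm}_\Lambda(\tau)$, applying the triangle inequality under the absolute value, and using that the level sets $\{\{G=t\}: t\in \range(G)\}$ partition $\{\pm 1\}^\Lambda$, yields
\[
\tfrac{1}{2}\sum_{t\in \range(G)} \left| \bmu^+(\{G=t\}) - \bmu^-(\{G=t\}) \right|
\;\le\; \tfrac{1}{2}\sum_{\tau \in \{\pm 1\}^\Lambda} \left| \bmu^+_\Lambda(\tau) - \bmu^-_\Lambda(\tau) \right|
\;=\; \lnorm \bmu^+_\Lambda(\cdot) - \bmu^-_\Lambda(\cdot) \rnorm_{\rm TV} \enspace,
\]
which is the claimed inequality. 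There is no real obstacle in this argument; the only non-routine ingredient is the spin-flip symmetry ensuring that the root marginal is uniform, without which the Bayes factors do not collapse cleanly. Everything else is just the contraction of total variation under the deterministic map $\tau \mapsto G(\tau)$ (a data-processing step).
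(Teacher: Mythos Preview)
Your proof is correct and follows essentially the same route as the paper: Bayes' rule plus the root-marginal symmetry $\bmu_r(\pm 1)=\tfrac12$ to rewrite the left-hand side as $\tfrac12\sum_t|\bmu^+(\{G=t\})-\bmu^-(\{G=t\})|$, then a data-processing step to bound this by $\|\bmu_\Lambda^+-\bmu_\Lambda^-\|_{\rm TV}$. The only cosmetic difference is that the paper phrases the last step as ``total variation does not increase under restriction to the sub-$\sigma$-algebra generated by $G$'', whereas you unpack it as a triangle inequality over the partition into level sets; these are the same argument.
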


\begin{proof}
First, we observe that the l.h.s. of \eqref{eq:LowerTV} can be expressed as total variation distance. In particular, we have that
\begin{equation}\label{eq:ItsATV}
	\mExp{\left|\bmu_{r}(+1\mid \{G=t\}) -\bmu_{r}(-1\mid \{G=t\})  \right|}_{t\sim \bmu} 
		=
	\lnorm \bmu^{+}( \{G=\cdot\}) -\bmu^{-}( \{G=\cdot\})  \rnorm_{\rm TV}
	\enspace,
\end{equation}
where recall that  $\bmu_{r}$ is the marginal of $\bmu$ at the root $r$, while $\bmu^{+}$ and $\bmu^{-}$ denote the measure 
$\bmu$ conditional on $\bsigma(r) =+1$ and $\bsigma(r) =-1$, respectively. Indeed, we have

\begin{align}
\nonumber
&\mExp{\left|\bmu_{r}(+1\mid \{G=t\}) -\bmu_{r}(-1\mid \{G=t\})  \right|}_{t\sim \bmu} \\ \nonumber
	&=
	\sum_{t\in \range(G)} 
		\left|\bmu_{r}(+1\mid \{G=t\}) - \bmu_{r}(-1\mid \{G=t\})\right|\cdot{\bmu(\{G=t\})} \\ \nonumber
	&= 		
	\sum_{t\in \range(G)} 
		\left|\bmu_{r}(+1\mid \{G=t\}) {\bmu(\{G=t\})} 
				- \bmu_{r}(-1\mid \{G=t\}){\bmu(\{G=t\})} \right|\\\label{eq:BayesRuleEqA}
	&= 
	\frac{1}{2}  \sum_{t\in \range(G)} 
		\left|\bmu_{r}(+1\mid \{G=t\}) \frac{\bmu(\{G=t\})}{\bmu(\bsigma(r) =+1)} 
				- \bmu_{r}(-1\mid \{G=t\})\frac{\bmu(\{G=t\})}{\bmu(\bsigma(r) =-1)} \right|\\ \label{eq:BayesRuleEqB}
		&= 
		\frac{1}{2}  \sum_{t\in \range(G)} 
			\left|\bmu^{+}( \{G=t\}) - \bmu^{-}( \{G=t\})\right|\\ \nonumber
		&= 
		\lnorm \bmu^{+}(\{G=\cdot\}) -\bmu^{-}( \{G=\cdot\})  \rnorm_{\rm TV} \enspace,
\end{align}
where \eqref{eq:BayesRuleEqA} and \eqref{eq:BayesRuleEqB} follow from Bayes' rule, and the fact that 
${\bmu(\bsigma(r) =-1)= \bmu(\bsigma(r) =+1)={1}/{2}}$.

Recall now that the total variation distance of two measures $p$ and $q$, defined on the same probability space $(\Omega, \mathcal{F})$, can be equivalently defined as
\begin{equation*}\label{eq:AnotherDefTV}
\lnorm p-q \rnorm_{\rm TV} = \sup_{ A\in \mathcal{F}} |p(A)-q(A)| \enspace.
\end{equation*}
Given a subalgebra $\mathcal{G} \subseteq \mathcal{F}$, let $p^\prime$ and $q^\prime$ denote the restrictions of $p$ and $q$ on $\mathcal{G}$, respectively. Then 

\begin{equation*}\label{eq:subAlgTV}
		\lnorm p^\prime-q^\prime \rnorm_{\rm TV} 
			= 
		\sup_{ A\in \mathcal{G}} |p^\prime(A)-q^\prime(A)| \le \sup_{ A\in \mathcal{F}} |p(A)-q(A)| 
			= 
		\lnorm p-q \rnorm_{\rm TV} \enspace.
\end{equation*}
Observe now that $\bmu^{+}( \{G=\cdot\})$ and $\bmu^{-}( \{G=\cdot\})$ are precisely the restrictions of $ \bmu_\Lambda^{+}(\cdot)$ and  $\bmu_\Lambda^{-}(\cdot)$ on the $\sigma$-algebra generated by the function $G$, respectively. Hence, per the above and \eqref{eq:ItsATV}, we have that
\begin{multline*}
	\mExp{\left|\bmu_{r}(+1\mid \{G=t\})-\bmu_{r}(-1\mid \{G=t\})  \right|}_{t\sim \bmu} 
		\\
	=
		\lnorm \bmu^{+}( \{G=\cdot\}) -\bmu^{-}( \{G=\cdot\})  \rnorm_{\rm TV} 	\le
		\lnorm \bmu_\Lambda^{+}(\cdot) -\bmu_\Lambda^{-}(\cdot)  \rnorm_{\rm TV} \enspace.
\end{multline*}
The above concludes the proof of Lemma \ref{lem:LowerTV}. 
\end{proof}

As usual, it is easier to handle squares than absolute values. Observing that
\begin{equation*}
0 \le \mExp{\left|\bmu_{r}(+1\mid \{G=t\}) -\bmu_{r}(-1\mid \{G=t\})  \right|}_{t\sim \bmu} \le 1 \enspace,
\end{equation*}
we have that
\begin{equation*}
\mExp{\left(\bmu_{r}(+1\mid \{G=t\}) -\bmu_{r}(-1\mid \{G=t\})  \right)^2}_{t\sim \bmu} \le \mExp{\left|\bmu_{r}(+1\mid \{G=t\}) -\bmu_{r}(-1\mid \{G=t\})  \right|}_{t\sim \bmu}
\enspace,
\end{equation*}
which further implies
\begin{equation}\label{eq:StepA4TVtoSecond}
	\mExp{\left(\bmu_{r}(+1\mid \{G=t\}) -\bmu_{r}(-1\mid \{G=t\})  \right)^2}_{t\sim \bmu} 
		\le
	\lnorm \bmu_\Lambda^{+}(\cdot) -\bmu_\Lambda^{-}(\cdot)  \rnorm_{\rm TV}
	\enspace.
\end{equation}

Finally, we need to prove the following lemma.

\begin{lemma}\label{lem:YuvalSecSpin}
For  integer  $h>0$,  let $T=(V,E)$ be arbitrary tree of height $h$ rooted at vertex~$r$, and let  $\Lambda$ be the 
set of vertices at distance $h$ from the root of $T$. 
For any $\beta > 0$, for any distribution $\phi$ on $\mathbb{R}$, let the Gibbs distribution $\bmu=\bmu_{\beta, \phi}$ on $T$ be defined as in \eqref{def:XanaSpinGlass}. 

Then, for any $G: \{\pm 1\}^\Lambda \mapsto \mathbb{R}$, we have that
\begin{align}\label{eq:CSbound}
\Exp\left[\mExp{\left(\bmu_{r}(+1\mid \{G=t\}) -\bmu_{r}(-1\mid \{G=t\})  \right)^2}_{t\sim \mu} \right] 
	\ge 
	\frac{\left(\Exp\left[\mExp{G}_{\bmu_\Lambda^+}
		-\mExp{G}_{\bmu_\Lambda^-}\right]\right)^2}
		{4\Exp\left[\mExp{G^2}_{\bmu}\right]} 
		\enspace.
\end{align}
Recall that the expectation is taken w.r.t. the coupling parameters in $\bmu$.
\end{lemma}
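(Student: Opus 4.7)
\textbf{Proof proposal for Lemma~\ref{lem:YuvalSecSpin}.}
The plan is to apply Cauchy--Schwarz twice: first pointwise in the couplings $\{\bJ_e\}$ to compare the quadratic functional on the left-hand side with the first moment $\mExp{G}_{\bmu_\Lambda^+} - \mExp{G}_{\bmu_\Lambda^-}$ via the second moment $\mExp{G^2}_\bmu$; and then over the randomness of $\bmu$ to exchange an expected ratio with a ratio of expectations.

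First, I would use Bayes' rule together with the fact that $\bmu_r(+1) = \bmu_r(-1) = 1/2$ to rewrite, for each $t \in \range(G)$,
\begin{equation*}
	\bmu_r(+1 \mid \{G=t\}) - \bmu_r(-1 \mid \{G=t\})
	= \frac{\bmu^{+}(\{G=t\}) - \bmu^{-}(\{G=t\})}{2\,\bmu(\{G=t\})} \enspace.
\end{equation*}
Substituting this into the inner expectation on the left-hand side of \eqref{eq:CSbound} and unfolding the expectation $\mExp{\cdot}_{t\sim\bmu}$ as a sum over $\range(G)$ weighted by $\bmu(\{G=t\})$, I obtain, for every fixed realisation of $\{\bJ_e\}$,
\begin{equation*}
	\mExp{\left(\bmu_{r}(+1\mid \{G=t\}) -\bmu_{r}(-1\mid \{G=t\})  \right)^2}_{t\sim \bmu}
	= \frac{1}{4}\sum_{t \in \range(G)} \frac{\bigl(\bmu^{+}(\{G=t\}) - \bmu^{-}(\{G=t\})\bigr)^2}{\bmu(\{G=t\})} \enspace.
\end{equation*}
On the other hand, $\mExp{G}_{\bmu_\Lambda^+} - \mExp{G}_{\bmu_\Lambda^-} = \sum_{t} t\,(\bmu^{+}(\{G=t\}) - \bmu^{-}(\{G=t\}))$, while $\mExp{G^2}_\bmu = \sum_t t^2\,\bmu(\{G=t\})$.

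Now, for each fixed instance of the couplings, applying Cauchy--Schwarz to the summands
\[
	t\sqrt{\bmu(\{G=t\})} \quad \text{and} \quad \frac{\bmu^{+}(\{G=t\}) - \bmu^{-}(\{G=t\})}{\sqrt{\bmu(\{G=t\})}}
\]
yields the pointwise bound
\begin{equation*}
	\bigl(\mExp{G}_{\bmu_\Lambda^+} - \mExp{G}_{\bmu_\Lambda^-}\bigr)^2
	\le 4\,\mExp{G^2}_\bmu \cdot \mExp{\left(\bmu_{r}(+1\mid \{G=t\}) -\bmu_{r}(-1\mid \{G=t\})  \right)^2}_{t\sim \bmu} \enspace.
\end{equation*}
Rearranging gives that the left-hand side of \eqref{eq:CSbound} is at least $\Exp\bigl[X^2 / (4Y)\bigr]$, where $X = \mExp{G}_{\bmu_\Lambda^+} - \mExp{G}_{\bmu_\Lambda^-}$ and $Y = \mExp{G^2}_\bmu$ are both measurable functions of $\{\bJ_e\}$.

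Finally, I would apply Cauchy--Schwarz a second time, but now in the probability space of the couplings: writing $\Exp[X] = \Exp\bigl[(X/\sqrt{Y}) \cdot \sqrt{Y}\bigr]$ and squaring gives $(\Exp[X])^2 \le \Exp[X^2/Y]\cdot \Exp[Y]$, so that
\begin{equation*}
	\Exp\!\left[\frac{X^2}{4Y}\right] \ge \frac{(\Exp[X])^2}{4\,\Exp[Y]} \enspace,
\end{equation*}
which is exactly the right-hand side of \eqref{eq:CSbound}. I do not expect any real obstacle here; the only mild subtlety is to ensure that the second Cauchy--Schwarz step is legitimate, i.e.\ that $Y = \mExp{G^2}_\bmu > 0$ almost surely (otherwise $G \equiv 0$ and both sides of \eqref{eq:CSbound} vanish), and to handle the degenerate summands where $\bmu(\{G=t\}) = 0$ by the usual convention of restricting the sum to the support of $\bmu$.
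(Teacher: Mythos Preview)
Your proof is correct and follows essentially the same approach as the paper: both rewrite $\mExp{G}_{\bmu_\Lambda^+} - \mExp{G}_{\bmu_\Lambda^-}$ via Bayes' rule and then apply Cauchy--Schwarz with the factors $t\sqrt{\bmu(\{G=t\})}$ and $(\bmu_r(+1\mid\{G=t\}) - \bmu_r(-1\mid\{G=t\}))\sqrt{\bmu(\{G=t\})}$. The only cosmetic difference is that the paper applies Cauchy--Schwarz once over the joint measure $d\phi \otimes \bmu$ (i.e.\ simultaneously over $t$ and the couplings), whereas you apply it twice in sequence---first pointwise in the couplings, then over the coupling randomness---which yields the same bound.
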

\begin{proof} Expanding the enumerator of the fraction in the right hand side of \eqref{eq:CSbound} gives
\begin{align}
\nonumber
	\Exp^2&\left[\mExp{G}_{\bmu_\Lambda^+} -\mExp{G}_{\bmu_\Lambda^-}\right] \\ \nonumber
		&=
		\Exp^2
	\left[\sum_{t \in \range(G)} t\cdot\bmu^{+}( \{G=t\})- 
	\sum_{t \in \range(G)} t\cdot\bmu^{-}( \{G=t\})\right] \\ \nonumber
		&= 
		\left(\int\sum_{t \in \range(G)} t \cdot\left[\bmu^{+}( \{G=t\})
		- \bmu^{-}( \{G=t\})\right]d{\phi}\right)^2 \\ \label{eq:CSbouBayesRule}
		&= 
		\left(2\int\sum_{t \in \range(G)} t \cdot\left[\bmu_{r}(+1\mid \{G=t\})
		- \bmu_{r}(-1\mid \{G=t\})\right]\cdot \bmu\left(\{G=t\}\right) d{\phi}\right)^2 
		\enspace,
\end{align}
where the last equality follows from Bayes' rule. Applying now the Cauchy-Schwartz inequality with factors
\begin{align*}
\left[\bmu_{r}(+1\mid \{G=t\})
		- \bmu_{r}(-1\mid \{G=t\})\right]\cdot \sqrt{\bmu\left(\{G=t\}\right)}\;,
		&&\text{and}&&
	t\cdot \sqrt{\bmu\left(\{G=t\}\right)}\enspace,
\end{align*}
we further get that %
\begin{align*}
\lefteqn{
\Exp^2\left[\mExp{G}_{\mu_\Lambda^+} -\mExp{G}_{\mu_\Lambda^-}\right]
} \vspace{2cm} \\
		&\le 
		4
		\left(\int \sum_{t \in  \range(G)} t^2\bmu\left(\{G=t\}\right)d{\phi}\right)
		\cdot \int \sum_{t \in  \range(G)}\left[\bmu_{r}(+1\mid \{G=t\}) -
		 \bmu_{r}(-1\mid \{G=t\})\right]^2\bmu\left(\{G=t\}\right)d{\phi}\\
		&=
		4\cdot
		\Exp\left[\mExp{G^2}_\bmu\right] \cdot
		\Exp\left[\mExp{\left(\bmu_{r}(+1\mid \{G=t\}) -\bmu_{r}(-1\mid \{G=t\})  \right)^2}_{t\sim \bmu} \right] \enspace.
\end{align*}
The above concludes the proof of Lemma \ref{lem:YuvalSecSpin}. 
\end{proof}

 Theorem ~\ref{thm:TVtoSecond} now follows from \eqref{eq:StepA4TVtoSecond} 
 and Lemma \ref{lem:YuvalSecSpin}.

\section{Proof of Proposition~\ref{prop:RatioOfFh}}\label{sec:LemRatioOfFh}

Let $T=(V,E)$ be an arbitrary tree rooted at $r$. Let also $\phi$ be a distribution on $\mathbb{R}$, and let $\{\bJ_e\}_{e\in E}$ be i.i.d. random variables, each $\bJ_e$ distributed as in $\phi$. For a real number $\beta \ge 0$, recall that the probability measure $\bmu=\bmu_{\beta,\phi}(\sigma)$ on $\{\pm1\}^V$ is defined by 
\begin{align}\label{def:XanaKaiXanaSpinGlass}
\bmu_{\beta,\phi}(\sigma) &\propto \textstyle \exp\left( \beta\sum_{\{w,u\}\in E} {\bf 1}\{\sigma(u)=\sigma(w)\} \cdot \bJ_{\{u,w\}}   \right) \enspace .
\end{align}
Recall also that for each $e \in  E$ we have defined 
\begin{equation}\label{eq:XanAdefSignedGamma}
\textstyle
\widehat{\bInf_e} = 
	\frac {\exp(\beta \bJ_e) - 1}
		{1+ \exp(\beta \bJ_e)}\enspace,
\end{equation}
and we use $\mathrm{path}(u,v)$ to denote the set of edges along the unique path between $u$ and $v$. Finally, recall that  $\bmu^{u,s}$ denotes the measure $\bmu$ conditional on $\bsigma(u) = s$, for $s \in \{\pm 1\}$. We now have the following lemma.

\begin{lemma}\label{lem:ExpSigma}
Let $T=(V,E)$ be an arbitrary tree, and let $\bmu$ be the Gibbs measure on $T$ defined as in \eqref{def:XanaKaiXanaSpinGlass}, and
$\widehat{\bInf_e}$ be as in \eqref{eq:XanAdefSignedGamma}. Then, for any two vertices $u, w$ of $T$, and $s \in \{\pm 1\}$ we have

\begin{equation}\label{eq:ExpSigmaEq}
\mExp{\bsigma(w)}_{\bmu^{u,s}} = s \prod_{e \in \mathrm{path}(u,w)} \widehat{\bInf_e} \enspace.
\end{equation}
\end{lemma}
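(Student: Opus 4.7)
The plan is to prove Lemma~\ref{lem:ExpSigma} by induction on the length $k$ of the unique path from $u$ to $w$ in $T$, exploiting the spatial Markov property of the Gibbs measure $\bmu$. The underlying observation is that $\bmu$, viewed as a Markov random field on a tree, factorises along any path as a product of the edge broadcasting matrices $\bM_e$ from~\eqref{def:BdMatixM}, and each $\bM_e$ has $(1,-1)^\top$ as an eigenvector with eigenvalue exactly $\widehat{\bInf_e}$.

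For the base case $k=0$ (i.e., $u=w$) the product over the empty path equals $1$ and trivially $\mExp{\bsigma(u)}_{\bmu^{u,s}}=s$. For $k=1$ with $e=\{u,w\}$, the conditional marginal $\bmu(\bsigma(w)=t\mid \bsigma(u)=s)$ equals $\bM_e(s,t)$: this follows because for any tree the weights of subtrees hanging off $w$ normalise to $1$ when summed against $\bM_e$, leaving just the edge factor. A direct computation then gives
\begin{equation*}
\mExp{\bsigma(w)}_{\bmu^{u,s}} \;=\; \bM_e(s,+1)-\bM_e(s,-1) \;=\; s\cdot \frac{\exp(\beta\bJ_e)-1}{1+\exp(\beta\bJ_e)} \;=\; s\,\widehat{\bInf_e}\enspace,
\end{equation*}
which is the same as observing that $(1,-1)^\top$ is an eigenvector of $\bM_e$ with eigenvalue $\widehat{\bInf_e}$.

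For the inductive step with $k\geq 2$, let $v$ be the neighbour of $u$ on $\mathrm{path}(u,w)$. Since $T$ is a tree, removing $v$ separates $u$ from $w$, so by the Markov property $\bsigma(w)$ and $\bsigma(u)$ are conditionally independent given $\bsigma(v)$. Applying the tower property and then the inductive hypothesis on the shorter path $\mathrm{path}(v,w)$ yields
\begin{equation*}
\mExp{\bsigma(w)}_{\bmu^{u,s}} \;=\; \sum_{t\in\{\pm 1\}}\mExp{\bsigma(w)}_{\bmu^{v,t}}\cdot \bmu(\bsigma(v)=t\mid\bsigma(u)=s) \;=\; \Bigl(\prod_{e\in\mathrm{path}(v,w)}\widehat{\bInf_e}\Bigr)\cdot \mExp{\bsigma(v)}_{\bmu^{u,s}}\enspace.
\end{equation*}
The base case $k=1$ evaluates $\mExp{\bsigma(v)}_{\bmu^{u,s}}=s\,\widehat{\bInf_{\{u,v\}}}$, and using $\mathrm{path}(u,w)=\{\{u,v\}\}\cup\mathrm{path}(v,w)$ one obtains the claimed identity.

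The only genuinely non-trivial point is the Markov property used above: in particular that conditioning on $\bsigma(u)=s$ and looking at the induced law on the path $u=v_0,v_1,\ldots,v_k=w$ gives precisely the Markov chain with kernels $\bM_{\{v_i,v_{i+1}\}}$. This is where the fact that each subtree hanging off an intermediate $v_i$ contributes a factor that cancels in normalisation---an immediate consequence of the row-stochasticity of the broadcasting matrices summed iteratively from the leaves---is used. Once this factorisation is invoked, the identity $\mExp{\bsigma(w)}_{\bmu^{u,s}}=s\prod_{e\in\mathrm{path}(u,w)}\widehat{\bInf_e}$ is simply the statement that $(1,-1)^\top$ is an eigenvector of each $\bM_e$ with eigenvalue $\widehat{\bInf_e}$, and hence of their product with eigenvalue $\prod_e\widehat{\bInf_e}$.
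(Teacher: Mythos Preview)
Your proof is correct and follows essentially the same route as the paper: induction on the length of $\mathrm{path}(u,w)$, with the inductive step peeling off the first vertex on the path via the Markov property and reducing to the adjacent-vertex case. The paper's argument is organised identically (it conditions on $\bsigma(q_1)$ for $q_1$ the first vertex after $u$ on the path, invokes the Markov property, and applies the inductive hypothesis to the remaining path). Your additional framing of the identity as ``$(1,-1)^\top$ is an eigenvector of each $\bM_e$ with eigenvalue $\widehat{\bInf_e}$'' is a clean way to summarise what the induction is really computing, and your explicit discussion of why $\bmu(\bsigma(v)=t\mid\bsigma(u)=s)=\bM_e(s,t)$ holds on the full tree (via cancellation of the subtree partition functions, which ultimately rests on the global $\pm1$ symmetry so that $Z_{T_v}(+1)=Z_{T_v}(-1)$) is slightly more careful than the paper, which simply asserts the conditional probabilities in the base case without comment.
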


\begin{proof}
Let $(u, q_1, \ldots, q_t,w)$ be the unique path from $u$ to $w$, and write 
$P=\{q_1, \ldots, q_t,w\}$ for the set of vertices along that path, apart from $u$. We now see that for any $s\in \{\pm 1\}$
we have that
\begin{equation}\label{eq:FinalePathMarg}
	\mExp{\bsigma(w)}_{\bmu^{u,s}} 
		 =  \sum_{\tau \in \{\pm1\}^{P}} \tau(v) \cdot \bmu(\bsigma(P)=\tau \mid \bsigma(u) = s) \enspace.
\end{equation}

We now prove \eqref{eq:ExpSigmaEq}  by induction on the distance between $u$ and $v$. For the base case, corresponds to
 $w$ and $u$ being adjacent vertices.  Then,  for any $s\in \{\pm 1\}$, equation \eqref{eq:FinalePathMarg} becomes
\begin{align*}
\mExp{\bsigma(w)}_{\bmu^{u,s}} 
	&= \bmu(\bsigma(w)=+1\mid \bsigma(u)=s) - \bmu(\bsigma(w) = -1 \mid \bsigma(u)=s) \\
	&=\begin{cases}
	\frac{\exp(\beta \bJ_{\{u,w\}}) - 1}{ 1+ \exp(\beta \bJ_{\{u,w\}})} = \widehat{\bInf}_{\{u,w\}} &\text{ if } s = +1\\[6pt]
	\frac{1-\exp(\beta \bJ_{\{u,w\}})}{ 1+ \exp(\beta \bJ_{\{u,w\}})} = -\widehat{\bInf}_{\{u,w\}}&\text{ if } s = -1\/
	\end{cases} \enspace,
\end{align*}
as desired. 

Assume now \eqref{eq:ExpSigmaEq} holds for any pair of  vertices whose distance is at most $t$. Let $u$, $w$ be a pair of  vertices 
of distance $t+1$. In particular, let $(u, q_1,\dots, q_t, w)$ be the (unique) path from $u$ to $w$, and write $P=\{q_1, \ldots, q_t,w\}$, 
and $P^\prime =P\setminus\{q_1\}$. From \eqref{eq:FinalePathMarg} we have that for $s \in \{\pm1\}$
\begin{align*}
\mExp{\bsigma(w)}_{\bmu^{u,s}}  
	&=
	\sum_{\tau \in \{\pm1\}^{P}} \tau(w) \cdot \bmu(\bsigma(P)=\tau \mid \bsigma(u) = s)  \\ 
	&=
	\sum_{\xi\in \{\pm1\}}\sum_{\tau \in \{\pm1\}^{P^\prime}} \tau(w) \cdot \bmu\left(\bsigma(q_1)=\xi\mid \bsigma(u)=s\right)	
		\cdot \bmu\left(\bsigma(P')=\tau\mid \bsigma(q_1)=\xi,\  \bsigma(u)=s\right) \\ 
	&=
	\sum_{\xi\in \{\pm1\}}\sum_{\tau \in \{\pm1\}^{P^\prime}} \tau(w) \cdot \bmu\left(\bsigma(q_1)=\xi\mid \bsigma(u)=s\right)	
		\cdot \bmu\left(\bsigma(P')=\tau\mid \bsigma(q_1)=\xi\right) \enspace,
\end{align*}
where 
the last equality follows from the Markov property of the model. 
Pushing now forward the sum over the configurations of $P^\prime$ we further get
\begin{align*}
\mExp{\bsigma(w)}_{\bmu^{u,s}}  
	&=
	\sum_{\xi\in \{\pm1\}}
	\bmu\left(\bsigma(q_1)=\xi\mid \bsigma(u)=s\right) \cdot
	\left(
	\sum_{\tau \in \{\pm1\}^{P^\prime}} \tau(w) 	
		\cdot \bmu\left(\bsigma(P')=\tau\mid \bsigma(q_1)=\xi\right)
	\right)\\ 
	&=
	\sum_{\xi\in \{\pm s\}}
	\bmu\left(\bsigma(q_1)=\xi\mid \bsigma(u)=s\right) \cdot
	\mExp{\bsigma(w)}_{\bmu^{q_1,\xi}}  \enspace,
\end{align*}
where $\bmu^{q_1,\xi}$ denotes the measure $\bmu$ conditional on $\bsigma(q_1) = \xi$, and we get the last equality  from \eqref{eq:FinalePathMarg}. Expanding now the sum over $\xi \in \{\pm s\}$, we further get
\begin{align}
\nonumber
\mExp{\bsigma(w)}_{\bmu^{u,s}}  
	&=
	\frac{\exp(\beta J_{\{u,w\}}) }{ 1+ \exp(\beta J_{\{u,w\}})} \cdot
	\mExp{\bsigma(w)}_{\bmu^{q_1,s}}  + 	\frac{1}{ 1+ \exp(\beta J_{\{u,w\}})} \cdot
	\mExp{\bsigma(w)}_{\bmu^{q_1,(-s)}}\\ \label{eq:IndHyp}
	&=
	\left(
	{\textstyle\frac{\exp(\beta J_{\{u,w\}}) }{ 1+ \exp(\beta J_{\{u,w\}})} }\cdot
	s \cdot\prod_{e \in \mathrm{path}(q_1,w)} \widehat{\bInf_e}
	\right)
	+ 	
	\left(
	{\textstyle\frac{1}{ 1+ \exp(\beta J_{\{u,w\}})}} \cdot
	(-s) \cdot\prod_{e \in \mathrm{path}(q_1,w)} \widehat{\bInf_e} 
	\right)\\ \nonumber
	&=
	 \left(s\cdot\prod_{e \in \mathrm{path}(q_1,w)} \widehat{\bInf_e}\right)\cdot
	\left(
	\frac{\exp(\beta J_{\{u,w\}})-1}{ 1+ \exp(\beta J_{\{u,w\}})}
	 \right)\\ \label{eq:DefOfGam}
	 	&=
		s\cdot\prod_{e \in \mathrm{path}(u,w)} \widehat{\bInf_e}\enspace, 
\end{align}
where \eqref{eq:IndHyp} follows from the inductive hypothesis applied on vertices $q_1$ and $w$, while \eqref{eq:DefOfGam} follows from the definition of  $\widehat{\bInf_e}$ in \eqref{eq:XanAdefSignedGamma}.
\end{proof}

Using Lemma~\ref{lem:ExpSigma}, we now prove the following lemma about pairwise spin correlations.

\begin{lemma}\label{lem:ExpSigmaSigma}
Let $T=(V,E)$ be any finite tree, and let $\bmu$ be the Gibbs measure on $T$ defined as in \eqref{def:XanaKaiXanaSpinGlass}, and
$\widehat{\bInf_e}$ be as in \eqref{eq:XanAdefSignedGamma}. Then, for any two vertices $u, v$ of $T$, we have
\begin{equation}\nonumber 
\mExp{\bsigma(u)\cdot\bsigma(v)}_{\bmu} = \prod_{e \in \mathrm{path}(u,v)} \widehat{\bInf_e} \enspace.
\end{equation}
\end{lemma}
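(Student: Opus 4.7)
The plan is to reduce the pairwise correlation $\mExp{\bsigma(u)\cdot\bsigma(v)}_\bmu$ to the one-point conditional expectations already computed in Lemma~\ref{lem:ExpSigma}, by conditioning on the value of $\bsigma(u)$. The only extra ingredient needed is the spin-flip symmetry of the measure: the Hamiltonian in \eqref{def:XanaKaiXanaSpinGlass} is built from the indicators $\mathbf{1}\{\sigma(w)=\sigma(u)\}$, which are invariant under the global involution $\sigma\mapsto -\sigma$. Hence $\bmu(\bsigma(u)=+1)=\bmu(\bsigma(u)=-1)=1/2$ for every vertex $u\in V$.

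First I would write, by the tower property and the definition of $\bmu^{u,s}$,
\begin{equation*}
\mExp{\bsigma(u)\cdot\bsigma(v)}_\bmu
= \sum_{s\in\{\pm 1\}} s\cdot \bmu(\bsigma(u)=s)\cdot \mExp{\bsigma(v)}_{\bmu^{u,s}} \enspace.
\end{equation*}
Applying Lemma~\ref{lem:ExpSigma} to the inner expectation yields $\mExp{\bsigma(v)}_{\bmu^{u,s}}=s\cdot\prod_{e\in\mathrm{path}(u,v)}\widehat{\bInf_e}$. Substituting this in and using $s^2=1$ for $s\in\{\pm 1\}$, together with the symmetry $\bmu(\bsigma(u)=s)=1/2$, the sum collapses to
\begin{equation*}
\mExp{\bsigma(u)\cdot\bsigma(v)}_\bmu
= \left(\prod_{e\in\mathrm{path}(u,v)}\widehat{\bInf_e}\right)\cdot\sum_{s\in\{\pm 1\}}\tfrac{1}{2}
= \prod_{e\in\mathrm{path}(u,v)}\widehat{\bInf_e}\enspace,
\end{equation*}
which is exactly the claim.

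There is really no obstacle here: all the work has been done in Lemma~\ref{lem:ExpSigma}, and the remaining step is just a two-line computation that hinges on the spin-flip symmetry of~$\bmu$. If one wanted to be fully self-contained, the only minor check is to verify that symmetry, which follows immediately from pairing each $\sigma$ with its global flip $-\sigma$ in the definition of $\bmu$ and observing that the unnormalised weights are identical.
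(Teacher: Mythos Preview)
Your proposal is correct and follows essentially the same approach as the paper: condition on the spin at $u$, use the spin-flip symmetry to get that each value occurs with probability $1/2$, apply Lemma~\ref{lem:ExpSigma} to the conditional expectations, and collapse the sum using $s^2=1$. The paper's proof is just a slightly more explicit version of the same computation, writing out the sums over configurations rather than invoking the tower property directly.
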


\begin{proof}
Indeed,
\begin{align}\nonumber
\mExp{\bsigma(u)\cdot\bsigma(v)}_{\bmu} 
	&= 
	\sum_{\tau \in \{\pm\}^{V}} \tau(u) \cdot \tau(v) \cdot\bmu(\tau)\\ \label{eq:ConditionOnU}
	&= 
	\frac{1}{2}\sum_{\tau \in \{\pm\}^{V\setminus\{u\}}} \tau(v) \cdot \bmu^{u,+}(\tau)
	- \frac{1}{2}\sum_{\tau \in \{\pm\}^{V\setminus\{u\}}} \tau(v) \cdot \bmu^{u,-}(\tau)
	\\ \nonumber
	&= 
	\frac{1}{2}\left(\mExp{\bsigma(v)}_{\bmu^{u,+}} \right) 
	- \frac{1}{2}\left(\mExp{\bsigma(v)}_{\bmu^{u,-}}\right) 
	\\ \label{eq:ByTheLemmaAb}
&=	\frac{1}{2}\left(\prod_{e \in \mathrm{path}(u,v)} \widehat{\bInf_e} \right) 
	- \frac{1}{2}\left((-1)\prod_{e \in \mathrm{path}(u,v)} \widehat{\bInf_e}\right)  
= \prod_{e \in \mathrm{path}(u,v)} \widehat{\bInf_e} \enspace, 
\end{align}
where $\bmu^{u,+}$, $\bmu^{u,-}$ denote measure $\bmu$ conditional on $u$ being $+1$ and $-1$, respectively. 
We get \eqref{eq:ConditionOnU}  by the law of total probability, i.e., we condition on the spin of $u$, and use the 
fact that
$
\bmu(\bsigma(r) =-)= \bmu(\bsigma(r) =+)={1}/{2}
$.
Also,  \eqref{eq:ByTheLemmaAb} follows from Lemma~\ref{lem:ExpSigma}.
All the above conclude the proof of Lemma \ref{lem:ExpSigmaSigma}.
\end{proof}

Recall that $\bmu_\Lambda^+$, and $\bmu_\Lambda^-$ denote the marginals of $\bmu$ on $\Lambda$, conditioned on $\bsigma(r) = +1$, and $\bsigma(r) = -1$, respectively. Finally, let $\widehat{\bInf_e}$ be the signed influence of edge $e$, defined as in \eqref{eq:XanAdefSignedGamma}, and $F_h$ be the flipped majority vote introduced in Definition \ref{def:DefOfFlipMaj}.

We start by applying Lemmas~\ref{lem:ExpSigma}, and  \ref{lem:ExpSigmaSigma}, to calculate the first moments of $F_h$ with respect to the measures $\bmu_\Lambda^+$, and $\bmu_\Lambda^-$. We have that 
\begin{align}\label{eq:FirstMomOfFh+}
	\mExp{F_h}_{\bmu^{+}_{\Lambda}} 
		= \sum_{v \in \Lambda}\mExp{\bsigma(v)}_{\bmu^{+}_\Lambda} \;\cdot
		\prod_{e \in \mathrm{path}(r,v)} \widehat{\bInf_e} 
		=\sum_{v \in \Lambda} \;\prod_{e \in \mathrm{path}(r,v)} \left(\widehat{\bInf_e}\right)^2
		=\sum_{v \in \Lambda} \;\prod_{e \in \mathrm{path}(r,v)} \bInf_e^2
		\enspace,
\end{align}
where the first equality follows from linearity of expectation, the second equality by applying Lemma~\ref{lem:ExpSigma}, and the last equality is due to \eqref{eq:SignedVsNormalGamma}. Similarly,
\begin{align}\label{eq:FirstMomOfFh-}
	\mExp{F_h}_{\bmu^{-}_{\Lambda}} 
		= \sum_{v \in \Lambda}\mExp{\bsigma(v)}_{\bmu^{-}_\Lambda} \;\cdot
		\prod_{e \in \mathrm{path}(r,v)} \widehat{\bInf_e} 
		=-\sum_{v \in \Lambda} \;\prod_{e \in \mathrm{path}(r,v)} \left(\widehat{\bInf_e}\right)^2
		=-\sum_{v \in \Lambda} \;\prod_{e \in \mathrm{path}(r,v)} \bInf_e^2
		\enspace.
\end{align}

It is now easy to derive \eqref{eq:EqOfRarioOfFhNum} as
\begin{align*}
{\Exp\left[\mExp{F_h}_{\bmu_\Lambda^+} - \mExp{F_h}_{\bmu_\Lambda^-}\right]}
=
{\Exp\left[2\sum_{v \in \Lambda} \;\prod_{e \in \mathrm{path}(r,v)} \bInf_e^2\right]}
=2 \sum_{v \in \Lambda} \;\prod_{e \in \mathrm{path}(r,v)} \Exp\left[\bInf_e^2\right] \enspace,
\end{align*}
where the first equality follows from \eqref{eq:FirstMomOfFh+} and \eqref{eq:FirstMomOfFh-}. To get the last equality, we use the linearity of expectation, and the fact that the couplings $\{\bJ_e\}_{e\in E}$, (and thus, also $\{\bInf_e\}_{e\in E}$), are independent.

We now use Lemma  \ref{lem:ExpSigmaSigma} to calculate the second moment of $F_h$ with respect to 
$\bmu_\Lambda$. Expanding $F_h^2$, we have that
\begin{align*}
	\mExp{F_h^2}_{\bmu}
	&= \sum_{u \in \Lambda} \mExp{\bsigma^2(u)}_{\bmu} \cdot \prod_{e \in \mathrm{path}(r,u)} 
	\left(\widehat{\bInf_e}\right)^2 +\sum_{\substack{u,v \in \Lambda\\ u \neq v}} \mExp{\bsigma(u)\bsigma(v)}_{\bmu} \cdot 
	\prod_{e \in \mathrm{path}(r,u)} 
	\widehat{\bInf_e}
	\cdot 
	\prod_{{e^\prime} \in \mathrm{path}(r,v)} 
	\widehat{\bInf_{e^\prime}}
	\\ 
	&= \sum_{u \in \Lambda} \;\prod_{e \in \mathrm{path}(r,u)} 
	\bInf^2_e
	+
	\sum_{\substack{u,v \in \Lambda\\ u \neq v}} \;
		\prod_{e \in \mathrm{path}(u,v)} 
	\widehat{\bInf_e}
	 \cdot 
	\prod_{e \in \mathrm{path}(r,u)} 
	\widehat{\bInf_e}
	\cdot 
	\prod_{e^\prime \in \mathrm{path}(r,v)} 
	\widehat{\bInf_{e^\prime}} \enspace,
\end{align*}
where the first equality follows from the linearity of expectation, while the second equality follows from Lemma \ref{lem:ExpSigmaSigma}. Recalling that $u\wedge v$ denotes the common ancestor of $u$ and $v$ farthest from the root $r$, we can rewrite the above as 
\begin{align}\label{eq:FinalFormOfSecondMom}
	\mExp{F_h^2}_{\bmu} 
	=
	\sum_{u,v \in \Lambda} \;
	\prod_{e \in \mathrm{path}(u,v)} 
	\bInf_e^2
	\cdot 
	\prod_{{e^\prime} \in \mathrm{path}(r,u\wedge v)} 
	\bInf_{e^\prime}^2 \enspace.
\end{align}

We are now ready to prove \eqref{eq:EqOfRarioOfFhDen}. Per \eqref{eq:FinalFormOfSecondMom} we have that
\begin{align*}
		{\Exp\left[\mExp{F_h^2}_{\bmu}\right]}
	&=
		{\displaystyle\Exp\left[\sum_{u,v \in \Lambda}\;\left(\prod_{e \in \mathrm{path}(u,v)} \bInf_e^2\right)\left(\prod_{e^{\prime}\in\mathrm{path}(r,u\wedge v) }\bInf_{e^{\prime}}^2\right)\right]}
	 \\ 
	 &
 =
		{\displaystyle\sum_{u,v \in \Lambda}\Exp\left[\;\left(\prod_{e \in \mathrm{path}(u,v)} \bInf_e^2\right)\left(\prod_{e^{\prime}\in\mathrm{path}(r,u\wedge v) }\bInf_{e^{\prime}}^2\right)\right]}
	\\ 
	&=
		{\displaystyle\sum_{u,v \in \Lambda}\;\left(\prod_{e \in \mathrm{path}(u,v)}\Exp\left[ \bInf_e^2\right]\right)\left(\prod_{e^{\prime}\in\mathrm{path}(r,u\wedge v) }\Exp\left[\bInf_{e^{\prime}}^2\right]\right)}
\enspace,
\end{align*}
where the first equality follows from the linearity of expectation, and the second from the fact that the couplings $\{\bJ_e\}_{e\in E}$, (and thus, also $\{\bInf_e\}_{e\in E}$), are independent.  This concludes the proof of Proposition~\ref{prop:RatioOfFh}.

\section{Proof of Lemma \ref{lem:lowerBoundDtree}}\label{sec:ProofLemmaBound}
For integers $\Delta, h >0$, let now $T=(V,E)$ be the $\Delta$-ary tree rooted at $r$, and  $\bmu_{\beta,\phi}(\sigma)$ be the Gibbs measure on $T$ defined as in \eqref{def:XanaKaiXanaSpinGlass}. Let also $\Lambda$ be the set of vertices at distance $h$ from the root $r$. By Proposition~\ref{prop:RatioOfFh} we have that 
\begin{equation*}
	\frac{\left(\Exp\left[\mExp{F_h}_{\bmu_\Lambda^+} - \mExp{F_h}_{\bmu_\Lambda^-}\right]\right)^2}
		{4\Exp\left[\mExp{F_h^2}_{\bmu}\right]}
	= 
		\frac{\displaystyle\left(\sum_{v \in \Lambda} \prod_{e \in \mathrm{path}(r,v)}\Exp\left[ \bInf^2_e\right]\right)^2}
		{\displaystyle\sum_{u,v \in \Lambda}\;\left(\prod_{e \in \mathrm{path}(u,v)}\Exp\left[ \bInf_e^2\right]\right)\left(\prod_{e^{\prime}\in\mathrm{path}(r,u\wedge v) }\Exp\left[\bInf_{e^{\prime}}^2\right]\right)} \enspace.
\end{equation*}
Recalling  that
$
\textstyle
	\Delta_{\rm KS}
		=
		\Delta_{\rm KS}(\beta,\phi) 
		= 
		 \left( \mathbb{E}\left [   \left (\frac{1-\exp(\beta \bJ)}{1+\exp(\beta \bJ)}  \right)^2 \right] \right)^{-1}
		=
		 \left( \mathbb{E}\left [   \bInf_e^2 \right] \right)^{-1}
$,
we have that
\begin{align}
\nonumber
	\frac{\left(\Exp\left[\mExp{F_h}_{\bmu_\Lambda^+} - \mExp{F_h}_{\bmu_\Lambda^-}\right]\right)^2}
		{4\Exp\left[\mExp{F_h^2}_{\bmu}\right]}
		&=
		\frac{\displaystyle\left(\sum_{v \in \Lambda} \prod_{e \in \mathrm{path}(r,v)}		
			\Delta_{\rm KS}^{-1}\right)^2}
		{\displaystyle\sum_{u,v \in \Lambda}\;\left(\prod_{e \in \mathrm{path}(u,v)}
		\Delta_{\rm KS}^{-1}\right)\left(\prod_{e^{\prime}\in\mathrm{path}(r,u\wedge v) }
		\Delta_{\rm KS}^{-1}\right)}\\ \nonumber
		&=
		\frac{\displaystyle\left(\sum_{v \in \Lambda} \Delta_{\rm KS}^{-|\mathrm{path}(r,v)|}\right)^2}
		{\displaystyle\sum_{u,v \in \Lambda}\;\left(\Delta_{\rm KS}^{-|\mathrm{path}(u,v)|}\right)
		\left(\Delta_{\rm KS}^{-|\mathrm{path}(r,u\wedge v)|}\right)}\\ \label{eq:simpleSumObs}
		&=
		\frac{\displaystyle \Delta^{2h} \cdot\Delta_{\rm KS}^{-2h}}
		{\displaystyle\sum_{u,v \in \Lambda}\;
		\Delta_{\rm KS}^{|\mathrm{path}(r,u\wedge v)|-2h}}
		=
		\frac{\displaystyle \Delta^{2h} }
		{\displaystyle\sum_{u,v \in \Lambda}\;
		\Delta_{\rm KS}^{|\mathrm{path}(r,u\wedge v)|}}
\enspace,
\end{align}
where to get the first equality of \eqref{eq:simpleSumObs} we observe that $|\mathrm{path}(u,v)| = 2|\mathrm{path}(r,u\wedge v)|$. Writing now $\Lambda(\ell)$ for the vertices of $T$ at distance $0\le\ell \le h$ from the root $r$, and reorganising the sum in the denominator of \eqref{eq:simpleSumObs} with respect to the common ancestor $z = u\wedge v$, we get that
\begin{align}
\nonumber
	\sum_{u,v \in \Lambda}\;
		\Delta_{\rm KS}^{|\mathrm{path}(r,u\wedge v)|}
	&=
	\sum_{\ell=0}^{h}\;
	\sum_{z \in \Lambda(\ell)}\;
	\sum_{\substack{\;u,v \in \Lambda(h) \\ u\wedge v = z}}\;
		\Delta_{\rm KS}^{\ell}
		\\ \nonumber	&
		= 
	\sum_{\ell=0}^{h}\;
		\Delta_{\rm KS}^{\ell}
	\sum_{z \in \Lambda(\ell)}\;
	\sum_{\substack{\;u,v \in \Lambda(h) \\ u\wedge v = z}}\; 1 \\ \nonumber
	&=
	\sum_{\ell=0}^{h}\;
		\Delta_{\rm KS}^{\ell}
	\left[
	\Delta^{\ell}
	\cdot
	\left(\Delta^{h-\ell} -1\right)
	\cdot
	\Delta^{h-\ell}
	\right]\\ \nonumber
	&\le
	\sum_{\ell=0}^{h}\;
		\Delta_{\rm KS}^{\ell}
	\cdot
	\Delta^{\ell}
	\cdot
	{\Delta^{2(h-\ell)}}
\ 	= \
	\Delta^{2h}
	\sum_{\ell=0}^{h}\;
		\left(\frac{\Delta_{\rm KS}}
		{\Delta}\right)^{\ell}
		  \enspace,
\end{align}
which, due to our assumption that $\Delta=(1+\delta)\Delta_{\rm KS}$, for some $\delta > 0$, further simplifies to

\begin{align*}
	\sum_{u,v \in \Lambda}\;\Delta_{\rm KS}^{|\mathrm{path}(r,u\wedge v)|} 
		\le
	\Delta^{2h}
		\left(1-\frac{\Delta_{\rm KS}}{\Delta}\right)^{-1}
	= \Delta^{2h}\left(\frac{\delta}{1+\delta}\right)^{-1} \enspace.
\end{align*}
Plugging now the above into \eqref{eq:simpleSumObs} we finally get
\begin{align*}
	\frac{\left(\Exp\left[\mExp{F_h}_{\bmu_\Lambda^+} - \mExp{F_h}_{\bmu_\Lambda^-}\right]\right)^2}
		{4\Exp\left[\mExp{F_h^2}_{\bmu}\right]}
	\ge 
	\frac{\Delta^{2h}}{\Delta^{2h}\left(\frac{\delta}
		{1+\delta}\right)^{-1}}  
	= 
	\frac{\delta}
		{1+\delta}
\enspace.
\end{align*}
All the above complete the proof of Lemma~\ref{lem:lowerBoundDtree}.

\section{Proof of Theorem~\ref{thrm:GeneralGWtree} - reconstruction for the Galton-Watson Tree}

Let us briefly recall our setup. For  any real number $d>0$,  for $\beta>0$, for any distribution $\phi$ on $\mathbb{R}$, and any offspring distribution 
 $\zeta:\mathbb{Z}_{\geq 0}\to [0,1]$ with expectation $d$ and bounded second moment,   
 let $\Delta_{\rm KS}=\Delta_{\rm KS}(\beta,\phi)$ be defined as in \eqref{eq:DeltaKSBPhi}.
Let $\bT$ be the Galton-Watson tree with offspring distribution $\zeta$,  while let the Gibbs distribution 
$\bmu =\bmu_{\beta,\phi}$, defined as in \eqref{def:SpinGlass} ,  on the tree $\bT$. For an integer $h>0$, write $\pmb{\Lambda}$ for the set of vertices at distance $h$ from the root of $\bT$ (notice that $\pmb{\Lambda}$ is a random variable here). We also write $\bmu_{\pmb{\Lambda}}^{+}$ and $\bmu_{\pmb{\Lambda}}^{-}$ for the marginal of measure $\bmu$ on the set $\pmb{\Lambda}$, conditioned on the root of $\bT$ being $+$ and $-$, respectively.

We want to show that if $d>\Delta_{\rm KS}$, the distribution $\bmu_{\beta,\phi}$  exhibits reconstruction, i.e., 
\begin{align}\nonumber
\limsup_{h\to\infty}  \mathbb{E}_{\bT } \left[  \ 
\mathbb{E}_{\bmu } \left [  \lnorm \bmu_{\pmb{\Lambda}}^{+}(\cdot) -\bmu_{\pmb{\Lambda}}^{-}(\cdot)  \rnorm_{\rm TV} \ | \ \bT \right] 
\ \right]>0 \enspace.
\end{align}

We start by noticing that Theorem \ref{thm:TVtoSecond} can be extended to Galton-Watson trees. That is, we have that for any real function $G: \{\pm 1\}^{\pmb{\Lambda}}\mapsto \mathbb{R}$ defined on spin configurations of $\pmb{\Lambda}$, we have that
\begin{equation}\label{eq:GWTVtoSecond}
	\Exp_{\bT,\bmu}\left[\lnorm \bmu_{\pmb{\Lambda}}^{+}(\cdot) -\bmu_{\pmb{\Lambda}}^{-}(\cdot)  \rnorm_{\rm TV} \right]
	\ge
	\frac{\left(\Exp_{\bT,\bmu}\left[\mExp{G}_{\bmu_{\pmb{\Lambda}}^+} - \mExp{G}_{\bmu_{\pmb{\Lambda}}^-}\right]\right)^2}
		{4\Exp_{\bT,\bmu}\left[\mExp{G^2}_{\bmu}\right]} \enspace,
\end{equation}
In fact, the proof \eqref{eq:GWTVtoSecond} is almost identical to that of Theorem \ref{thm:TVtoSecond}, the only difference being that at the very last step of the proof, we apply Cauchy-Schwartz to an expression with an additional sum (due to $\Exp_{\bT}$). Hence, all it remains to do is to lower bound the rhs of \eqref{eq:GWTVtoSecond} away from zero.

From Proposition \ref{prop:RatioOfFh} and conditioning over the random tree $\bT$, we get that
\begin{align*}
	\frac{\left(\Exp_{\bT,\bmu}\left[\mExp{F_h}_{\bmu_{\pmb{\Lambda}}^+} - \mExp{F_h}_{\bmu_{\pmb{\Lambda}}^-}\right]\right)^2}
		{4\Exp_{\bT,\bmu}\left[\mExp{F_h^2}_{\bmu}\right]}
	&=
	\frac{\displaystyle\left(\Exp_{\bT}\left[\Exp_{\bmu}\left[\sum_{v \in {\pmb{\Lambda}}} \; \prod_{e \in \mathrm{path}(r,v)}\bInf^2_e \;\middle|\ \bT\right]\right]\right)^2}
		{\displaystyle\Exp_{\bT}\left[\Exp_{\bmu}\left[\sum_{u,v \in {\pmb{\Lambda}}}\left(\prod_{e \in \mathrm{path}(u,v)} \bInf_e^2\right)\left(\prod_{e^{\prime}\in\mathrm{path}(r,u\wedge v) }\bInf_{e^{\prime}}^2\right)\middle|\bT\right]\right]}  \enspace.
\end{align*}
Given a random tree $\bT$, the influences, $\bInf_e$, are independent. Moreover, recalling that   	
$\Delta_{\rm KS} = \left( \mathbb{E}_\mu[\bInf_e^2 ] \right)^{-1}$, (notice that $\Delta_{\rm KS}$ does not depend on $\bT$), we can further simplify the above as follows:
\begin{align*}
	\frac{\left(\Exp_{\bT,\bmu}\left[\mExp{F_h}_{\bmu_{\pmb{\Lambda}}^+} - \mExp{F_h}_{\bmu_{\pmb{\Lambda}}^-}\right]\right)^2}
		{4\Exp_{\bT,\bmu}\left[\mExp{F_h^2}_{\bmu}\right]}
	&= 
	\frac{\displaystyle\left(\Exp_{\bT}\left[\sum_{v \in {\pmb{\Lambda}}(h)} \; \Delta_{\rm KS}^{-h}\right]\right)^2}
		{\displaystyle\Exp_{\bT}\left[\sum_{u,v \in {\pmb{\Lambda}}(h)}\;\Delta_{\rm KS}^{|\mathrm{path}(r,u\wedge v)|-2h}\right]} 
	=
	\frac{\displaystyle\left(\Exp_{\bT}\left[\;|{\pmb{\Lambda}}(h)|\; \right]\right)^2}
		{\displaystyle\Exp_{\bT}\left[\sum_{u,v \in {\pmb{\Lambda}}(h)}\;\Delta_{\rm KS}^{|\mathrm{path}(r,u\wedge v)|}\right]} \enspace.
\end{align*}

Since from this point on we are left only with expectations with respect to $\bT$, we drop the subscript in~$\Exp$.
Reorganising the sum of in the denominator in the last equation above, similarly to 
the proof of Lemma~\ref{lem:lowerBoundDtree}, and writing ${\pmb{\Lambda}}_z(\ell)$ for the descendants of $z$ at distance $\ell$ (${\pmb{\Lambda}}$ without subscript refers to descendants of the root $r$), we get that
\begin{align}\label{eq:SimilarReorg}
\Exp\left[\sum_{u,v \in {\pmb{\Lambda}}(h)}\Delta_{\rm KS}^{|\mathrm{path}(r,u\wedge v)|}\right]
	&=
	\sum_{\ell=0}^{h}\Delta_{\rm KS}^{\ell} \cdot
	{\Exp\left[
	\sum_{z \in {\pmb{\Lambda}}(\ell)}
	\sum_{w,q \in {\pmb{\Lambda}}_z(1) } 
	|{\pmb{\Lambda}}_{w}(h-\ell-1) |\cdot |{\pmb{\Lambda}}_{q}(h-\ell-1) |
	\right]} \enspace.
\end{align} 
Let us now focus on the expectation in the r.h.s. of the above equation. In particular, we invoke the law of total expectation conditioning on the set ${\pmb{\Lambda}}(\ell)$, comprised of the vertices at distance $\ell$ from the root
\begin{align}
\nonumber
	&{\Exp\left[
		\sum_{z \in {\pmb{\Lambda}}(\ell)}
		\sum_{w,q \in {\pmb{\Lambda}}_z(1) } 
		|{\pmb{\Lambda}}_{w}(h-\ell-1) |\cdot |{\pmb{\Lambda}}_{q}(h-\ell-1) |\right]} \\ \nonumber
		&=\Exp \left[{\Exp\left[
		\sum_{z \in {\pmb{\Lambda}}(\ell)} \sum_{w,q \in {\pmb{\Lambda}}_z(1) } 
		|{\pmb{\Lambda}}_{w}(h-\ell-1) |\cdot |{\pmb{\Lambda}}_{q}(h-\ell-1) |\middle|{\pmb{\Lambda}}(\ell)\right]}\right]\\ \label{eq:FirstConditioning}
	&=\Exp \left[|{\pmb{\Lambda}}(\ell)|\cdot{\Exp\left[
		\sum_{w,q \in {\pmb{\Lambda}}(1) } 
		|{\pmb{\Lambda}}_{w}(h-\ell-1) |\cdot |{\pmb{\Lambda}}_{q}(h-\ell-1) |\right]}\right]\enspace,
\end{align}
where \eqref{eq:FirstConditioning} follows from the linearity of expectation, and the fact that the offsprings of each vertex are identically distributed (and hence, the random variable ${\pmb{\Lambda}}_z(1)$ coincides with ${\pmb{\Lambda}}(1)$, for all vertices $z$). We now estimate the inner expectation of \eqref{eq:FirstConditioning} conditioning on ${\pmb{\Lambda}}(1)$.
\begin{align}
\nonumber
  &\Exp\left[
		\sum_{w,q \in {\pmb{\Lambda}}(1) } 
		|{\pmb{\Lambda}}_{w}(h-\ell-1) |\cdot |{\pmb{\Lambda}}_{q}(h-\ell-1) |\right]\\ \nonumber
  &=\Exp\left[\Exp\left[
		\sum_{w,q \in {\pmb{\Lambda}}(1) } 
		|{\pmb{\Lambda}}_{w}(h-\ell-1) |\cdot |{\pmb{\Lambda}}_{q}(h-\ell-1) |\middle|{\pmb{\Lambda}}(1)\right]\right]\\
\label{eq:SecondConditioning}
  &=\Exp\left[|{\pmb{\Lambda}}(1)|\cdot(|{\pmb{\Lambda}}(1)|-1) \cdot\left(\Exp\Big[
		|{\pmb{\Lambda}}(h-\ell-1) |\Big]\right)^2\right] \\ \label{eq:TeleiomenoExpDenominator}
  &\le\Exp\left[|{\pmb{\Lambda}}(1)|^2\right] \cdot\left(\Exp\Big[
		|{\pmb{\Lambda}}(h-\ell-1) |\Big]\right)^2		
		\enspace,
\end{align}
where \eqref{eq:SecondConditioning} follows from the linearity of expectation, and the fact that the offsprings of each vertex are independent (and thus, the inner expectation of products becomes the product of the corresponding expectations), and identically distributed (and hence, ${\pmb{\Lambda}}_w(h-\ell-1) = {\pmb{\Lambda}}_q(h-\ell-1)={\pmb{\Lambda}}(h-\ell-1)$). 

Putting them all together, we have that \eqref{eq:SimilarReorg}, \eqref{eq:FirstConditioning},\eqref{eq:SecondConditioning}, and \eqref{eq:TeleiomenoExpDenominator} yield

\begin{align*}
	&\frac{\left(\Exp_{\bT,\bmu}\left[\mExp{F_h}_{\bmu_{\pmb{\Lambda}}^+} 
			- \mExp{F_h}_{\bmu_{\pmb{\Lambda}}^-}\right]\right)^2}
		  {4\Exp_{\bT,\bmu}\left[\mExp{F_h^2}_{\bmu}\right]}
				\ge
	\frac{\displaystyle\left(\Exp\left[\;|{\pmb{\Lambda}}(h)|\; \right]\right)^2}
		{\displaystyle\sum_{\ell=0}^{h}\Delta_{\rm KS}^{\ell}
		\cdot\Exp\Big[|{\pmb{\Lambda}}(\ell)|\Big]
		\cdot
		\Exp
		\Big[
		|{\pmb{\Lambda}}(1)|^2 
		\Big]\cdot
		\left(\Exp
		\Big[|{\pmb{\Lambda}}(h-\ell-1) |\Big]\right)^2} \enspace.
\end{align*}

Due to the fact that the offsprings of vertices in $\bT$ are i.i.d., we observe that for any $\ell\ge0$, we have that $\Exp[|{\pmb{\Lambda}}(\ell) | = (\Exp[\zeta])^\ell = d^\ell$, and thus, we can further simplify the above as follows
		
\begin{align*}
	\frac{\left(\Exp_{\bT,\bmu}\left[\mExp{F_h}_{\bmu_{\pmb{\Lambda}}^+} 
			- \mExp{F_h}_{\bmu_{\pmb{\Lambda}}^-}\right]\right)^2}
		  {4\Exp_{\bT,\bmu}\left[\mExp{F_h^2}_{\bmu}\right]}
				&\ge
	\frac{\displaystyle\left(\Exp\left[\;|{\pmb{\Lambda}}(h)|\; \right]\right)^2}
		{\displaystyle\sum_{\ell=0}^{h}\Delta_{\rm KS}^{\ell}
		\cdot\Exp\Big[|{\pmb{\Lambda}}(\ell)|\Big]
		\cdot
		\Exp
		\Big[
		|{\pmb{\Lambda}}(1)|^2 
		\Big]\cdot
		\left(\Exp
		\Big[|{\pmb{\Lambda}}(h-\ell-1) |\Big]\right)^2}\\
			&=
	\frac{\left(\Exp\left[\zeta\right]\right)^{2h}}
	{\displaystyle\sum_{\ell=0}^{h}
	\Delta_{\rm KS}^{\ell}\cdot
	\left(\Exp
	\left[\zeta\right]\right)^\ell\cdot
	\Exp
	\left[\zeta^2\right]\cdot
		\left(\Exp
	\left[\zeta\right]\right)^{2h-2\ell-2}}\\
	&=\frac{1}
	{\displaystyle 
	\frac{\Exp\left[\zeta^2\right]}
	{d^{2}}\cdot
	\sum_{\ell=0}^{h}\left(
	\frac{\Delta_{\rm KS}}
	{d}\right)^\ell
	}\enspace.
\end{align*}

Per our hypothesis, $\Exp \left[\zeta^2\right]<\infty$, and thus, $\Exp \left[\zeta^2\right] \le Md^{2}$, for some bounded number  $M>0$. Moreover, we have that $\Delta_{\rm KS} < d$, and thus, there exist a $\delta>0$, such that $\Delta_{\rm KS} (1+\delta)=d$. With that in mind, we further bound the above as

\begin{align*}
	\frac{\left(\Exp_{\bT,\bmu}\left[\mExp{F_h}_{\bmu_{\pmb{\Lambda}}^+} 
			- \mExp{F_h}_{\bmu_{\pmb{\Lambda}}^-}\right]\right)^2}
		  {4\Exp_{\bT,\bmu}\left[\mExp{F_h^2}_{\bmu}\right]}
	&\ge\frac{1}
	{\displaystyle 
	\frac{\Exp\left[\zeta^2\right]}
	{d^{2}}\cdot
	\sum_{\ell=0}^{h}\left(
	\frac{\Delta_{\rm KS}}
	{d}\right)^\ell}
	\ge
	\frac{\delta}
		{M(1+\delta)} >0
	\enspace.
\end{align*}
This concludes the proof of the reconstruction claim of Theorem~\ref{thrm:GeneralGWtree}.

\bibliographystyle{plainurl}
\bibliography{PapersReconstruction}

\appendix

\section{Equivalence of Indicator and Product Gibbs distribution}\label{sec:IndProdEquiv}

Let $G=(V,E)$ be a graph, and let $\{J_e : e \in E\}$ be arbitrary couplings over the edges of $G$. For $\beta > 0$, let us write $\mu_I$, and $\mu_P$ for the Gibbs distributions over $\{\pm 1\}^V$, defined by the indicator, and product formulation, respectively. That is, for every $\sigma \in \{\pm 1\}^V$, we have 
\begin{align*}
\mu_I(\beta;\sigma) &\propto 
\exp\left( \beta \cdot\sum_{\{u,w\}\in E} {\bf 1}\{\sigma(u)=\sigma(w)\} \cdot \bJ_{\{u,w\}}   \right)\enspace, \\
\mu_P(\beta;\sigma) &\propto 
\exp\left( \beta \cdot\sum_{\{u,w\}\in E} \sigma(u)\sigma(w) \cdot \bJ_{\{u,w\}}   \right)\enspace.
\end{align*}
We will prove that $\mu_P(\beta;\sigma) = \mu_I(2\beta;\sigma)$, for every $\sigma \in \{\pm 1\}^V$. Indeed, let $\sigma \in \{\pm 1\}^V$ be arbitrary, then 
\begin{align*}
\mu_P(\beta;\sigma) &\propto 
\exp\left( \beta \cdot\sum_{\{u,w\}\in E} \sigma(u)\sigma(w) \cdot \bJ_{\{u,w\}}   \right)\\
&=
\exp\left( \beta \cdot\sum_{\{u,w\}\in E} (2\cdot {\bf 1}\{\sigma(u)=\sigma(w)\} -1)\cdot \bJ_{\{u,w\}}   \right)\\
&=
\exp\left( 2\beta \cdot\sum_{\{u,w\}\in E}  {\bf 1}\{\sigma(u)=\sigma(w)\} \cdot \bJ_{\{u,w\}}   \right)\cdot \exp\left(- \beta \cdot\sum_{\{u,w\}\in E}\bJ_{\{u,w\}}   \right)\\
&\propto
\exp\left( 2\beta \cdot\sum_{\{u,w\}\in E}  {\bf 1}\{\sigma(u)=\sigma(w)\} \cdot \bJ_{\{u,w\}}   \right) 
=
\mu_I(2\beta;\sigma)
\enspace.
\end{align*}
Since $\mu_P$, $\mu_I$, are probability measures, we conclude that $\mu_P(\beta;\sigma) = \mu_I(2\beta;\sigma)$, as desired.

\section{KS-Bound Derivation}\label{sec:KSMatrix}

First, note that since $\bM$ is symmetric, $\bM \otimes \bM $ must be symmetric as well. In particular, we have that 

\begin{equation}\label{eq:TensorBMat}
\bM\otimes\bM = 
\frac{1}{\left(1+e^{\beta \bJ}\right)^2}
		\begin{pmatrix} 
			e^{2\beta \bJ}	& 	e^{\beta\bJ}	& 	e^{\beta \bJ}	&	1		      \\[2pt]
			e^{\beta \bJ}	& 	e^{2\beta\bJ}	& 	1	    		&	e^{\beta\bJ}  \\[2pt]
			e^{\beta \bJ}	& 	1			& 	e^{2\beta \bJ}	&	e^{\beta\bJ}  \\[2pt]
			1	 		& 	e^{\beta\bJ}	& 	e^{\beta \bJ}	&	e^{2\beta \bJ}
			\end{pmatrix}
				\enspace,
\end{equation}
It is also easy to check that the for any matrix with the same pattern on its entries we have the following

\begin{observation}\label{ob:LinearSpec}
The spectrum of every $4\times 4$ matrix, $B$, of the following form
\begin{equation}\label{eq:easyForm}
B = 
				\begin{pmatrix} 
					a	& 	b	& 	b	&	c\\
					b	& 	a	& 	c	&	b\\
					b	& 	c	& 	a	&	b\\
					c	& 	b	& 	b	&	a\\
				\end{pmatrix}
				\enspace, \text{ with } a, b, c, d \in \mathbb{R}\enspace, 
\end{equation}
is precisely $\{\{\lambda_1 := (a+2b+c),\; \lambda_2 :=(a-c),\; \lambda_3 :=(a-c)\; \lambda_4 :=(a-2b +c),\}\}$. In particular, every eigenvalue of $B$ is a linear combination of its elements.
\end{observation}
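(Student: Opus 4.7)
The plan is to diagonalise $B$ explicitly by producing an orthogonal basis of eigenvectors, then reading off the eigenvalues. The key structural remark is that \eqref{eq:easyForm} exhibits $B$ as a circulant over the group $\mathbb{Z}_2\times\mathbb{Z}_2$: identifying the row/column indices $1,2,3,4$ with the elements $(0,0),(0,1),(1,0),(1,1)$ of $\mathbb{Z}_2^2$, the entry $B_{ij}$ depends only on the group difference $i\oplus j$, taking values $a,b,b,c$ on $(0,0),(0,1),(1,0),(1,1)$ respectively. Consequently, the four characters of $\mathbb{Z}_2^2$ should simultaneously diagonalise $B$, independently of the specific real numbers $a,b,c$.

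Concretely, I would consider the four Walsh--Hadamard vectors
\[
v_1=(1,1,1,1)^T,\quad v_2=(1,-1,1,-1)^T,\quad v_3=(1,1,-1,-1)^T,\quad v_4=(1,-1,-1,1)^T,
\]
and verify by direct multiplication that
\[
Bv_1=(a+2b+c)\, v_1,\quad Bv_2=(a-c)\, v_2,\quad Bv_3=(a-c)\, v_3,\quad Bv_4=(a-2b+c)\, v_4.
\]
Since $\{v_1,v_2,v_3,v_4\}$ is an orthogonal basis of $\mathbb{R}^4$ (the rows of a $4\times 4$ Hadamard matrix), these four eigenvalues exhaust the spectrum of $B$ and give the claimed multiset $\{a+2b+c,\, a-c,\, a-c,\, a-2b+c\}$.

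The final assertion---that every eigenvalue of $B$ is a linear combination of the entries of $B$---is then immediate from the closed forms above, since each is an integer-coefficient combination of $a,b,c$. No real obstacle is anticipated: the argument reduces to the four elementary matrix-vector products listed above, which are routine given the explicit block pattern of $B$. The only point requiring a little care is to keep the sign patterns of the $v_k$'s consistent with the chosen indexing of $\mathbb{Z}_2^2$, so that the eigenvalues pair with the correct characters.
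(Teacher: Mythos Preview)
Your proposal is correct. The paper itself does not prove this observation at all---it is stated as something ``easy to check'' and left to the reader. Your argument via the $\mathbb{Z}_2\times\mathbb{Z}_2$-circulant structure and the Walsh--Hadamard eigenbasis is a clean and complete justification, and the four matrix-vector products you list do verify the claimed eigenpairs directly.
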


Note that both $\bM\otimes\bM$, and $\Exp[\bM\otimes\bM]$, are of the form \eqref{eq:easyForm}. In the following lemma we show that
Observation \ref{ob:LinearSpec} allows us to change the order of averaging and taking eigenvalues of $\bM\otimes\bM$.

\begin{lemma}\label{lem:CommuteEL}
Let $\lambda_1,\lambda_2, \lambda_3, \lambda_4$, be as in Observation \ref{ob:LinearSpec}. Then, for every $1 \le k\le 4$ we have that
\begin{equation}
\lambda_k\left(\Exp\left[\bM\otimes\bM\right]\right) = \Exp\left[\lambda_k\left(\bM\otimes\bM\right)\right] \enspace.
\end{equation}
\end{lemma}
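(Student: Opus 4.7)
The plan is to combine the explicit formulas for the eigenvalues supplied by Observation \ref{ob:LinearSpec} with the linearity of expectation. Both the random matrix $\bM\otimes\bM$, displayed in \eqref{eq:TensorBMat}, and its expectation $\Exp[\bM\otimes\bM]$ fall in the class of $4\times 4$ matrices of the form \eqref{eq:easyForm}. Indeed, reading off \eqref{eq:TensorBMat}, the matrix $\bM\otimes\bM$ has the pattern \eqref{eq:easyForm} with the random entries
\begin{align*}
\bA = \frac{e^{2\beta \bJ}}{(1+e^{\beta\bJ})^2}, \qquad \bB=\frac{e^{\beta\bJ}}{(1+e^{\beta\bJ})^2}, \qquad \bC=\frac{1}{(1+e^{\beta\bJ})^2}.
\end{align*}
Taking expectations entry-wise (which is how $\Exp[\bM\otimes\bM]$ is defined in \eqref{def:MatrixXi}) preserves the pattern, so $\Exp[\bM\otimes\bM]$ again has the form \eqref{eq:easyForm}, now with entries $\Exp[\bA]$, $\Exp[\bB]$, $\Exp[\bC]$.

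Next I would invoke Observation \ref{ob:LinearSpec} twice. Applied to $\bM\otimes\bM$, it yields the (random) eigenvalues
\begin{align*}
\lambda_1(\bM\otimes\bM)=\bA+2\bB+\bC,\quad \lambda_2(\bM\otimes\bM)=\lambda_3(\bM\otimes\bM)=\bA-\bC,\quad \lambda_4(\bM\otimes\bM)=\bA-2\bB+\bC.
\end{align*}
Applied to $\Exp[\bM\otimes\bM]$, it yields the same linear expressions but evaluated at $\Exp[\bA],\Exp[\bB],\Exp[\bC]$.

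The final step is simply linearity of expectation: for each $k$ the eigenvalue $\lambda_k$ is the \emph{same} $\mathbb{R}$-linear functional of the three entries in both cases, so
\begin{align*}
\Exp\bigl[\lambda_k(\bM\otimes\bM)\bigr] = \lambda_k\bigl(\Exp[\bA],\Exp[\bB],\Exp[\bC]\bigr) = \lambda_k\bigl(\Exp[\bM\otimes\bM]\bigr).
\end{align*}
There is no real obstacle here; the only thing to be careful about is the labelling of eigenvalues, but since Observation \ref{ob:LinearSpec} gives explicit formulas (rather than, say, a sorted order), the index $k$ refers to the same linear functional on both sides, and matching is automatic. The proof amounts to pointing out that the map from entries $(a,b,c)$ to the $k$-th eigenvalue is linear on the affine subspace of matrices of the form \eqref{eq:easyForm}, and that this subspace is closed under taking expectations.
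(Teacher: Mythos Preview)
Your proof is correct and follows exactly the same approach as the paper: both arguments observe that $\bM\otimes\bM$ and $\Exp[\bM\otimes\bM]$ have the form \eqref{eq:easyForm}, so each $\lambda_k$ is a fixed linear functional of the entries, and the result follows from linearity of expectation. Your version simply spells out the entries $\bA,\bB,\bC$ and the eigenvalue formulas explicitly, whereas the paper states the same reasoning in two lines.
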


\begin{proof}
Since both $\bM\otimes\bM$, and $\Exp[\bM\otimes\bM]$, are of the form \eqref{eq:easyForm}, each $\lambda_k$ is a linear combination of their entries, and thus, the result follows by the linearity of expectation.
\end{proof}

Let us now recall that equation \eqref{eq:DeltaKSBPhi} defins $\Delta_{\rm KS}$ as follows
\begin{equation*}
\Delta_{\rm KS} =
\left( \max_{x\in \cE: \lnorm x \rnorm=1}\langle \Xi x, x\rangle\right)^{-1} \enspace,
\end{equation*}
where 
$
\cE = \left\{ z\in \mathbb{R}^{\cA}\otimes \mathbb{R}^{\cA}: \forall y\in \mathbb{R}^{\cA} 
 \langle z, {\bf 1}\otimes y\rangle=\langle z, y\otimes{\bf 1}\rangle=0\right\}
$, and $\Xi = \Exp[\bM\otimes\bM]$. Since $\Xi$ is of the form \eqref{eq:easyForm}, and in particular symmetric, it is easy to argue, e.g. using Courant-Fisher theorem, that
the solution to the maximisation in \eqref{eq:DeltaKSBPhi} must be 
\begin{equation}
\max_{x\in \cE: \lnorm x \rnorm=1}\langle \Xi x, x\rangle = \lambda_4(\Xi) \enspace.
\end{equation}
Using now Lemma~\ref{lem:CommuteEL} we get that $\Delta_{\rm KS}= (\lambda_4(\Xi))^{-1} = (\Exp\left[\lambda_4\left(\bM\otimes\bM\right)\right])^{-1}$. Substituting the entries of $\bM\otimes\bM$ from \eqref{eq:TensorBMat}, yields 
$
\Delta_{\rm KS}= 
\textstyle \left( \mathbb{E}\left [   \left (\frac{1-\exp(\beta \bJ)}{1+\exp(\beta \bJ)}  \right)^2 \right] \right)^{-1}
$, as desired.

\section{Proof of Lemma \ref{lem:UpDwnUpperBound}}\label{sec:lem:UpDwnUpperBound}

\begin{proof}
First, let us recall that we denote with $\Lambda$ the set of vertices at distance $h$ from the root $r$. Also, for $s\in \{\pm1\}$ and $\tau \in \{\pm 1\}^\Lambda$, we write $\mu_r^{\Lambda,\tau}$, and $\mu_\Lambda^s$, for the marginal of $\mu$ on the root, conditioned on $\bsigma(\Lambda) = \tau$, and the marginal of $\mu$ on the the set $\Lambda$, conditioned on $\bsigma(r) = s$, respectively.  We now have that 
\begin{align} \nonumber
	\left\|\mu_\Lambda^+(\cdot) - \mu_\Lambda^-(\cdot)\right\|_{\rm TV} 
		&= 
	\frac{1}{2} \sum_{\tau \in \{\pm1\}^{\Lambda}} 
	\left|\mu_\Lambda^+(\tau) - \mu_\Lambda^-(\tau) \right|\\ \label{eq:doingBayes}
	&= \frac{1}{2} \sum_{\tau\in \{\pm1\}^{\Lambda}} 
	\left|\mu_r^{\Lambda,\tau}\left(+1\right) 
		\frac{\mu(\bsigma(\Lambda)=\tau)}{\mu(\bsigma(r) = +1)}
		- \mu_r^{\Lambda,\tau}\left(- 1\right)
		\frac{\mu(\bsigma(\Lambda)=\tau)}{\mu(\bsigma(r) = -1)}\right|\\ \label{eq:doing+-half}
	&= \sum_{\tau \in \{\pm1\}^{\Lambda}} 
	\left|\mu_r^{\Lambda,\tau}\left( +1\right)
		-\mu^{\Lambda,\tau }_r\left(-1\right)\right| \cdot 
		{\mu(\bsigma(\Lambda)=\tau)}\\ \nonumber
	&= \Exp_{\tau\sim \mu_\Lambda}\left[
		\left|\mu_r^{\Lambda,\tau}\left( +1\right)
		-\mu^{\Lambda,\tau }_r\left(-1\right)\right| 
		\right]\enspace,
\end{align}
where \eqref{eq:doingBayes} follows from Bayes' rule, and \eqref{eq:doing+-half} is due to the 
fact that
$
\mu(\bsigma(r) =-1)= \mu(\bsigma(r) =+1)={1}/{2}
$.
Next, we observe that 
\begin{align}
\Exp_{\tau  \sim \mu_\Lambda}\left[\left(\mu_r^{\Lambda,\tau}(+1)\right)^2\right]  \nonumber
&=
\Exp_{\tau  \sim \mu_\Lambda}\left[\left(1-\mu_r^{\Lambda,\tau}(-1)\right)^2\right]\\ \nonumber
&= 1 -2\Exp_{\tau  \sim \mu_\Lambda}\left[\mu_r^{\Lambda,\tau}(-1)\right]
+\Exp_{\tau \sim \mu_\Lambda}\left[\left(\mu_r^{\Lambda,\tau}(-1)\right)^2\right]\\\nonumber
&= 1-2\cdot\mu(\bsigma(r)=+1)
+\Exp_{\tau \sim \mu_\Lambda}\left[\left(\mu_r^{\Lambda,\tau}(-1)\right)^2\right]\\ \label{eq:apodw}
&=\Exp_{\tau \sim \mu_\Lambda}\left[\left(\mu_r^{\Lambda,\tau}(-1)\right)^2\right] \enspace.
\end{align}
Using the above, we also get that
\begin{align}
\nonumber
\left\|\mdownup^{+}(\cdot)-\mdownup^{-}(\cdot)\right\|_{\rm TV}  &=
	\left|\mdownup^{+}(+1) 
		- \mdownup^{-}(+1)\right|\\ \label{eq:defofupdaownc}
	&= 	
	\left|\sum_{\tau \in \{\pm 1\}^\Lambda}
	\mu^+_{\Lambda}(\tau) \mu_r^{\Lambda,\tau}(+1)
		- \mu^-_{\Lambda}(\tau) \mu_r^{\Lambda,\tau}(+1)\right|\\ \label{eq:justBayes}
	&= 	
	\left|\sum_{\tau\in \{\pm1\}^\Lambda}
	\left(
	\mu_r^{\tau,\Lambda}(+1)
		-\mu_r^{\tau,\Lambda}(-1)
	\right)\cdot
		2\cdot\mu\left(\bsigma(\Lambda)=\tau\right)\cdot \mu_r^{\Lambda,\tau}(+1)\right|\\ \nonumber
	&=
		\left|2\cdot\Exp_{\tau \sim \mu_\Lambda}\left[\left(\mu_r^{\Lambda,\tau}(+1)\right)^2\right]
		-2\cdot\Exp_{\tau \sim \mu_\Lambda}\left[\mu_r^{\Lambda,\tau}(+1)\mu_r^{\Lambda,\tau}(-1)\right]
	\right|\\ \label{eq:zori}
	&=
	\left|\Exp_{\tau \sim \mu_\Lambda}\left[\left(\mu_r^{\Lambda,\tau}(+1) - \mu_r^{\Lambda,\tau}(+1)\right)^2\right]
	\right|\\ \nonumber
	&= \Exp_{\tau \sim \mu_\Lambda}\left[	\left|
	\mu_r^{\Lambda,\tau}(+1)
		-\mu_r^{\Lambda,\tau}(-1)\right|^2\right] \enspace,
\end{align}
where \eqref{eq:defofupdaownc} follows from \eqref{eq:downupis},  while \eqref{eq:justBayes} is due to the Bayes' rule, and the fact that
$
\mu(\bsigma(r) =-1)= \mu(\bsigma(r) =+1)={1}/{2}
$. Finally,
we get \eqref{eq:zori} from the observation \eqref{eq:apodw}.
The result now follows from the Cauchy-Schwartz inequality.
\end{proof}

\end{document}